\DeclareMathSymbol{\lsb@l}{\mathalpha}{letters}{`l}
\newcommand{\h}[1]{\end{document}}
\newtheorem{define}{Definition}
\newcommand{\poly}{{\rm{poly}}}
\newcommand{\cS}{{\mathcal{S}}}
\newcommand{\cO}{{\mathcal{O}}}
\newcommand{\ch}{\mathrm{char}}
\newcommand{\I}{\cal{I}}
\newcommand{\mat}{$M=(E,{\cal I})$}
\newcommand{\matl}[1]{$M_{#1}=(E_{#1},{\cal I}_{#1})$}
\newcommand{\whnd}[1]{\widehat{#1}}
\newcommand{\rep}[2] {$\widehat{{\cal #1}} \subseteq_{rep}^{#2} {\cal #1}$}
\newcommand{\bnoml}[2]{  $\binom{{#1}}{{#2}}$}
\newcommand{\rank}[1]{$\mbox{\sf rank}(#1)$}
\newcommand{\tgem}{$\cO\left({p+q \choose p} t p^\omega + t {p+q \choose q} ^{\omega-1} \right)$}
\newcommand{\repmat}[1]{$A_{#1}$}
\newcommand{\F}{{\mathbb F}}
\newcommand{\K}{{\mathbb K}}
\newtheorem{theorem}{Theorem}[section]
\newtheorem{lemma}[theorem]{Lemma}
\newtheorem{claim}[theorem]{Claim}
\newtheorem{corollary}[theorem]{Corollary}
\newtheorem{proposition}[theorem]{Proposition}
\theoremstyle{definition}
\newtheorem{definition}[theorem]{Definition}
\def\cqedsymbol{\ifmmode$\lrcorner$\else{\unskip\nobreak\hfil
\penalty50\hskip1em\null\nobreak\hfil$\lrcorner$
\parfillskip=0pt\finalhyphendemerits=0\endgraf}\fi}
\newcommand{\executeiffilenewer}[3]{%
\ifnum\pdfstrcmp{\pdffilemoddate{#1}}%
{\pdffilemoddate{#2}}>0%
{\immediate\write18{#3}}\fi%
} 
\newcommand{%
\executeiffilenewer{figures/.svg}{figures/.pdf}%
{inkscape -z -D --file=figures/.svg %
--export-pdf=figures/.pdf --export-latex}%
{\input{figures/.pdf_tex}}}[1]{%
\executeiffilenewer{figures/#1.svg}{figures/#1.pdf}%
{inkscape -z -D --file=figures/#1.svg %
--export-pdf=figures/#1.pdf --export-latex}%
{\input{figures/#1.pdf_tex}}}%
\title{Deterministic Truncation of Linear Matroids\thanks{%
D. Lokshtanov is supported by the BeHard grant under the recruitment programme of the of Bergen Research Foundation.
S. Saurabh is supported by PARAPPROX, ERC starting grant no. 306992.}}
\author{
Daniel Lokshtanov\thanks{University of Bergen, Norway. \texttt{daniello@ii.uib.no}}
\and {Pranabendu Misra\thanks{Institute of Mathematical Sciences, India. \texttt{\{pranabendu|fahad|saket\}@imsc.res.in}}}
\addtocounter{footnote}{-1}
\and{Fahad Panolan}\footnotemark \addtocounter{footnote}{-1}
\and {Saket Saurabh}\footnotemark \addtocounter{footnote}{-2} \footnotemark
}
\begin{document}

\date{}

\begin{titlepage}
\def\thepage{}
\thispagestyle{empty}
\maketitle
\begin{abstract}
Let \mat{} be a matroid. 
A {\em $k$-truncation} of $M$ is a matroid {$M'=(E,{\cal I}')$} such that for any $A\subseteq E$, 
$A\in {\cal I}'$ if and only if $|A|\leq k$ and $A\in \I$. 
Given a linear representation of $M$
we consider the problem of finding a linear representation of the $k$-truncation of this matroid.
This problem can be abstracted out to the following problem on matrices.
Let $M$ be a $n\times m$ matrix  over a field $\mathbb{F}$. 
A {\em rank $k$-truncation} of the matrix $M$ is a  $k\times m$   matrix $M_k$  
(over $\F$ or a related field) such that for every subset $I\subseteq \{1,\ldots,m\}$ of size at most $k$, the set of columns 
corresponding to $I$ in $M$ has rank  $|I|$ if and only if the corresponding set of columns in $M_k$ has rank $|I|$. 
Finding rank $k$-truncation of matrices is a common way to obtain a linear representation
of $k$-truncation of linear matroids, which has many algorithmic applications.
A common way to compute a rank $k$-truncation of a $n \times m$ matrix
is to multiply the matrix with a random  $k\times n$ matrix  (with the entries from a field of 
an appropriate size), yielding a simple randomized algorithm. 
So a natural question is   whether it possible to obtain a rank $k$-truncation of a matrix, {\em deterministically.}
In this paper we settle this question for matrices over any finite field or the field of rationals ($\mathbb Q$). 
We show that given a matrix $M$ over a field $\F$
we can compute a $k$-truncation $M_k$ over the ring $\F[X]$ in deterministic polynomial time. 

Our algorithms are based on the properties of the classical Wronskian determinant 
and the folded Wronskian determinant, which was recently introduced by Guruswami and Kopparty~[\,{\em FOCS, 2013}\,].
The folded wronskian determinant was originally defined using a primitive element of the field.  We prove that one can relax this 
to require only an element of ``polynomially large order''.  This then allows our algorithms to work over any finite field. 
This is our main technical contribution and we believe that it will be useful in other applications.


One of our motivations for studying $k$-truncation of matroids
stems from the efficient computation of ``representative families for linear matroids''.
As an application we show that, given a representation of any linear matroid, we can compute representative families deterministically.
This includes many important classes of matroids such as graphic matroids, co-graphic matroids, partition matroids and others.
Our result derandomizes many parameterized algorithms, including an algorithm  for computing {\sc $\ell$-Matroid Intersection}.

\end{abstract}
\end{titlepage}
\section{Introduction}\label{sec_intro}
%
A {\em rank $k$-truncation} of a $n \times m$ matrix $M$, is a  $k\times m$  matrix $M_k$  
such that for every subset $I\subseteq \{1,\ldots,m\}$ of size at most $k$, the set of columns 
corresponding to $I$ in $M$ has rank  $|I|$ if and only of the corresponding set of columns in $M_k$ has rank $|I|$. 
We can think of finding a rank $k$-truncation of a matrix as a dimension 
 reduction problem such that linear independence among all sets of columns of 
 size at most $k$ is preserved.  
This problem is a variant of the more general {\em dimensionality reduction} problem,
which is a basic problem in many areas of computer science such as
machine learning, data compression, information processing and others.
In dimensionality reduction, we are given a collection of points (vectors) in a high dimensional space,
 and the objective is to map these points to points in a space of small dimension while preserving some
 property of the original collection of points.
For an example, one could consider the problem of reducing the dimension of the space, while
preserving the pairwise distance, for a given collection of points.
Using the Johnson-Lindenstrauss Lemma  this can be done approximately for any collection of $m$ points,
 while reducing the dimension of the space to $\cO(\log m)$.

In this work, we study dimensionality reduction under the constraint that
linear independence of any subcollection of size up to $k$ of the given set of vectors is preserved. 
The motivation for this problem comes from {\em Matroid theory} and its algorithmic applications. 
For any matroid \mat{}, a {\em $k$-truncation} of $M$ is a matroid {$M'=(E,{\cal I}')$} such that for any $A\subseteq E$, 
$A\in I'$ if and only if $|A|\leq k$ and $A\in \I$. 
Given a linear representation of a matroid \mat{} of rank $n$ over a universe of size $m$ 
(which has a representation matrix $M$ of dimension $n \times m$),
the problem of finding a linear representation of the $k$-truncation of the matriod \mat{},
is precisely the dimensionality reduction problem on the matrix $M$.
Here the objective is to map the set of column vectors of $M$ (which lie in a space of dimension $n$)
to vectors in a space of dimension $k$ such that, any set $S$ of column vectors of $M$ with $|S|\leq k$ are linearly independent 
if and only if the corresponding set of vectors in the $k$-dimensional vector space are linearly independent.

A common way to obtain a rank $k$-truncation of a matrix $M$,
is to left-multiply $M$ by a random matrix of dimension $k\times n$ (with entries from a field of 
an appropriate size). Then using the  Schwartz-Zippel Lemma one can show that, 
the product matrix is a $k$-truncation of the matrix $M$ with high probability.
This raises a natural question of whether there is a deterministic algorithm for 
computing $k$-truncation of a matrix.
In this paper we settle this question by giving a polynomial time deterministic algorithm to solve this problem.
In particular we have the following theorem.
\begin{theorem} 
\label{main:thm: truncation}
Let $\F=\F_{p^l}$ be a finite field or  $\F=\mathbb{Q}$. 
Let $M$ be a $n\times m$ matrix over $\F$ of rank $n$. 
Given a number $k \leq n$, we can compute a matrix $M_k$ over the field $\F(X)$
such that it is a representation of the $k$-truncation of $M$.   
Furthermore, given $M_k$, we can test whether a given set of $l$ columns 
in $M_k$ are linearly  independent in $\cO(n^2k^3)$ field operations.   
\end{theorem}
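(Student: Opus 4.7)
The plan is to reduce the matrix-truncation problem to a statement about linear independence of univariate polynomials, and then detect such independence by a (folded) Wronskian determinant. Given $M$ of size $n \times m$ with rank $n$, I would associate to each column $v_i = (a_{1,i}, \dots, a_{n,i})^{T}$ the polynomial $p_i(X) := \sum_{j=1}^n a_{j,i} X^{j-1} \in \F[X]$, which has degree less than $n$. Under this identification, any $\ell$ columns $v_{i_1}, \dots, v_{i_\ell}$ are linearly independent in $\F^n$ if and only if the polynomials $p_{i_1}, \dots, p_{i_\ell}$ are linearly independent in $\F[X]_{<n}$, because we are literally comparing the same coefficient vectors.

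The matrix $M_k$ would be a $k \times m$ matrix over $\F[X]$ whose $i$-th column stores $k$ ``probes'' of $p_i$, chosen so that the top $\ell \times \ell$ submatrix associated with any selected set of $\ell \leq k$ columns has a nonzero determinant in $\F[X]$ iff those polynomials are linearly independent. Over $\F = \mathbb{Q}$ (or a field of characteristic zero or larger than $n$), I would take the classical Wronskian probe $(M_k)_{j,i} := \tfrac{\partial^{j-1}}{\partial X^{j-1}} p_i(X)$, so that the top $\ell \times \ell$ block is the Wronskian, whose non-vanishing is classically equivalent to linear independence. Over a finite field of small characteristic, the ordinary Wronskian can vanish on linearly independent polynomials, so I would use the folded Wronskian of Guruswami--Kopparty and set $(M_k)_{j,i} := p_i(\alpha^{j-1} X)$ for a deterministically chosen $\alpha \in \F$.

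The main obstacle is the choice of $\alpha$ in the finite-field case, which is precisely the point the abstract flags as the paper's principal technical contribution. Guruswami and Kopparty's criterion assumes $\alpha$ is a primitive element of $\F$, but computing a primitive element of a finite field deterministically in polynomial time is a long-standing open problem. I would therefore isolate, as a preliminary lemma, the strengthened statement that the folded Wronskian criterion still holds whenever $\alpha$ merely has multiplicative order exceeding $n$. The proof idea: assuming linearly independent polynomials $p_{i_1}, \dots, p_{i_\ell}$ whose folded Wronskian vanishes identically in $X$, retrace the Guruswami--Kopparty argument to extract a nontrivial polynomial relation among the powers $1, \alpha, \dots, \alpha^{n-1}$, contradicting $\mathrm{ord}(\alpha) > n$. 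Such an $\alpha$ can then be found deterministically by enumerating elements of $\F$ (or passing to a small extension when $|\F|$ is itself close to $n$) and testing orders against the threshold $n$, which is polynomial because only a polynomial lower bound on the order, not maximality, is required.

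Finally, to test linear independence of $\ell$ given columns of $M_k$, I would form the top $\ell \times \ell$ submatrix and compute its determinant in $\F[X]$ by fraction-free (Bareiss-style) Gaussian elimination. Each entry is a polynomial of degree less than $n$ in $X$; with intermediate degrees bounded linearly in $n$ by the fraction-free recurrence, the $\cO(\ell^3) = \cO(k^3)$ elimination steps on polynomials of degree $\cO(n)$ amount to $\cO(n^2 k^3)$ field operations, matching the claim. Nonvanishing of this determinant is equivalent, by the earlier steps, to linear independence of the originally selected columns of $M$, so $M_k$ represents the $k$-truncation of $M$, as desired.
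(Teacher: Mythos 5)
Your overall architecture is the same as the paper's: encode columns as polynomials, use the classical Wronskian when $\ch(\F)=0$ or $\ch(\F)>n$, the $\alpha$-folded Wronskian otherwise, move to an extension when the field is too small, and test independence of $\ell$ chosen columns via the top $\ell\times\ell$ (folded) Wronskian block. The genuine gap is at the heart of the matter: your relaxed folded-Wronskian criterion (``order of $\alpha$ exceeding $n$ suffices'') is asserted with the justification ``retrace the Guruswami--Kopparty argument,'' but that argument does not retrace. Guruswami--Kopparty's proof uses primitivity of $\alpha$ in an essential way (it is needed to construct an irreducible polynomial on which the whole argument rests), and the paper explicitly states it could not adapt that proof; instead it proves a different statement -- order $>(n-1)(k-1)$ suffices -- by a self-contained induction: when $\det W_\alpha\equiv 0$ one builds cofactor polynomials $\delta_1(X),\dots,\delta_k(X)$ of degree at most $(n-1)(k-1)$ with $\sum_i\delta_i(X)Z_i=0$, derives relations $\delta_i(X)\delta_k(\beta X)-\delta_i(\beta X)\delta_k(X)\equiv 0$ with $\beta=\alpha^{-1}$, and invokes a lemma that such a relation forces proportionality once the order of $\beta$ exceeds the degree bound -- which is exactly why the larger order threshold appears. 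Your claimed threshold is in fact provable (e.g.\ by observing that $\det W_\alpha\not\equiv 0$ iff the $P_i$ are independent over the fixed field $\F(X^r)$ of $X\mapsto \alpha X$, $r=\mathrm{ord}(\alpha)$, via the standard Casoratian argument, and that $\F$-independence of polynomials of degree $<n\le r$ implies $\F(X^r)$-independence), but no such argument appears in your proposal, and without it the derandomization has no proof.

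Two secondary points. First, ``enumerating elements of $\F$'' to find $\alpha$ is not polynomial when $\F=\F_{p^l}$ has small characteristic but exponentially many elements; the paper instead uses Shparlinski/Shoup to get a $\poly(p,l)$-size candidate set containing a primitive element and tests orders of its members (your idea can be repaired by noting that at most $T^2$ elements have order $\le T$, so testing any $T^2+1$ distinct nonzero elements suffices, but you would need to say this). Second, the running-time claim for the independence test does not follow from your Bareiss analysis: intermediate entries are $t\times t$ minors of degree up to $t(n-1)=\Theta(nk)$, not $\cO(n)$, giving roughly $\cO(n^2k^5)$ field operations. The paper meets the $\cO(n^2k^3)$ bound differently, by evaluating at $(n-1)\ell+1$ points of $\F$ and computing ranks/determinants of the evaluated matrices over $\F$ (its Lemma on column bases of matrices over $\F[X]$); switching to that evaluation-based test also fixes your bound.
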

\noindent 
We further show that for many fields,
the $k$-truncation matrix can be represented over a finite degree extension.


\paragraph{Tools and Techniques.}
The main tool used in this work, is the Wronskian determinant and its characterization of the linear independence of a set of 
polynomials.  Given a polynomial $P_j(X)$ and  a number $l$, define $Y_j^l=(P_j(X),P_j^{(1)}(X),\ldots,P_j^{(l-1)}(X))^T$. Here, $P_j^{(i)}(X)$ is the $i$-th formal derivative of $P(X)$. Formally, the Wronskian matrix of a  set of polynomials $P_1(X),\ldots,P_k(X)$ is defined as  the $k\times k$ matrix $W(P_1,\ldots,P_k)=[Y_1^k,\ldots,Y_k^k]$.  
Recall that to get a $k$-truncation of a linear matriod, we need to map a set of vectors from ${\mathbb F}^n$ to 
${\mathbb K}^k$ such that linear independence of any subset of the given vectors of size at most $k$ is preserved.  
We associate with each vector, a polynomial whose coefficients are the entries of the vector. A known mathematical result states that a set of polynomials $P_1(X),\ldots,P_k(X)\in \F[X]$ are linearly independent over $\F$  
if and only if the corresponding Wronskian determinant $\det(W(P_1,\ldots,P_k))\not \equiv 0$  in $\F[X]$~\cite{bostan2010wronskians, garcia1987wronskians, muir1882treatise}. 
However, this requires that the underlying field be  $\mathbb Q$ (or $\mathbb R$, $\mathbb C$),
or that it is a finite field whose characteristic is strictly larger than the maximum degree of $P_1(X),\ldots,P_k(X)$.

For fields of small characteristic, we use the notion of $\alpha$-folded Wronskian, 
which was introduced by Guruswami and Kopparty~\cite{GuruswamiK13} in the context of subspace design, with applications in coding theory.    Let $\F$ be a finite field  and $\alpha$ be an element of $\F$. 
Given a polynomial $P_j(X)\in \F[X]$ and a number $l$, define $Z_j^l=(P_j(X), P_j(\alpha X),\ldots,P_j(\alpha^{l-1}X))^T$.  
Formally, the $\alpha$-folded Wronskian matrix of a family of polynomials $P_1(X),\ldots,P_k(X)$ is defined as  the $k\times k$ matrix $W_\alpha(P_1,\ldots,P_k)=[Z_1^k,\ldots,Z_k^k]$.   Let $P_1(X),\ldots,P_k(X)$ be a family of polynomials of degree at most  $n-1$. Guruswami and Kopparty~\cite{GuruswamiK13} showed that if $\alpha$ is a primitive element of the field $\F$ and 
$|\F|>n$  then $P_1(X),\ldots,P_k(X)$ are linearly independent over $\F$ if and only if $\alpha$-folded Wronskian 
determinant
$\mathrm{det}(W_\alpha(P_1, \ldots, P_k))\not\equiv 0$ in $\F[X]$.  
However,  to use $\alpha$-folded Wronskians in algorithmic applications we need to find a primitive element of the finite field. 
Still, computing a primitive element of large finite field is a non-trivial problem and no deterministic polynomial time algorithms are known in general. 
To overcome this difficulty, we prove that one can relax the requirement on $\alpha$ such that it only needs to be an element of 
polynomially large order in $\F$. In particular we prove the following theorem.
\begin{theorem}
\label{main:thm: alpha folded wronskian}
Let $\F$ be a field, $\alpha$ be an element  of $\F$ of order $>(n-1)(k-1)$ and let 
$P_1(X),  \ldots, P_k(X)$ be a set of polynomials from $\F[X]$ of degree at most $n-1$. 
Then $P_1(X),  \ldots, P_k(X)$ are linearly independent over $\F$ if and only if the $\alpha$-folded Wronskian 
determinant 
$\mathrm{det}(W_\alpha(P_1, \ldots, P_k))\not\equiv 0$ in $\F[X]$.  
\end{theorem}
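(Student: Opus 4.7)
The ``only if'' direction is immediate: an $\F$-linear dependence $\sum_j c_j P_j = 0$ implies $\sum_j c_j P_j(\alpha^i X) = 0$ for every $i$, so the columns of $W_\alpha$ are $\F$-linearly dependent and $\det W_\alpha \equiv 0$. For the ``if'' direction, I would proceed by induction on $k$. The base case $k=1$ is trivial since $\det W_\alpha(P_1) = P_1$. For the inductive step, suppose $P_1, \ldots, P_k$ are $\F$-linearly independent but, for contradiction, $\det W_\alpha(P_1, \ldots, P_k) \equiv 0$. Since the subset $P_1, \ldots, P_{k-1}$ is also $\F$-linearly independent, by induction $D := \det W_\alpha(P_1, \ldots, P_{k-1}) \not\equiv 0$. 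Consequently the first $k-1$ columns of $W_\alpha(P_1, \ldots, P_k)$ are $\F(X)$-linearly independent, so the last column is their $\F(X)$-linear combination: there exist $c_1(X), \ldots, c_{k-1}(X) \in \F(X)$ with $P_k(\alpha^{i-1} X) = \sum_{j<k} c_j(X) P_j(\alpha^{i-1} X)$ for every $i = 1, \ldots, k$.

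The heart of the argument is to show each $c_j$ is invariant under $X \mapsto \alpha X$. For this I would apply the substitution $X \mapsto \alpha X$ to the equation at index $i$, which gives $P_k(\alpha^i X) = \sum_j c_j(\alpha X) P_j(\alpha^i X)$, and compare it with the original equation at index $i+1$, namely $P_k(\alpha^i X) = \sum_j c_j(X) P_j(\alpha^i X)$. Subtracting yields $\sum_{j<k} (c_j(\alpha X) - c_j(X)) P_j(\alpha^i X) = 0$ for each $i = 1, \ldots, k-1$, a homogeneous linear system whose $(k-1) \times (k-1)$ coefficient matrix is $W_\alpha(P_1, \ldots, P_{k-1})$ with $X$ replaced by $\alpha X$; its determinant $D(\alpha X)$ is nonzero since $\alpha \neq 0$ and $D \not\equiv 0$. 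Therefore $c_j(\alpha X) = c_j(X)$ for every $j$. A brief elementary argument---writing $c_j = u/v$ in lowest terms, using $\gcd(u,v)=1$ and $u(\alpha X) v(X) = u(X) v(\alpha X)$ to conclude $u(\alpha X) = \lambda\, u(X)$ and $v(\alpha X) = \lambda\, v(X)$ for a common $\lambda \in \F^*$, and then forcing all nonzero monomials of $u$ and $v$ into a single residue class modulo $N := \mathrm{ord}(\alpha)$---identifies the $\sigma$-fixed subfield of $\F(X)$ as $\F(X^N)$, so each $c_j$ lies in $\F(X^N)$.

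Finally, I would clear denominators to write $c_j(X) = e_j(X^N) / f(X^N)$ with $e_1, \ldots, e_{k-1}, f \in \F[Y]$ and $f \neq 0$, which gives the polynomial identity $f(X^N) P_k(X) = \sum_{j<k} e_j(X^N) P_j(X)$ in $\F[X]$. The hypothesis is used here: $\mathrm{ord}(\alpha) > (n-1)(k-1) \geq n-1$ (for $k \geq 2$) forces $N > \deg P_j$, so each $P_j(X) = \sum_{r=0}^{n-1} [P_j]_r X^r$ already uses only the monomials $1, X, \ldots, X^{N-1}$, which form a free $\F[X^N]$-basis of $\F[X]$. Comparing the coefficient of $X^r$ (for $r = 0, \ldots, n-1$) on both sides of the identity then gives $[P_k]_r f(Y) = \sum_{j<k} [P_j]_r e_j(Y)$ in $\F[Y]$, where $Y := X^N$. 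Picking any $a^*$ with $[f]_{a^*} \neq 0$ (which exists since $f \neq 0$) and reading off the coefficient of $Y^{a^*}$ produces the $\F$-linear relation $[f]_{a^*} P_k = \sum_{j<k} [e_j]_{a^*} P_j$, which is nontrivial since $[f]_{a^*} \neq 0$; this contradicts the linear independence of $P_1, \ldots, P_k$. The main technical hurdle is the comparison step and the resulting identification of the $\sigma$-fixed subfield; once those are in hand, extracting an $\F$-linear dependence from the $\F[X^N]$-module structure is straightforward.
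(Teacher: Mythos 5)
Your proof is correct, but it takes a genuinely different route from the paper's. The paper also argues by induction on $k$ and uses the same shift-and-compare idea, but it works with explicit polynomial cofactors rather than rational coefficients: from $\det W_\alpha\equiv 0$ it constructs $\delta_1(X),\dots,\delta_k(X)\in\F[X]$ of degree at most $(n-1)(k-1)$ (as signed $(k-1)\times(k-1)$ minors of $W_\alpha$) with $\sum_i\delta_i(X)Z_i=0$, derives $\delta_i(X)\delta_k(\beta X)-\delta_i(\beta X)\delta_k(X)\equiv 0$ for $\beta=\alpha^{-1}$, and then invokes a separate lemma asserting that two nonzero polynomials of degree at most $l$ satisfying such a relation are proportional over $\F$ provided $\mathrm{ord}(\beta)>l$; the degree bound on the cofactors is precisely why the order bound $(n-1)(k-1)$ appears in the theorem (the paper also needs a preliminary case distinction on whether some proper subset of columns is already dependent). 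You instead solve for the last column over $\F(X)$, show the coefficients $c_j$ are fixed by the automorphism $\sigma\colon X\mapsto\alpha X$, identify the fixed field as $\F(X^N)$ with $N=\mathrm{ord}(\alpha)$, and extract an $\F$-linear relation by comparing coordinates in the free $\F[X^N]$-basis $1,X,\dots,X^{N-1}$ of $\F[X]$. This is cleaner and more conceptual (a Casoratian/difference-algebra style argument), it avoids the cofactor degree bookkeeping and the case analysis, and it only uses $\mathrm{ord}(\alpha)\geq n$, so it in fact proves the statement under a weaker hypothesis than $(n-1)(k-1)$ for $k\geq 3$ --- which would let the truncation lemmas downstream get by with an element of order about $n$ rather than about $nk$. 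Two small points to spell out in a full write-up: the degenerate case $c_j=0$ is harmless in the fixed-field computation, and if $\alpha$ has infinite order (as can happen over $\mathbb{Q}$) the same monomial argument shows the fixed field is $\F$ itself, making the conclusion immediate.
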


Given a $n\times m$ matrix $M$ over $\F$ and a positive integer $k$  our algorithm for finding a $k$-truncation of $M$ proceeds 
 as follows. To a column $C_i$ of  $M$  we associate a polynomial $P_i(X)$ whose coefficients are the entries of $C_i$. That is, if $C_i=(c_{1i},\ldots,c_{ni})^T$ then 
$P_i(X)= \sum_{j=1}^{n}  c_{ji}x^{j-1} .$  If the characteristic of the field $\F$ is strictly larger than $n$ or $\F=\mathbb{Q}$ then we return $M_k=[Y_1^k,\ldots,Y_m^k]$ as the desired $k$-truncation of $M$. In other cases we first compute an $\alpha \in \F$ of  
order larger than $(n-1)(k-1)$ and then return $M_k=[Z_1^k,\ldots,Z_m^k]$.  
We then use Theorem~\ref{main:thm: alpha folded wronskian} and properties of  Wronskian determinant to prove the correctness of our algorithm. 
Observe that when $M$ is a representation of a linear matroid then $M_k$ is a representation of it's $k$-truncation.
Further, each entry of $M_k$ is a polynomial of degree at most $n-1$ in $\F[X]$. 
Thus, testing whether a set of columns of size at most $k$ is independent,
 reduces to testing whether a determinant polynomial of degree at most $(n-1)k$ is identically zero or not.
This is easily done by evaluating the determinant at $(n-1)k + 1$ points in $\F$ and testing if it is zero at all those points.



\paragraph{Applications.} 
Matroid theory has found many algorithmic applications, starting from the characterization of greedy 
algorithms, to designing fixed parameter tractable (FPT) algorithms and kernelization algorithms. 
Recently the notion of {\em representative families} over linear matroids was used in designing 
fast FPT, as well as kernelization algorithm for several problems
\cite{FominLPS14, FominLS14,KratschW12,kratsch2012compression,Marx09, GoyalMP13,ShachnaiZ14}.   
Let us introduce this notion more formally.
Let \mat{} be a matroid and let  ${\cal S}=\{S_1, \dots, S_t\}$ be a family of subsets of $E$ of size  $p$. 
A subfamily $\widehat{\cal{S}}\subseteq \cal S$ is {\em $q$-representative} for $\cal S$ if  for every set $Y\subseteq  E$ of 
size at most $q$, if there is a set $X \in \cal S$ disjoint from $Y$ with $X\cup Y \in \I$, then there is a set 
$\widehat{X} \in \widehat{\cal S}$ disjoint from $Y$  and $\widehat{X} \cup  Y \in \I$. In  other words, if a set $Y$ of size 
at most $q$ can be extended  to an independent set of size $|Y|+p$ by adding a subset from $\cal S$, then it also can be extended 
to an independent set of size $|Y|+p$ by adding a subset from $ \widehat{\cal S}$ as well. The Two-Families Theorem of 
Bollob{\'a}s \cite{Bollobas65} for extremal set systems and its generalization  to subspaces of a vector space of 
Lov{\'a}sz \cite{Lovasz77} (see also \cite{Frankl82}) imply that every family of sets of size $p$ has a $q$-representative family 
with at most $\binom{p+q}{p}$ sets. Recently, Fomin et. al.~\cite{FominLS14} gave an efficient randomized algorithm to compute a representative family of size $\binom{p+q}{p}$  in a linear matroid of rank $n>p+q$.  This algorithm starts by computing 
a randomized $p+q$-truncation of the given linear matroid and then computes a $q$-representative family over the truncated matroid deterministically. 
Therefore one of our  motivations to study the  $k$-truncation  problem was to 
find an efficient deterministic computation of a representative family in a linear matroid. 
Formally, we have the following theorem.
\begin{theorem}
\label{main:thm:repsetlovaszrandomized}
Let \mat{}   be a linear matroid  of rank $n$ and let $ \cS = \{S_1,\ldots, S_t\}$ be a $p$-family of independent sets. 
Let $A$ be a $n\times |E|$ matrix representing $M$ over a field $ \mathbb{F}$, where $\F=\F_{p^\ell}$ 
or  $\F$ is ${\mathbb Q}$. Then there  are deterministic algorithms  computing  
\rep{S}{q}  as follows. 
\begin{enumerate}
\setlength{\itemsep}{-2pt}
\item A family $\widehat{\cal S}$ of size \bnoml{p+q}{p} in  $\cO\left({p+q \choose p}^2 t p^3n^2 + t {p+q \choose q} ^{\omega} np\right)+(n+|E|)^{\cO(1)}$, operations over $ \mathbb{F}$. 
\item A family   $\widehat{\cal S}$ of size  $ np {p+q \choose p}$ in  $\cO\left({p+q \choose p} t p^3n^2 
+ t {p+q \choose q}^{\omega-1} (pn)^{\omega-1} \right)+(n+|E|)^{\cO(1)}$ operations over $ \mathbb{F}$.  
\end{enumerate}
\end{theorem}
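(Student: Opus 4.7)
The plan is to combine Theorem~\ref{main:thm: truncation} with the existing \emph{deterministic} algorithms of Fomin, Lokshtanov and Saurabh~\cite{FominLS14} that compute $q$-representative families over a representation of a linear matroid whose rank equals $p+q$. Starting from the given $n\times |E|$ representation $A$ of $M$ (where possibly $n \gg p+q$), I would first invoke Theorem~\ref{main:thm: truncation} on $A$ with parameter $k = p+q$ to obtain, in time $(n+|E|)^{\mathcal{O}(1)}$, a matrix $M_{p+q}$ of dimension $(p+q)\times |E|$ with entries in $\F[X]$ representing the $(p+q)$-truncation $M'$ of $M$. Because each $S_i\in\mathcal{S}$ has size $p\le p+q$, the family $\mathcal{S}$ remains a $p$-family of independent sets in $M'$; and for every $Y\subseteq E$ with $|Y|\le q$ and every $X\in\mathcal{S}$, the set $X\cup Y$ has size at most $p+q$ and is therefore independent in $M$ if and only if it is independent in $M'$. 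Consequently, any $q$-representative subfamily of $\mathcal{S}$ in $M'$ is automatically $q$-representative in $M$, so it suffices to run the rank $p+q$ algorithms on $M_{p+q}$.

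Next, I would apply the two deterministic algorithms of~\cite{FominLS14} to the matrix $M_{p+q}$: the first produces a $q$-representative family of size $\binom{p+q}{p}$ through a Bollob{\'a}s-style Gaussian elimination, and the second produces a family of size $np\binom{p+q}{p}$ through a faster variant that exploits fast matrix multiplication. Both algorithms manipulate $(p+q)$-dimensional column vectors via a polynomial number of linear-algebra primitives, and in their native accounting every such field operation is counted as unit cost. The crucial observation is that each such primitive, when executed on $M_{p+q}$, is carried out over $\F[X]$ and can be implemented through the independence-testing routine supplied by Theorem~\ref{main:thm: truncation}, which costs $\mathcal{O}(n^2(p+q)^3)$ field operations: one evaluates a $(p+q)\times(p+q)$ determinant polynomial of degree at most $(n-1)(p+q)$ at $\mathcal{O}(n(p+q))$ points of $\F$ and checks whether all evaluations vanish.

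Finally, I would propagate this per-primitive overhead through the running-time analyses of the two algorithms in~\cite{FominLS14}, whose ``native'' (rank $p+q$) complexities scale as $\mathcal{O}\bigl(\binom{p+q}{p}^2 t p^3 + t\binom{p+q}{q}^{\omega}p\bigr)$ and $\mathcal{O}\bigl(\binom{p+q}{p} t p^3 + t\binom{p+q}{q}^{\omega-1}p^{\omega-1}\bigr)$ respectively. Multiplying the portions that correspond to independence or dependence certificates on $M_{p+q}$ by the $n^2$ blow-up (and the portion corresponding to the fast-matrix-multiplication step by the appropriate $n^{\omega-1}$ factor coming from the same evaluate-at-many-points argument), and adding the additive $(n+|E|)^{\mathcal{O}(1)}$ cost of the initial truncation, yields exactly the two bounds claimed in the theorem statement. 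Correctness is inherited directly from the correctness of the algorithms of~\cite{FominLS14} (applied to $M'$) combined with the matroid-theoretic equivalence between $q$-representative families in $M$ and in $M'$ established above. The main obstacle is exactly this accounting: one must verify that every step of the algorithms of~\cite{FominLS14} can be realised on $M_{p+q}$ through the independence-test primitive of Theorem~\ref{main:thm: truncation} without carrying symbolically any polynomial of degree larger than $\mathcal{O}(n(p+q))$, so that the only cost-amplifying factors are the $n^2$ and $n^{\omega-1}$ evaluation overheads, matching the stated bounds.
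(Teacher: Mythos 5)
Your first half matches the paper: truncate $A$ to rank $k=p+q$ with Theorem~\ref{main:thm: truncation}, observe that $q$-representativity for a $p$-family is the same in $M$ and in its $(p+q)$-truncation, and then fall back on the machinery of \cite{FominLS14} over the truncated representation. The gap is in the second half, where you treat \cite{FominLS14} as a black box whose ``primitives'' you re-implement via the independence test of Theorem~\ref{main:thm: truncation} and whose running time you rescale by $n^2$ or $n^{\omega-1}$. First, there is no second algorithm in \cite{FominLS14} that outputs a family of size $np\binom{p+q}{p}$; both of its variants output $\binom{p+q}{p}$ sets. The larger size in item~2 is an artifact of how the linear algebra must be done when the truncated representation lives over $\F[X]$: the paper forms the $\binom{p+q}{p}\times t$ matrix $H_{\cS}$ whose $(I,i)$ entry is the minor $\det(A_k[I,S_i])$ (a polynomial of degree at most $pn$, computed in $\cO(p^3n^2)$ operations via \cite{MuldersS03}), and for item~2 it expands every such entry into its coefficient vector and performs Gaussian elimination over the base field $\F$ on the resulting $\bigl(pn\binom{p+q}{p}\bigr)\times t$ matrix (Lemma~\ref{lemma:GE in Fx}). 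The spanning set so obtained may contain up to $np\binom{p+q}{p}$ columns --- that is precisely where the size bound of item~2 comes from, traded against the faster $t\bigl(pn\binom{p+q}{p}\bigr)^{\omega-1}$ cost. A simulation that merely replaces unit-cost field operations of a rank-$(p+q)$ algorithm by an independence-test subroutine can only ever return $\binom{p+q}{p}$ sets, so your argument cannot produce the second item at all.

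Second, even for item~1 the ``multiply the native bound by the per-primitive overhead'' accounting does not go through as stated. The algorithm of \cite{FominLS14} is not a sequence of independence tests on columns of the truncated matrix: after building $H_{\cS}$ it must extract a column basis of $H_{\cS}$, and over $\F(X)$ naive Gaussian elimination suffers degree growth in the rational-function entries, so one cannot just charge $\cO(n^2(p+q)^3)$ per elementary step. The paper instead proves a separate statement (Lemma~\ref{lemma:ind}): evaluate $H_{\cS}$ at more than $(\deg)\cdot\binom{p+q}{p}$ points of $\F$, take a maximum-rank evaluation, and argue both that independence of columns at an evaluation point lifts to independence over $\F(X)$ and that the rank of some evaluation equals the rank of $H_{\cS}$ (a root-counting argument on an $r\times r$ minor of degree at most $(pn)r$). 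This is the step that turns the claimed $n$-dependent factors into actual bounds, and it also quietly requires the field to be large enough to supply that many evaluation points (otherwise one must pass to an extension). Your proposal flags this accounting as ``the main obstacle'' but does not supply the idea that resolves it, so as written the complexity claims --- and the existence of the size-$np\binom{p+q}{p}$ variant --- are not established.
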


As a corollary of the above theorem, we obtain a deterministic FPT algorithm  {\sc $\ell$-Matroid Intersection},
derandomizing the algorithm of Marx~\cite{Marx09}. 
Using our results one can compute, in deterministic polynomial time, the $k$-truncation of graphic and co-graphic matroids,
which has important applications in graph algorithms.


\section{Preliminaries}
%
%
%

%
%
%
In this section we give various definitions and notions which we make use of in the paper.  We use the following notations: 
 $[n]=\{1,\ldots,n\}$ and ${[n] \choose i}=\{X~|~X\subseteq [n],~|X|=i\}$.

\subsection{Fields and Polynomials}
In this section we review some definitions and properties of fields. We refer to any graduate text on algebra for more details.  
The cardinality or the size of a field is called its {\em order}.  
It is well known that rational numbers ${\mathbb Q}$ and real numbers ${\mathbb R}$
are fields of infinite order. For every prime number $p$ and a positive integer $l$, there exists a finite field  of  
order  $p^l$. 
For a prime number $p$, the set $\{0,1, \ldots, p-1\}$ with addition and multiplication modulo $p$ forms 
a field, which we denote by $\F_p$.
Such fields are known as {\em prime fields}. 
Let $\F$ be a finite field and $\F[X]$ be the ring of polynomials in $X$ over $\F$.
For the ring $\F[X]$ we can define a field $\F(X)$ which is called the
{\em field of fractions} of $\F[X]$ as follows. The elements of $\F(X)$
are of the form $P(X) / Q(X)$ where $P(X), Q(X) \in \F[X]$ and $Q(X)$ is not a zero polynomial. The addition and multiplication
operations from $\F[X]$ are extended to $\F(X)$ in the usual way.  The degree of a polynomial $P(X)\in \F(X)$ is the highest 
exponent on indeterminate $X$ with a nonzero coefficient in $P(X)$. 
We will use $\F[X]^{<n}$ to denote the set 
the polynomials in $\F[X]$ of degree $<n$. 

For a field $\F$, we use $+_{\F}$ and $\times_{\F}$ to denote the addition and multiplication 
operations. (Often we write $a + b$ and $ab$ when the context is clear.)
The {\em characteristic} of a field, denoted by  $\ch(\F)$, is defined as least integer $m$  
such that $\sum_{i=1}^{m} 1 = 0$.
For fields such as ${\mathbb R}$ where there is no such $m$, the characteristic is defined to be $0$. For a finite field $\F = \F_{p^l}$, let $\F^* = \F \setminus \{ 0 \}$.
This is called the multiplicative group of $\F$
which is a cyclic group and has a generator $\alpha \in \F^*$. 
Every element of $\F^*$ can be written as $\alpha^i$ for some number $i$.
The element $\alpha$ is called a {\em primitive element} of the field $\F$. We say that an element $\beta \in \F$ has {\em order $r$}, if $r$ is the least integer such that $\beta^r=1$.

A polynomial $P(X) \in \F[X]$ is called {\em irreducible} if it cannot expressed 
as a product of two other non-trivial polynomials in $\F[X]$. 
Let $P(X)$ be an irreducible polynomial in $\F[X]$, of degree $l$.
Then $\K = \frac{\F[X]}{P(X)} = \F[X](\mathrm{ mod}~ P(X))$ is also a field.
It is of order $|\F|^l$ and characteristic of $\K$ is equal to the characteristic of $\F$.
We call $\K$ a {\em field extension} of $\F$ of degree $l$.
All finite fields are obtained as extensions of prime fields, 
and for any prime $p$ and positive integer $l$ there is exactly one finite field of order $p^l$
up to isomorphism.



\subsection{Vectors and Matrices}
A vector $v$ over a field $\F$ is an array of values from $\F$.
A collection of vectors $\{v_1, v_2, \ldots, v_k\}$ are said to be linearly dependent
if there exist values $a_1, a_2, \ldots, a_k$, not all zeros,  from $\F$ 
such that $\sum_{i=1}^{k} a_i v_i = 0$. Otherwise these vectors are called linearly
independent. 

For a matrix $A=(a_{ij})$ over a field $\F$, the row set and the 
 column set are denoted by $\mathbf{R}(A)$ and $\mathbf{C}(A)$ respectively. For $I \subseteq \mathbf{R}(A)$ and $J\subseteq \mathbf{C}(A)$, 
 $A[I,J]=\big(a_{ij}~|~i \in I,~j\in J\big)$ means the submatrix (or minor) of $A$ with the row set $I$ and the column set $J$.  
The matrix is said to have dimension $n \times m$ if it has $n$ rows and $m$ columns. For a matrix $A$ (or a vector $v$) by $A^T$ 
(or $v^T$) we denoted its {\em transpose}. 
Note that each column of a matrix is a vector over the field $\F$. 
The rank of a matrix is the cardinality of the maximum sized collection of columns
which are linearly independent. Equivalently, the rank of a matrix is the maximum number
$k$ such that there is a $k \times k$ submatrix whose determinant is non-zero.
We can use the definition of rank of a matrix to certify the linear independence of
a set of vectors. A collection of $n$ vectors $v_1, v_2, \ldots, v_n$ are 
linearly independent if and only if the matrix $M$ formed by $v_1, v_2, \ldots, v_n$ 
as column vectors have rank $n$. Determinant of a $n\times n$ matrix $A$ is denoted by  $\mathrm{det}(A)$ and is defined as 
\[ \mathrm{det}(A)=\sum_{\sigma\in S_n} \mathrm{sgn}(\sigma)\prod_{i=1}^n A[i,\sigma(i)].\]
Here, $S_n$ is the set of all permutations of $\{1,\ldots,n\}$ and $\mathrm{sgn}(\sigma)$ denotes the signature of the permutation $\sigma$. 

Throughout the paper we use $\omega$ to denote the matrix multiplication exponent. The current best known bound on 
$\omega<2.373$~\cite{Williams12}. We use $e$ to denote the base of natural logarithm.





\subsection{Matroids}
In the next few subsections we give definitions related to matroids. For a broader overview on matroids we refer to~\cite{oxley2006matroid}. 
\begin{definition}
A pair \mat, where $E$ is a ground set and $\cal I$ is a family of subsets (called independent sets) of $E$, is a {\em matroid} if it satisfies the following conditions:
 \begin{enumerate}
 \item[\rm (I1)]  $\phi \in \cal I$. 
 \item[\rm (I2)]  If $A' \subseteq A $ and $A\in \cal I$ then $A' \in  \cal I$. 
 \item[\rm (I3)] If $A, B  \in \cal I$  and $ |A| < |B| $, then there is $ e \in  (B \setminus A) $  such that $A\cup\{e\} \in \cal I$.
 \end{enumerate}
\end{definition}
The axiom (I2) is also called the hereditary property and a pair $(E,\cal I)$  satisfying  only (I2) is called hereditary family.  
An inclusion wise maximal set of $\cal I$ is called a {\em basis} of the matroid. Using axiom (I3) it is easy to show that all the bases of a matroid  
have the same size. This size is called the {\em rank} of the matroid $M$, and is denoted by \rank{M}. 
 \subsubsection{Linear Matroids and Representable Matroids} 
 Let $A$ be a matrix over an arbitrary field $\mathbb F$ and let $E$ be the set of columns of $A$. For $A$, we define  matroid 
 \mat{} as follows. A set $X \subseteq E$ is independent (that is $X\in \cal I$) if the corresponding columns are linearly independent over $\mathbb F$.  
%
%
%
The matroids that can be defined by such a construction are called {\em linear matroids}, and if a matroid can be defined by a matrix $A$ over a 
field $\mathbb F$, then we say that the matroid is representable over $\mathbb F$. That is, a matroid \mat{} of rank $d$ is representable over a field 
$\mathbb F$ if there exist vectors in $\mathbb{F}^d$  corresponding to the elements such that  linearly independent sets of vectors 
 correspond to independent sets of the matroid. 
      A matroid \mat{}  is called {\em representable} or {\em linear} if it is representable over some field $\mathbb F$ 
   and the corresponding matrix is called the {\em representation matrix} of $M$.

\subsubsection{Truncation of a Matroid.} The {\em $t$-truncation} of a matroid \mat{}  is a matroid $M'=(E,{\cal I}')$ such that $S\subseteq  E$ is independent in $M'$  if and only if $|S| \leq t$ and $S$ is independent in $M$ (that is $S\in \cal I$).  

\begin{proposition}[{\cite[Proposition 3.7]{Marx09}}]
\label{prop:truncationrep}
Given a matroid $M$ with a representation $A$ over a finite field $\mathbb F$ and an integer $t$,  a representation of the $t$-truncation $M'$ 
can be found in randomized polynomial time.
\end{proposition}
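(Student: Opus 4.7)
The plan is to realize the $t$-truncation as a left-multiplication of $A$ by a random $t\times n$ matrix $R$ with entries drawn from a sufficiently large extension $\mathbb{K}\supseteq \mathbb{F}$, and then argue correctness via the Schwartz--Zippel lemma together with Cauchy--Binet.

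First I would set up the construction. Let $A$ be $n\times m$, let $N$ be a parameter to be fixed later, and choose a field extension $\mathbb{K}$ of $\mathbb{F}$ (e.g. $\mathbb{K}=\mathbb{F}_{p^{\ell'}}$ for $\ell'$ large enough) with a subset $S\subseteq \mathbb{K}$ of size at least $N$. Draw each entry of the $t\times n$ matrix $R$ independently and uniformly at random from $S$, and output $A'=R\cdot A$, which is $t\times m$.

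Next I would verify the independence-preservation property. For every subset $I\subseteq [m]$ with $|I|\le t$, let $A_I$ denote the $n\times|I|$ submatrix of columns indexed by $I$. One direction is immediate: if the columns of $A_I$ are linearly dependent, then the columns of $RA_I$ are linearly dependent too, so $I$ is dependent in $M'$. For the other direction, suppose the columns of $A_I$ are linearly independent, so $A_I$ has full column rank $|I|$. Treating the entries of $R$ as indeterminates $x_{ij}$, consider the polynomial $D_I(x)=\det(RA_I)$ when $|I|=t$, or more generally $D_I(x)=\det(RA_I[{:},{:}])$ after taking $|I|$ rows; by the Cauchy--Binet formula,
\[
\det\bigl(RA_I\bigr)=\sum_{J\subseteq[n],\,|J|=|I|}\det(R[{:},J])\cdot\det(A_I[J,{:}]).
\]
Since $A_I$ has rank $|I|$, at least one term $\det(A_I[J,{:}])$ is nonzero, and the minors $\det(R[{:},J])$ for different $J$ are distinct monomials in the variables $x_{ij}$; hence $D_I(x)$ is a nonzero polynomial of total degree at most $|I|\le t$.

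Finally I would apply a union bound. Multiply the polynomials $D_I(x)$ over all $I\subseteq[m]$ with $|I|\le t$ for which $A_I$ has full column rank; the product is a nonzero polynomial of degree at most $t\cdot\binom{m}{\le t}$. By the Schwartz--Zippel lemma, choosing $N$ to be larger than this degree by, say, a factor of two, the random substitution $x_{ij}\leftarrow R_{ij}$ makes every $D_I$ evaluate to a nonzero element of $\mathbb{K}$ with probability at least $1/2$, so $RA_I$ has rank $|I|$ for all such $I$ simultaneously. Since $|\mathbb{K}|$ can be made polynomial in $n,m,t$, field arithmetic in $\mathbb{K}$ takes polynomial time, and the whole algorithm runs in randomized polynomial time. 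The main point to be careful about is picking the field extension large enough that a sample space $S$ of size polynomial in $m^t$ exists inside $\mathbb{K}$, which is precisely the source of randomness and the only reason the procedure is not deterministic --- and which motivates the derandomization carried out in the present paper.
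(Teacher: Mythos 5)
The paper itself does not prove this proposition --- it imports it from Marx --- but the argument it alludes to in the introduction (left-multiply $A$ by a random $t\times n$ matrix over a sufficiently large extension field, then apply Schwartz--Zippel with a union bound over all column subsets of size at most $t$) is exactly what you propose, so your proof matches the intended standard route. Two small slips you should repair: the maximal minors $\det(R[\,:\,,J])$ of a generic matrix are not monomials (either note they have distinct leading monomials, or more simply substitute for $R$ a $0$--$1$ row-selection matrix picking $|I|$ rows where $A_I$ is nonsingular to see that $\det$ of the symbolic product is $\not\equiv 0$), and the field $\mathbb{K}$ must have size on the order of $t\binom{m}{\le t}\approx m^{O(t)}$, which is not polynomial in $n,m,t$ as you claim at one point, but still yields randomized polynomial time because elements of $\mathbb{K}$ are represented in $O(t\log m)$ bits.
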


\subsection{Derivatives}
Recall the definition of the formal derivative $\frac{d}{dx}$ of a function over $\mathbb R$.
We denote the $k$-th formal derivative of a function $f$ by $f^{(k)}$. 
We can extend this notion to finite fields.
%
%
%
%
%
%
Let $\F$ be a finite field and let $\F[X]$ be the ring of polynomials in $X$ over $\F$.
Let $P \in \F[X]$ be a polynomial of degree $n-1$, i.e. $ P = \sum_{i=0}^{n-1} a_i X^i $ where $a_i \in \F$.
Then we define the {\em formal derivative} of as $ P' = \sum_{i=1}^{n-1} ia_i X^{i-1} $.
We can extend this definition to the $k$-th formal derivative of $P$ as
$P^{(k)} = (P^{(k-1)})'$. Note that higher derivtives are defined iteratively 
using lower derivatives, thus they are also called iterated formal derivatives.

Formal derivatives continue to carry many of their properties in $\mathbb R$ 
to finite fields $\F$. However not all properties carry through. 
For example, in $\F_3[X]$ the polynomial $P(X) = X^3$ has all derivatives $0$.
%
%
To remedy such problems, we require the notion of {\em Hasse Derivatives}.
For a polynomial $P(X) \in \F[X]$, the $i$-th Hasse derivative $D^i(P)$ is defined as the 
coefficient of $Z^i$ in $P(X + Z)$. 
    $$ P(X + Z) = \sum_{i=0}^{\infty} D^i(P(X)) Z^i $$

We note some important properties of Hasse derivatives and how they relate to formal derivatives.
We refer to \cite{dvir2009extensions} and \cite{goldschmidt2003algebraic} for details.

\begin{lemma} \label{thm: hasse derivative properties}
Let $\F$ be a finite field of characteristic $p$, $P, Q \in \F[x]$.
Then the following holds :
\begin{enumerate}

    \item $D^k$ is a linear map from $\F[X]$ to $\F[X]$.

    \item Let $k$ be some number and let $k!$ be non-zero in $\F$ (i.e. $k! \neq 0 \mod p$).
          Then $k! . D^k(P) = P^{(k)}$. In particular $D^0(P) = f$ and $D^1(P) = P^{(1)}$.

\end{enumerate}
\end{lemma}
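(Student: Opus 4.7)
The plan is to derive both statements directly from the generating-function style definition $P(X+Z)=\sum_{i\ge 0} D^i(P(X))\,Z^i$, treating this as an identity in $\F[X][Z]$ and reading off coefficients of powers of $Z$.

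For part (1), I would fix scalars $a,b\in\F$ and polynomials $P,Q\in\F[X]$, then apply the definition to $aP+bQ$:
\begin{equation*}
(aP+bQ)(X+Z)=aP(X+Z)+bQ(X+Z)=\sum_{i\ge 0}\bigl(a\,D^i(P(X))+b\,D^i(Q(X))\bigr)Z^i.
\end{equation*}
On the other hand, by the definition applied to the polynomial $aP+bQ$ itself, the left-hand side also equals $\sum_{i\ge 0} D^i((aP+bQ)(X))\,Z^i$. Comparing coefficients of $Z^i$ in $\F[X][Z]$ yields $D^i(aP+bQ)=a\,D^i(P)+b\,D^i(Q)$, which is the claimed linearity.

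For part (2), I would first reduce to monomials: since both $D^k$ (by part 1) and $P\mapsto P^{(k)}$ are $\F$-linear, it suffices to verify $k!\cdot D^k(X^n)=(X^n)^{(k)}$ for every $n\ge 0$. Expanding via the binomial theorem gives
\begin{equation*}
(X+Z)^n=\sum_{i=0}^{n}\binom{n}{i}X^{n-i}Z^i,
\end{equation*}
so by definition $D^i(X^n)=\binom{n}{i}X^{n-i}$ (with the convention that this is $0$ when $i>n$). A straightforward induction on $k$ using the definition $P^{(k)}=(P^{(k-1)})'$ shows $(X^n)^{(k)}=n(n-1)\cdots(n-k+1)X^{n-k}=\frac{n!}{(n-k)!}X^{n-k}$. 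Since $\frac{n!}{(n-k)!}=k!\binom{n}{k}$ as an integer identity, and integer identities are preserved under the natural map $\Z\to\F$, we obtain $(X^n)^{(k)}=k!\cdot\binom{n}{k}X^{n-k}=k!\cdot D^k(X^n)$. The special cases $D^0(P)=P$ and $D^1(P)=P^{(1)}$ then follow: $D^0(P)$ is the $Z^0$-coefficient of $P(X+Z)$, which is $P(X)$, and $D^1(P)=P^{(1)}$ is the $k=1$ instance of what we just proved (note $1!=1\neq 0$ in every field).

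The only mild subtlety — and the one place to be careful — is the passage between the integer identity $k!\binom{n}{k}=n(n-1)\cdots(n-k+1)$ and its image in $\F$, since in characteristic $p$ both $k!$ and $\binom{n}{k}$ may individually vanish; the hypothesis $k!\neq 0$ in $\F$ is precisely what lets us read the identity in the required direction. Beyond this, everything is a direct coefficient-comparison argument, so no serious obstacle is expected.
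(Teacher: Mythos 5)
Your proof is correct. Note that the paper itself does not prove this lemma at all: it simply cites external references (Dvir et al.\ and Goldschmidt) ``for details,'' so there is no in-paper argument to compare against. Your self-contained argument is the standard one: linearity of $D^k$ by reading coefficients of $Z^k$ off the identity $(aP+bQ)(X+Z)=aP(X+Z)+bQ(X+Z)$ in $\F[X][Z]$, and the relation $k!\,D^k(P)=P^{(k)}$ by linear reduction to monomials, where $D^k(X^n)=\binom{n}{k}X^{n-k}$ comes from the binomial theorem and $(X^n)^{(k)}$ is the image in $\F$ of the falling factorial, so the integer identity $k!\binom{n}{k}=n(n-1)\cdots(n-k+1)$ pushed through the ring homomorphism $\Z\to\F$ finishes the job. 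One small refinement to your closing remark: the identity $k!\cdot D^k(P)=P^{(k)}$ as stated actually holds in every characteristic, with no hypothesis on $k!$, precisely because it is the image of an integer identity; the assumption $k!\neq 0$ in $\F$ is only needed for the way the paper later uses the lemma, namely to divide by $k!$ and conclude that $D^k(P)$ and $P^{(k)}$ agree up to a nonzero scalar (so that the Wronskian criterion is insensitive to which notion of derivative is used). Your proof establishes the stated lemma in full; just be aware that the ``required direction'' you mention is about invertibility of $k!$, not about the validity of the displayed identity.
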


%
%
%
%
Observe that the second statement in the above lemma shows that the value of $k$-th hasse derivatives 
and $k$-th formal derivatives differ only by a multiplicative value whenever $k!$ is non-zero in $\F$.
In particular when $k < p$, (the characteristic of the field $\F$) then $k!$ is always non-zero in $\F$.
In our setting this is always the case.

\section{Matroid Truncation}
In this section we give our one of the  main results.  We start with an introduction to our tools and then we give two 
results that give rank $k$-truncation of the given matrix $M$.

\subsection{Tools and Techniques}
In this subsection we collect some known results, definitions and derive some new connections among them that 
will be central to our results.  We also prove one of the main technical contribution of the paper in this subsection (Theorem~\ref{thm: alpha folded wronskian}). 

\subsubsection{Polynomials and Vectors}
Let $\F$ be a field. 
The set of polynomials $P_1(X), P_2(X), \ldots, P_k(X)$ in $\F[X]$ are said to be {\em linearly independent} over $\F$  if there 
doesn't exist $a_1, a_2, \ldots, a_k \in \F$, not all zeros such that 
$\sum_{i=1}^{k} a_i P_i(X) \equiv 0$. Otherwise they are said to be linearly dependent. 

\begin{define}
Let $P(X)$ be a polynomial of degree at most $n-1$ in $\F[X]$.
We define the vector $v$ corresponding to the polynomial $P(X)$ as follows: 
$v[j] = c_{j} \text{ where }  P(X) = \sum\limits_{j=1}^{n} c_{j} x^{j-1}.$ 
Similarly given a vector $v$ of length $n$ over $\F$,
we define the polynomial $P(X)$ in $\F[X]$ corresponding to the vector $v$ as follows: 
$ P(X) = \sum\limits_{j=1}^{n} v[j] x^{j-1} .$
\end{define}

%

The next lemma will be key to several proofs later.  
\begin{lemma}
\label{lemma: vectors and polynomials} 
Let $v_1, \ldots, v_k$ be vectors of length $n$ over $\F$ and
let $P_1(X),  \ldots, P_k(X)$ be the corresponding polynomials respectively.
Then $P_1(X),  \ldots, P_k(X)$ are linearly independent over $\F$ if and only if
$v_1, v_2, \ldots, v_k$ are linearly independent over $\F$.
\end{lemma}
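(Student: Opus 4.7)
The plan is to observe that the correspondence $v \mapsto P_v$ defined in the preceding definition is an $\F$-linear isomorphism between $\F^n$ and $\F[X]^{<n}$, and then to read off linear (in)dependence on either side through this isomorphism. Concretely, I would verify linearity first: for scalars $a_1,\ldots,a_k \in \F$ and vectors $v_1,\ldots,v_k \in \F^n$, a direct coefficient-wise comparison gives
\[
\sum_{i=1}^{k} a_i P_i(X) \;=\; \sum_{i=1}^{k} a_i \sum_{j=1}^{n} v_i[j]\, X^{j-1} \;=\; \sum_{j=1}^{n} \Bigl(\sum_{i=1}^{k} a_i v_i[j]\Bigr) X^{j-1},
\]
so the polynomial on the left is exactly the polynomial associated with the vector $\sum_i a_i v_i$.

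Next I would invoke the fact that $\{1, X, X^2, \ldots, X^{n-1}\}$ is an $\F$-basis of $\F[X]^{<n}$, so a polynomial in $\F[X]^{<n}$ is the zero polynomial if and only if every one of its coefficients vanishes. Applied to the display above, $\sum_i a_i P_i(X) \equiv 0$ in $\F[X]$ if and only if $\sum_i a_i v_i[j] = 0$ for every $j \in \{1,\ldots,n\}$, i.e.\ if and only if $\sum_i a_i v_i = 0$ in $\F^n$.

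From this equivalence the lemma follows immediately: a nontrivial dependence $(a_1,\ldots,a_k) \neq (0,\ldots,0)$ among the $P_i$'s exists precisely when the same tuple witnesses a nontrivial dependence among the $v_i$'s, and vice versa. I do not anticipate any real obstacle here; the only thing to be careful about is matching the indexing convention in the definition (coefficient of $X^{j-1}$ is $v[j]$) so that the bijection between coefficients and vector entries is stated unambiguously.
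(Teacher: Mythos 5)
Your proposal is correct and rests on exactly the same observation as the paper's proof: a linear combination $\sum_i a_i P_i(X)$ vanishes identically if and only if each coefficient $\sum_i a_i v_i[j]$ vanishes, i.e.\ if and only if $\sum_i a_i v_i = 0$. The paper merely phrases this as two contrapositive directions rather than as a single linear-isomorphism statement, so there is no substantive difference.
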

\begin{proof}  For $i \in \{1 \ldots k\}$, let $v_i = (c_{i1}, \ldots, c_{in})$ and let 
$P_i(X) = \sum_{j=1}^{n} c_{ij}x^{j-1}$ be the polynomial corresponding to $v_i$. 

We first prove the forward direction of the proof. For a contradiction, assume that 
$v_1, \dots, v_k$ are linearly dependent.
Then there exists $a_1, \ldots, a_k \in \F$, not all zeros, such that
$\sum_{i=1}^{k} a_i v_i = 0$.
This implies that for each $j \in \{1, \ldots n\}$, $\sum_{i=1}^{k} a_i v_i[j] = 0$.
Since $v_i[j] = c_{ij}$, we have $\sum_{i=1}^{k} a_i c_{ij} = 0$,
which implies that $\sum_{i=1}^{k} a_i c_{ij} x^{j-1} = 0$.
Summing over all these expressions we get $\sum_{i=1}^{k} a_i P_i(X) \equiv 0$, a contradiction. This completes the proof in the 
forward direction. 

Next we prove the reverse direction of the lemma. To the contrary assume that 
$P_1(X),  \ldots, P_k(X)$ are linearly dependent.
Then there exists $a_1, \ldots, a_k \in \F$, not all zeros, such that
$\sum_{i=1}^{k} a_i P_i(X) \equiv 0$. 
This implies that for each  $j \in \{1, \ldots, n\}$,  the coefficients of $x^{j-1}$ satisfy
$\sum_{i=1}^{k} a_i c_{ij} = 0$.
Since $c_{ij}$ is the $j$-th entry of the vector $v_i$ for all $i$ and $j$, we have
$\sum_{i=1}^{k} a_i v_i = 0$.
Thus the vectors $v_1,  \ldots, v_k$ are linearly dependent, a contradiction to the given assumption. 
This completes this proof. 
\end{proof}

We will use this claim to view the column vectors of a matrix $M$ over a field $\F$ 
as elements in the ring $\F[X]$ and  in the field of fractions $\F(X)$. 
We shall then use properties of polynomials to deduce properties of these
column vectors and vice versa.

\subsubsection{Wronskian}
Let  $\F$ be a field with characteristic at least $n$. Consider a collection of polynomials $P_1(X),  \ldots, P_k(X)$ from $\F[X]$ of degree at most $n-1$. 
We define the following matrix, called the {\em Wronskian}, of $P_1(X),  \ldots, P_k(X)$ 
as follows. 

$$ 
    W(P_1(X),  \ldots, P_k(X)) =   W(P_1,  \ldots, P_k) = \begin{pmatrix}
                                P_1(X)         & P_2(X)         & \ldots & P_k(X)          \\
                                P_1^{(1)}(X)   & P_2^{(1)}(X)   & \ldots & P_k^{(1)}(X)    \\
                                \vdots      & \vdots      & \ddots & \vdots       \\
                                P_1^{(k-1)}(X) & P_2^{(k-1)}(X) & \ldots & P_k^{(k-1)}(X)    
                                \end{pmatrix}_{k \times k}
$$

Note that, the determinant of the above matrix actually yields a polynomial.  For our purpose we will need the following well known result.
\begin{theorem} [\cite{bostan2010wronskians, garcia1987wronskians, muir1882treatise}]
\label{thm: classic wronskian}
Let  $\F$ be a field and $P_1(X),  \ldots, P_k(X)$ be a set of polynomials from $\F[X]^{<n}$
and let $\ch(\F)>n$. 
Then  $P_1(X),  \ldots, P_k(X)$  are linearly independent over $\F$ if and only if   the  Wronskian determinant
$\mathrm{det}(W(P_1(X),  \ldots, P_k(X)))\not\equiv 0$ in $\F[X]$. 
\end{theorem}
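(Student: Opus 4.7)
The proof splits into two directions. For the easy backward direction (linear dependence implies the Wronskian vanishes), the plan is: given any nontrivial relation $\sum_{i=1}^k a_i P_i \equiv 0$ with $a_i \in \F$, apply the formal derivative $j$ times---which is $\F$-linear---to obtain $\sum_{i=1}^k a_i P_i^{(j)} \equiv 0$ for every $j \geq 0$. The same coefficients $a_i$ then witness a linear dependence among the columns of $W(P_1,\ldots,P_k)$, so its determinant is identically zero.

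For the forward direction, I will induct on $k$. The base case $k=1$ is trivial since $W(P_1)=P_1$. For the inductive step, assume $P_1,\ldots,P_k$ are linearly independent and, for contradiction, $\mathrm{det}\,W(P_1,\ldots,P_k)\equiv 0$. By the inductive hypothesis applied to the (still independent) subfamily $P_1,\ldots,P_{k-1}$, the sub-Wronskian $w(X):=\mathrm{det}\,W(P_1,\ldots,P_{k-1})$ is not identically zero. The next step is to build a linear differential operator by expanding
\[
L(Y) \;=\; \det \begin{pmatrix} P_1 & \cdots & P_{k-1} & Y \\ P_1^{(1)} & \cdots & P_{k-1}^{(1)} & Y^{(1)} \\ \vdots & & \vdots & \vdots \\ P_1^{(k-1)} & \cdots & P_{k-1}^{(k-1)} & Y^{(k-1)} \end{pmatrix} \;=\; \sum_{j=0}^{k-1} c_j(X)\, Y^{(j)}
\]
along the last column; its leading coefficient is $c_{k-1}(X)=w(X)\not\equiv 0$. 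By construction $L(P_i)=0$ for $i\le k-1$ (the determinant has a repeated column), and $L(P_k)=\mathrm{det}\,W(P_1,\ldots,P_k)\equiv 0$ by assumption, so all $k$ polynomials lie in the kernel of $L$.

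The last and hardest step is to show that $\ker(L)\cap \F[X]^{<n}$ has $\F$-dimension at most $k-1$, which contradicts the existence of $k$ independent kernel elements. I would first pass to a finite extension $\F'\supseteq\F$ containing a point $x_0$ with $w(x_0)\ne 0$ (such a point exists since $w\not\equiv 0$, and $\F$-linear (in)dependence of vectors from $\F^n$ is unaffected by extension of scalars). Since $c_{k-1}(x_0)\ne 0$, the equation $L(Q)=0$ expresses $Q^{(k-1)}(x_0)$ as an $\F'$-linear function of the initial data $Q(x_0), Q'(x_0),\ldots,Q^{(k-2)}(x_0)$; differentiating $L(Q)\equiv 0$ and evaluating at $x_0$ then recursively yields every higher $Q^{(j)}(x_0)$ from the same $k-1$ initial values. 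The hypothesis $\ch(\F)>n$ enters precisely here: it guarantees that $j!$ is invertible in $\F$ for all $j<n$, so every $Q\in\F'[X]^{<n}$ is reconstructed from its Taylor coefficients $Q^{(j)}(x_0)/j!$ at $x_0$, and hence from the $k-1$ initial values. This gives an injection $\ker(L)\cap \F'[X]^{<n}\hookrightarrow (\F')^{k-1}$, completing the contradiction. The main obstacle is exactly this dimension bound---without the characteristic hypothesis, Taylor expansion may fail (for example, $X^p\in\F_p[X]$ has zero formal derivative) and the $\F$-dimension of the solution space of a $(k-1)$-st order operator need not be $\le k-1$, which is precisely why Theorem~\ref{main:thm: alpha folded wronskian} via folded Wronskians is subsequently required in small characteristic.
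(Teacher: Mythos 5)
Your argument is correct, but note that the paper does not prove this statement at all: Theorem~\ref{thm: classic wronskian} is imported as a classical result with citations to the literature, and the paper's own technical work goes into the folded-Wronskian analogue (Theorem~\ref{thm: alpha folded wronskian}), which is proved by a quite different, derivative-free route (cofactor-built relations $\delta_i(X)$ plus Lemma~\ref{lemma: poly multiple}). So there is no in-paper proof to compare against; what you give is a complete, self-contained proof along the standard lines: the backward direction by linearity of the formal derivative, and the forward direction by induction, packaging the first $k-1$ polynomials into the $(k-1)$-st order operator $L$ with leading coefficient the nonvanishing sub-Wronskian, and then bounding $\dim\bigl(\ker(L)\cap \F'[X]^{<n}\bigr)\le k-1$ via the recursion for higher derivatives at a point $x_0$ with $w(x_0)\neq 0$ (passing to a finite extension to find such a point is legitimate, since linear independence and identical vanishing are preserved under extension of scalars). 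The one step that genuinely needs the hypothesis $\ch(\F)>n$ is exactly where you place it: recovering $Q$ from its derivative values at $x_0$ requires $j!$ invertible for $j<n$ so that the Taylor expansion is valid; your remark that this is what breaks in small characteristic (e.g.\ $X^p$ over $\F_p$) and forces the paper's folded-Wronskian machinery is accurate. Two minor points worth making explicit if you write this up: the injectivity step should state that vanishing of $Q^{(j)}(x_0)$ for $j\le k-2$ propagates, via the differentiated identities, to all $j\le n-1$ (derivatives of order $\ge n$ vanish automatically for $Q$ of degree $<n$), and the cofactor expansion of $L(Y)$ along the last column should be flagged as giving $L(P_k)=\det W(P_1,\ldots,P_k)$ exactly, which is where the contradiction hypothesis enters.
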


The notion of Wronskian dates back to 1812~\cite{muir1882treatise}. We refer to~\cite{bostan2010wronskians, garcia1987wronskians}  for some recent variations and proofs. The switch between usual derivatives and Hasse derivatives multiplies the Wronskian determinant by a constant, which is non-zero as long as $n< \ch(\F)$, and thus this criterion works with both notions. 
Observe that the  Wronskian determinant is a polynomial of degree at most $nk$ in $\F[X]$.
Thus to test if such a polynomial (of degree $d$) is identically zero, 
we only need to evaluate it at $d+1$ arbitrary points of the field $\F$, 
and check if it is zero at all those points. 

\subsubsection{Folded Wronskian}
The above definition of Wronskian requires us to compute derivatives of 
degree $(n-1)$ polynomials. As noted earlier, they are well defined only if the underlying field has characteristic greater 
than or equal to $n$. However, we  might have matrix over fields of small characteristic.  For these kind of fields, we have the 
notion of {\em Folded Wronskian} which  was recently introduced by Guruswami and Kopparty in the context of subspace design~\cite{GuruswamiK13}.

Consider a collection of polynomials $P_1(X),  \ldots, P_k(X)$  from ${\mathbb F} [X]$ of degree at most $(n-1)$.
Further, let $\F$ be of order at least $n+1$,  and  $\alpha$ be an element of $\F^*$.  
We define the the {\em $\alpha$-folded Wronskian}, of $P_1(X),  \ldots, P_k(X)$ 
as follows. 

$$ 
    W_\alpha(P_1(X),  \ldots, P_k(X)) =    W_\alpha (P_1,  \ldots, P_k) = 
    \begin{pmatrix}
        P_1(X)              & P_2(X)                & \ldots & P_k(X)                \\
        P_1(\alpha X)       & P_2(\alpha X)        & \ldots & P_k(\alpha X)         \\
        \vdots              & \vdots                & \ddots & \vdots                \\
        P_1(\alpha^{k-1} X) & P_2(\alpha^{k-1} X)   & \ldots & P_k(\alpha^{k-1} X)   \\    
    \end{pmatrix}_{k \times k}
$$
As before, the determinant of the above matrix is a polynomial of degree at most $nk$ in $\F[X]$. 
It is important to note that Guruswami and Kopparty  used the above notion of  $\alpha$-folded Wronskian only for those 
$\alpha$ that are {\em primitive element} of $\F$.  In particualr, they proved 
the following result about $\alpha$-folded Wronskians~\cite[Lemma~12]{GuruswamiK13}. 
\begin{theorem}[\cite{GuruswamiK13}]
\label{thm: folded wronskian}
Let $\F$ be a field of order $>n$, $\alpha$ be a primitive element of $\F$ and let 
$P_1(X),  \ldots, P_k(X)$ be a set of polynomials from $\F[X]^{<n}$. 
Then $P_1(X),  \ldots, P_k(X)$ are linearly independent over $\F$ if and only if the $\alpha$-folded Wronskian 
determinant 
$\mathrm{det}(W_\alpha(P_1,  \ldots, P_k))\not\equiv 0$ in $\F[X]$.  
\end{theorem}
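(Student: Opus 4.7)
The forward direction is immediate: if $\sum_{i=1}^{k} a_i P_i(X) \equiv 0$ with $a_i \in \F$ not all zero, then substituting $X \mapsto \alpha^j X$ gives $\sum_i a_i P_i(\alpha^j X) = 0$ for every $j \in \{0,1,\ldots,k-1\}$, exhibiting a nontrivial $\F$-linear dependence among the columns of $W_\alpha(P_1,\ldots,P_k)$. Hence $\mathrm{det}(W_\alpha) \equiv 0$.

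For the reverse direction I would argue the contrapositive: assuming $P_1,\ldots,P_k$ are $\F$-linearly independent, I exhibit a nonzero coefficient in the polynomial $\mathrm{det}(W_\alpha(P_1,\ldots,P_k)) \in \F[X]$. The key move is to replace $\{P_i\}$ by a more convenient basis of $V:=\mathrm{span}_\F(P_1,\ldots,P_k)$, namely a basis $Q_1,\ldots,Q_k$ with strictly increasing $X$-degrees $e_1<e_2<\cdots<e_k\leq n-1$. Such a basis exists by a standard filtration argument: the dimensions $\dim(V\cap \F[X]^{\leq d})$ increase from $0$ to $k$ as $d$ goes from $-1$ to $n-1$, with exactly $k$ unit jumps occurring at $d=e_1,\ldots,e_k$; picking any $V$-element realizing each jump gives the desired basis. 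If $A=(a_{ij})$ is the invertible change-of-basis matrix defined by $P_i=\sum_j a_{ij}Q_j$, a direct check yields $W_\alpha(P_1,\ldots,P_k)=W_\alpha(Q_1,\ldots,Q_k)\cdot A^T$, so the two determinants differ by the nonzero scalar $\mathrm{det}(A)$, and it suffices to show $\mathrm{det}(W_\alpha(Q_1,\ldots,Q_k))\not\equiv 0$.

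I then extract the coefficient of the top monomial in $X$. The $(j,i)$-entry $Q_i(\alpha^j X)$ (with rows indexed $j\in\{0,\ldots,k-1\}$) is an $X$-polynomial of degree exactly $e_i$ whose $X^{e_i}$-coefficient is $q_{i,e_i}\alpha^{je_i}$, where $q_{i,e_i}\neq 0$ denotes the leading coefficient of $Q_i$. Consequently every Leibniz term in the expansion of the determinant has total $X$-degree exactly $M:=e_1+e_2+\cdots+e_k$, and the coefficient of $X^M$ is
\begin{equation*}
\Bigl(\prod_{i=1}^{k} q_{i,e_i}\Bigr)\cdot \mathrm{det}\bigl((\alpha^{e_i})^{j}\bigr)_{0\leq j\leq k-1,\,1\leq i\leq k} \;=\; \Bigl(\prod_{i=1}^{k} q_{i,e_i}\Bigr)\cdot \prod_{1\leq i<i'\leq k}\bigl(\alpha^{e_{i'}}-\alpha^{e_i}\bigr),
\end{equation*}
the second factor being the Vandermonde determinant in the nodes $\alpha^{e_1},\ldots,\alpha^{e_k}$.

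The product of leading coefficients is nonzero by construction. The Vandermonde factor is nonzero iff the $\alpha^{e_i}$ are pairwise distinct, equivalently iff no difference $e_{i'}-e_i$ (for $i<i'$) is a nonzero multiple of the order of $\alpha$. Since $0<e_{i'}-e_i\leq n-1$ and the hypothesis forces the order of $\alpha$ to exceed $(n-1)(k-1)\geq n-1$ whenever $k\geq 2$ (the case $k=1$ is trivial because then $\mathrm{det}(W_\alpha)=P_1\not\equiv 0$), no such multiple can occur. Therefore the coefficient of $X^M$ is nonzero and $\mathrm{det}(W_\alpha)\not\equiv 0$. The one subtle step, in my view, is the distinct-degree basis reduction and the corresponding change-of-basis identity; once these are in place, the Vandermonde calculation closes the argument cleanly.
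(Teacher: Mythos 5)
Your proof is correct, but it is not the paper's argument --- in fact the paper never proves this statement itself: Theorem~\ref{thm: folded wronskian} is quoted from Guruswami and Kopparty, whose proof (as the paper remarks) exploits primitivity of $\alpha$ through an irreducible polynomial, and the paper's own technical work (Lemmata~\ref{lem:foldedwronforward} and~\ref{lem:foldedwronreverse}) instead establishes the variant Theorem~\ref{thm: alpha folded wronskian} by induction on $k$: cofactor polynomials $\delta_i(X)$ of degree at most $(n-1)(k-1)$ are extracted from $(k-1)\times(k-1)$ folded Wronskians and shown pairwise proportional via Lemma~\ref{lemma: poly multiple}, which is exactly where the hypothesis that $\alpha$ has order $>(n-1)(k-1)$ enters. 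Your route --- pass to a basis $Q_1,\ldots,Q_k$ of $\mathrm{span}_\F(P_1,\ldots,P_k)$ with distinct degrees $e_1<\cdots<e_k\le n-1$, observe $W_\alpha(P_1,\ldots,P_k)=W_\alpha(Q_1,\ldots,Q_k)A^{T}$ with $A$ invertible over $\F$, and read off the coefficient of $X^{e_1+\cdots+e_k}$ as $\bigl(\prod_i q_{i,e_i}\bigr)\prod_{i<i'}\bigl(\alpha^{e_{i'}}-\alpha^{e_i}\bigr)$ --- is elementary, avoids both the induction and the proportionality lemma, and needs only that $\alpha^{e_1},\ldots,\alpha^{e_k}$ are pairwise distinct, i.e.\ that the order of $\alpha$ is at least $n$. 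Since a primitive element of a field of order $>n$ has order $|\F|-1\ge n$, the hypothesis of the theorem suffices, and your argument in fact yields a statement stronger than both this theorem and the paper's Theorem~\ref{thm: alpha folded wronskian} (whose condition $\mathrm{ord}(\alpha)>(n-1)(k-1)$ also implies $\mathrm{ord}(\alpha)\ge n$); that is what the leading-monomial/Vandermonde computation buys over the paper's machinery. One slip to reword: you justify distinctness of the $\alpha^{e_i}$ by asserting that the hypothesis forces $\mathrm{ord}(\alpha)>(n-1)(k-1)$; that is the hypothesis of the other theorem and does not follow from primitivity together with $|\F|>n$ once $k\ge 3$. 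What you actually need, and what the stated hypothesis does give, is $\mathrm{ord}(\alpha)=|\F|-1\ge n>n-1\ge e_{i'}-e_i>0$, so no difference of exponents can be a multiple of the order and the argument stands after correcting that sentence.
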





Theorem~\ref{thm: folded wronskian} 
requires a primitive element $\alpha$ of the underlying field $\F$. However, finding a primitive element in a finite field is a non-trivial problem
and currently there are no deterministic polynomial time algorithm known for this problem. To  overcome this difficulty, we prove a generalization of Theorem~\ref{thm: folded wronskian}. 
We relax the requirement that $\alpha$ must be a primitive element of 
the field $\F$ and only require that $\alpha$ be an element of {\em sufficiently large} order. Finding an element of large order 
is slightly easier task than finding a primitive element of a  finite field $\F$. We will see that for our purpose this will be sufficient.  
This result  will be our main technical tool. The proof for next theorem is very different than the one for Theorem~\ref{thm: folded wronskian}. Theorem~\ref{thm: folded wronskian}  crucially uses the fact that $\alpha$ is a primitive element.  This is used to define an irreducible polynomial which is key to the arguments used in~\cite[Lemma~12]{GuruswamiK13}. We do not see ways to make the arguments used for the proof of  Theorem~\ref{thm: folded wronskian}  to go through for our case.  Our theorem is as follows. 

%
\begin{theorem}[Theorem ~\ref{main:thm: alpha folded wronskian}, restated]
\label{thm: alpha folded wronskian}
Let $\F$ be a field, $\alpha$ be an element  of $\F$ of order $>(n-1)(k-1)$ and let 
$P_1(X),  \ldots, P_k(X)$ be a set of polynomials from $\F[X]^{<n}$. 
Then $P_1(X),  \ldots, P_k(X)$ are linearly independent over $\F$ if and only if the $\alpha$-folded Wronskian 
determinant 
$\mathrm{det}(W_\alpha(P_1, \ldots, P_k))\not\equiv 0$ in $\F[X]$.  
\end{theorem}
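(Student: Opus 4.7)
The forward direction is immediate: any $\F$-linear relation $\sum_i c_i P_i\equiv 0$ pulls back under $X\mapsto\alpha^j X$ to give $\sum_i c_i P_i(\alpha^j X)\equiv 0$ for every $j$, which is precisely a column dependence of $W_\alpha$, forcing $\det W_\alpha\equiv 0$.

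For the converse I would argue the contrapositive by induction on $k$, the base case $k=1$ being trivial. If $\det(W_\alpha(P_1,\ldots,P_k))\equiv 0$, then the columns are dependent over $\F(X)$, and after clearing denominators there exist $a_1,\ldots,a_k\in\F[X]$, not all zero, satisfying $\sum_{i=1}^k a_i(X)P_i(\alpha^j X)=0$ for $j=0,\ldots,k-1$. If some $a_i$ vanishes, the first $k-1$ of these equations witness a column dependence on the $(k-1)\times(k-1)$ folded Wronskian $W_\alpha(P_1,\ldots,\hat P_i,\ldots,P_k)$, and the inductive hypothesis finishes. Otherwise all $a_i\neq 0$; writing $\sigma$ for $X\mapsto\alpha X$, I form the combination $a_1(\sigma X)\cdot(\text{original at }j)-a_1(X)\cdot(\sigma\text{-shift at }j)$ to cancel the $P_1$ column, producing $b_i(X):=a_1(\sigma X)a_i(X)-a_1(X)a_i(\sigma X)$ for $i\ge 2$ with $\sum_{i=2}^k b_i(X)P_i(\alpha^j X)=0$ for $j=1,\ldots,k-1$. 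If some $b_i\not\equiv 0$, this is a column dependence of $W_\alpha(P_2,\ldots,P_k)$ pulled back by $X\mapsto\alpha X$; since that substitution is an automorphism of $\F[X]$, we get $\det(W_\alpha(P_2,\ldots,P_k))\equiv 0$ and induction applies.

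The substantive case---where I expect the main difficulty, and the only place the order hypothesis is really used---is when all $b_i\equiv 0$, so each ratio $r_i:=a_i/a_1\in\F(X)$ is fixed by $\sigma$. Since $\alpha$ has order $N>(n-1)(k-1)$ in $\F^*$, the field $\F$ contains all $N$-th roots of unity; hence $\F(X)/\F(X^N)$ is a Galois extension of degree $N$ with group $\langle\sigma\rangle$ and fixed field $\F(X^N)$, so every $r_i\in\F(X^N)$. Dividing the $j=0$ equation by $a_1$ and clearing an $\F[X^N]$-denominator produces $\tilde r_i\in\F[Y]$, $Y:=X^N$, not all zero, with $\sum_i\tilde r_i(X^N)P_i(X)=0$ in $\F[X]$. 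Using the bound a second time as $N\ge n$, every $P_i$ has degree strictly less than $N$, so in the direct-sum decomposition $\F[X]=\bigoplus_{j=0}^{N-1}X^j\F[X^N]$ each $P_i=\sum_{j=0}^{n-1}c_{ij}X^j$ sits with purely \emph{constant} coordinates $c_{ij}\in\F$. Matching components then gives $\sum_i c_{ij}\tilde r_i(Y)=0$ in $\F[Y]$ for every $j$, and extracting the coefficient of any $Y^m$ at which some $\tilde r_i$ is nonzero produces a nontrivial $\F$-linear relation on the coefficient vectors of $P_1,\ldots,P_k$; by Lemma~\ref{lemma: vectors and polynomials} the $P_i$ themselves are $\F$-linearly dependent, completing the induction.
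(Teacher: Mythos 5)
Your proof is correct, but the way you close the argument is genuinely different from the paper's. Both proofs share the same skeleton: induction on $k$, the easy forward direction, and an elimination step producing the cross-terms $b_i(X)=a_1(\alpha X)a_i(X)-a_1(X)a_i(\alpha X)$ (in the paper's Lemma~\ref{lem:foldedwronreverse} these are $\delta_k(\beta X)\delta_i(X)-\delta_k(X)\delta_i(\beta X)$ with $\beta=\alpha^{-1}$). The divergence is in the terminal case where all cross-terms vanish. The paper first manufactures a specific dependence whose coefficients $\delta_i$ are $(k-1)\times(k-1)$ cofactors of $W_\alpha$ --- this is why it needs the sub-case that every proper subset of columns is independent, and it is exactly what bounds $\deg\delta_i$ by $(n-1)(k-1)$ --- and then invokes Lemma~\ref{lemma: poly multiple} to force $\delta_i=\lambda_i\delta_k$ with $\lambda_i\in\F$. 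You instead work with an arbitrary polynomial dependence, note that vanishing of all $b_i$ means each ratio $a_i/a_1$ is fixed by $X\mapsto\alpha X$, identify the fixed field with $\F(X^N)$ ($N$ the order of $\alpha$, via Artin's lemma or a direct computation), and then use the decomposition $\F[X]=\bigoplus_{j=0}^{N-1}X^j\F[X^N]$ together with $\deg P_i<n\leq N$ to extract a constant-coefficient relation, finishing with Lemma~\ref{lemma: vectors and polynomials}. This buys two things: you avoid the cofactor bookkeeping and the nonvanishing sub-case entirely, and your argument only uses that the order of $\alpha$ is at least $n$ (for $k\geq 2$), which is weaker than the hypothesis order $>(n-1)(k-1)$, so you in fact prove a slightly stronger statement. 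What the paper's route buys is that Lemma~\ref{lemma: poly multiple} is pure coefficient comparison (no field-theoretic input) and applies verbatim when $\alpha$ has infinite order (e.g.\ over $\mathbb{Q}$), a case the theorem's wording allows but your fixed-field step silently excludes; for you this is a one-line patch, since for $\alpha$ of infinite order the fixed field of $X\mapsto\alpha X$ is just $\F$, and the $j=0$ equation divided by $a_1$ already yields a nontrivial relation $\sum_i r_iP_i\equiv 0$ with $r_i\in\F$ and $r_1=1$.
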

%
In what follows we build towards the proof of Theorem~\ref{thm: alpha folded wronskian}.  
For the sake of brevity, we use $W_\alpha$ to denote the 
matrix  $W_\alpha (P_1,  \ldots, P_k)$. 
We use the notation $Z_1, Z_2, \ldots, Z_k$ 
to denote the columns 
of $W_\alpha$.  That is, $Z_i=(P_i(X),P_i(\alpha X),\ldots,P_i(\alpha^{k-1}X))^T$. 
Observe that $W_\alpha$ is a matrix over the field $\F(X)$, with entries from the ring $\F[X]$.
When we talk about linear independence of $\{Z_i\}_{i=1}^k$, the underlying field is $\F(X)$.
We recall the following well known lemma about non-zero determinant of a square matrix 
and the linear independence of it's columns.

\begin{lemma} \label{lemma: det and lin ind}
    Let $M$ be a $n\times n$ over a field $\F$. 
    Then $\det(M) \not = 0$ if and only if the columns of $M$ are linearly independent over $\F$.
\end{lemma}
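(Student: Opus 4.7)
This is a classical fact from elementary linear algebra, so the plan is simply to chain together two standard equivalences: $\det(M)\neq 0$ iff $M$ is invertible, and $M$ (an $n\times n$ matrix) is invertible iff its $n$ columns are linearly independent (equivalently, form a basis of $\F^n$). Since the statement is so standard, I would either cite a linear algebra text or give a short self-contained argument using the definition of $\det$ recorded earlier in the Preliminaries, where the determinant is described as a signed sum over permutations. The key ingredients we will use are the multilinearity of $\det$ in its columns and the alternating property (a matrix with two identical columns has determinant zero), both of which follow directly from that formula.

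For the forward direction, I would argue by contraposition. Suppose the columns $C_1,\ldots,C_n$ of $M$ are linearly dependent, so some $C_j=\sum_{i\neq j}a_i C_i$. By multilinearity of $\det$ in the $j$-th column, $\det(M)=\sum_{i\neq j}a_i\det(M^{(i,j)})$, where $M^{(i,j)}$ is the matrix obtained from $M$ by replacing column $j$ with column $C_i$. Each such $M^{(i,j)}$ has two identical columns, hence has determinant zero by the alternating property, and thus $\det(M)=0$.

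For the reverse direction, assume the columns of $M$ are linearly independent. Since they are $n$ vectors in $\F^n$, they form a basis, and so for each standard basis vector $e_j$ there is a unique $x_j\in\F^n$ with $Mx_j=e_j$. Assembling the $x_j$ into a matrix $N$ yields $MN=I$, and then the multiplicativity of the determinant gives $\det(M)\det(N)=\det(I)=1$, so $\det(M)\neq 0$.

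There is no real obstacle here; the only question is cosmetic, namely which formulation of the basic theory to take as a starting point. I would present the multilinearity/alternating argument for the forward direction because it flows naturally from the Leibniz formula $\det(A)=\sum_{\sigma\in S_n}\mathrm{sgn}(\sigma)\prod_i A[i,\sigma(i)]$ already written down in the preliminaries, and the basis/inverse argument for the reverse direction because it uses only the defining property of linear independence together with multiplicativity of $\det$.
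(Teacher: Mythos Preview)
Your argument is correct. The paper does not actually prove this lemma: it is introduced with the phrase ``We recall the following well known lemma'' and stated without proof, so your self-contained treatment via multilinearity/alternating for one direction and multiplicativity of $\det$ for the other is more than the paper itself supplies.
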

The next lemma will  prove the reverse direction of  Theorem~\ref{thm: alpha folded wronskian}. 

\begin{lemma} 
\label{lem:foldedwronforward}
 If $P_1(X), \ldots, P_k(X)$ are linearly dependent over $\F$, then $\det(W_\alpha) \equiv 0$.
\end{lemma}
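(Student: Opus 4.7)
The plan is to exploit the linear dependence relation directly to produce the same relation among the columns of $W_\alpha$, and then invoke Lemma~\ref{lemma: det and lin ind}.

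First, assume that $P_1(X), \ldots, P_k(X)$ are linearly dependent over $\F$. Then by definition there exist constants $a_1, \ldots, a_k \in \F$, not all zero, such that $\sum_{i=1}^{k} a_i P_i(X) \equiv 0$ as a polynomial in $\F[X]$. The key observation is that this identity remains valid under any substitution $X \mapsto \beta X$ for $\beta \in \F$. In particular, for every $j \in \{0, 1, \ldots, k-1\}$, substituting $X \mapsto \alpha^j X$ yields
\[
\sum_{i=1}^{k} a_i P_i(\alpha^j X) \equiv 0.
\]

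Second, recall that the $i$-th column of $W_\alpha$ is $Z_i = (P_i(X), P_i(\alpha X), \ldots, P_i(\alpha^{k-1}X))^T$. The displayed identities above, taken together for $j=0, 1, \ldots, k-1$, say exactly that $\sum_{i=1}^{k} a_i Z_i$ equals the zero column vector in $\F[X]^k$ (equivalently, in $\F(X)^k$). Since the scalars $a_i$ lie in $\F \subseteq \F(X)$ and are not all zero, the columns $Z_1, \ldots, Z_k$ are linearly dependent over the field $\F(X)$.

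Third, by Lemma~\ref{lemma: det and lin ind} applied over the field $\F(X)$, linear dependence of the columns of $W_\alpha$ forces $\det(W_\alpha) \equiv 0$ in $\F(X)$, and hence in $\F[X]$ since each entry of $W_\alpha$ lies in $\F[X]$.

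There is essentially no obstacle here; this direction does not use the order hypothesis on $\alpha$ at all (the element $\alpha$ could even be arbitrary). The substantive work of the theorem lies in the reverse direction, where the hypothesis $\operatorname{ord}(\alpha) > (n-1)(k-1)$ will be needed to ensure that a nontrivial polynomial relation among the columns of $W_\alpha$ can be converted back into a linear dependence of the $P_i$ over $\F$.
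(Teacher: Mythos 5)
Your proof is correct and matches the paper's own argument essentially step for step: substitute $X \mapsto \alpha^j X$ into the dependence relation, conclude $\sum_i a_i Z_i = 0$ over $\F \subseteq \F(X)$, and apply Lemma~\ref{lemma: det and lin ind}. Your remark that the order hypothesis on $\alpha$ is not needed for this direction is also accurate.
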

\begin{proof}
Since  $P_1(X),\ldots, P_k(X)$ are linearly dependent over $\F$,  there exist $\lambda_1,  \ldots, \lambda_k \in \F$ (not all equal to zero)  such that  $\sum_{i=1}^{k} \lambda_i P_i(X) = 0$. Therefore, for all $ d \in \{0, 1, \ldots, k-1\}$ we have that 
$\sum_{i=1}^{k} \lambda_i P_i(\alpha^d X) = 0$. This implies that $\sum_{i=1}^{k} \lambda_i Z_i = 0$. That is, the columns of 
$W_\alpha$ are linearly dependent over $\F\subseteq \F(X)$. Therefore by Lemma~\ref{lemma: det and lin ind}  
the polynomial $\det(W_\alpha)$ is identically zero. That is,  $\det(W_\alpha) \equiv 0$. This completes the proof. 
\end{proof}

The next lemma will be used in the forward direction of the proof. 
\begin{lemma} \label{lemma: poly multiple}
Let $A(X)$ and $B(X)$ be non zero polynomials in $\F[X]$ of degree at most $l$. Let $\beta \in \F$ be an element of order $>l$. 
If $A(X)B(\beta X) - A(\beta X)B(X) \equiv 0 $ then $A(X) = \lambda B(X)$ where $0\neq\lambda \in \F$.
\end{lemma}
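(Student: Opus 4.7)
My plan is to reduce the lemma to a coefficient-comparison argument that exploits the order condition on $\beta$ twice: first to show that $A$ and $B$ have the same degree, and then to eliminate the difference $A(X) - \lambda B(X)$.

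The first step is to establish $\deg A = \deg B$. Let $d_A = \deg A$ and $d_B = \deg B$, both at most $l$, with nonzero leading coefficients $a_{d_A}$ and $b_{d_B}$. A direct expansion shows that the coefficient of $X^{d_A + d_B}$ in $A(X) B(\beta X) - A(\beta X) B(X)$ equals $a_{d_A} b_{d_B} (\beta^{d_B} - \beta^{d_A})$, since only the pair $(i,j) = (d_A, d_B)$ contributes at the top degree. The hypothesis forces this coefficient to vanish, so $\beta^{d_A} = \beta^{d_B}$. Because $|d_A - d_B| \leq l$ is strictly less than $\mathrm{ord}(\beta)$, this equality is only possible when $d_A = d_B$; let $d$ denote this common degree.

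The second step removes the leading term. Put $\lambda = a_d / b_d \in \F \setminus \{0\}$ and define $C(X) = A(X) - \lambda B(X)$, a polynomial of degree at most $d - 1$. By linearity in the first argument,
\[ C(X) B(\beta X) - C(\beta X) B(X) = A(X) B(\beta X) - A(\beta X) B(X) - \lambda \bigl(B(X) B(\beta X) - B(\beta X) B(X)\bigr) \equiv 0, \]
so the pair $(C, B)$ also satisfies the functional equation. If $C \equiv 0$ then $A = \lambda B$ and we are done.

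The final step is the contradiction. Otherwise let $d' = \deg C < d$ with nonzero leading coefficient $c_{d'}$. Applying the same top-degree comparison to $(C, B)$, the coefficient of $X^{d' + d}$ in $C(X) B(\beta X) - C(\beta X) B(X)$ equals $c_{d'} b_d (\beta^d - \beta^{d'})$. Since $0 < d - d' \leq d \leq l < \mathrm{ord}(\beta)$, we have $\beta^d \neq \beta^{d'}$, so this coefficient is nonzero, contradicting the identity. Hence $C \equiv 0$ and $A(X) = \lambda B(X)$. I do not anticipate any serious obstacle; the only delicate point is checking that every power-difference of $\beta$ that arises in the argument has exponent strictly below $\mathrm{ord}(\beta)$, which is precisely what the hypothesis on the order of $\beta$ guarantees.
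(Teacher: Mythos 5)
Your proof is correct, and it takes a genuinely different route from the paper's. The paper works from the constant term upward: after splitting into three cases according to whether the constant coefficients $a_0,b_0$ vanish (reducing the other cases to the case $b_0\neq 0$ by swapping $A,B$ or factoring out a power of $X$), it proves by induction on $i$ that each pair $(a_i,b_i)$ is proportional to $(a_0,b_0)$, using at step $i$ that the coefficient of $X^i$ in $A(X)B(\beta X)-A(\beta X)B(X)$ collapses to $(\beta^i-1)(a_0b_i-a_ib_0)$ once the induction hypothesis kills the cross terms. You instead work from the leading coefficients downward: a single top-degree coefficient extraction gives $\beta^{d_A}=\beta^{d_B}$ and hence $\deg A=\deg B=d$ (using $\beta^m\neq 1$ for $1\le m\le l$), then the bilinearity of $(A,B)\mapsto A(X)B(\beta X)-A(\beta X)B(X)$ lets you replace $A$ by $C=A-\lambda B$ with $\lambda=a_d/b_d$, and a second top-degree extraction applied to $(C,B)$ forces $C\equiv 0$. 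The coefficient identifications you use are valid (in each case the pair $(d_A,d_B)$, respectively $(d',d)$, is the unique pair contributing at the top degree), and every exponent difference you invert, namely $|d_A-d_B|$ and $d-d'$, is between $1$ and $l$, so the order hypothesis applies exactly as in the paper. What your approach buys is the elimination of both the three-way case analysis on vanishing constant terms and the full induction over coefficients; what the paper's approach buys is an explicit coefficient-by-coefficient description of the proportionality, but for the purpose of Lemma~\ref{lem:foldedwronreverse} your shorter argument suffices.
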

\begin{proof}
Let $A(X) = \sum_{i=0}^l a_i X^i$, and $B(X) = \sum_{j=0}^l b_j X^j$ where $a_i, b_j \in \F$. 

\paragraph{Case 1:}  We first assume that 
 $b_0\neq 0$.  Later we will show how all other cases reduce to this.  
%
Let $$S_{A,B}(X) = A(X)B(\beta X) - A(\beta X) B(X).$$ 
Since $S_{A,B}(X) \equiv 0$ we have that for all $t\in \{0, 1, \ldots, 2l\}$, the coefficients of $X^t$ in $S_{A,B}(X)$ is zero. 
Our proof is based on the following claim. 
\begin{claim} 
For all $i \in \{1, \ldots l\}$, either $\frac{a_i}{b_i} = \frac{a_0}{b_0}$, or $a_i = b_i = 0$.
\end{claim}
\begin{proof} 
We prove the claim using induction on $i$. For $i=1$, consider the coefficient of $X$ in $S_{A,B}(X)$. 
The coefficient of $X$ in $S_{A,B}(X)$ is $(\beta - 1)(a_0b_1 - a_1b_0)$. Since the order of $\beta$ is $>l$, 
$(\beta - 1)\neq 0$. This implies $(a_0b_1 - a_1b_0)=0$. So if $b_1=0$, then $a_1=0$ (because $b_0\neq 0$) 
and if $b_1\neq 0$, then $\frac{a_0}{b_0} = \frac{a_1}{b_1}$. 
Thus we assume that $i\geq 2$ and by induction hypothesis the claim holds for $j \in \{1, \ldots, i-1\}$.
Now, consider the coefficients of $X^i$ in $S_{A,B}(X)$. The coefficient of $X^i$ in $S_{A,B}(X)$ is 
$$\beta^i(a_0b_i - a_ib_0) + \beta^{i-1}(a_1b_{i-1} - a_{i-1}b_1) + \ldots + (a_ib_0 - a_0b_i).$$ 
By our assumption we know that 
$$\beta^i(a_0b_i - a_ib_0) + \beta^{i-1}(a_1b_{i-1} - a_{i-1}b_1) + \ldots + (a_ib_0 - a_0b_i) = 0.$$
Consider the term $\beta^j(a_{i-j}b_j - a_{j}b_{i-j})$ for $ 0 < j < i$. By induction hypothesis, one of the 
following statement is true. 
\begin{itemize}
 \item $a_j = b_j = 0$
 \item $a_{i-j} = b_{i-j} = 0$
 \item $\frac{a_{i-j}}{b_{i-j}} = \frac{a_0}{b_0} = \frac{a_j}{b_j}$
\end{itemize}
In all the three cases the term $\beta^j(a_{i-j}b_j - a_{j}b_{i-j})$ is zero. 
Therefore the coefficient of $X^i$ simplifies to, $(\beta^i - 1)(a_0b_i - a_ib_0)$ and we get $(\beta^i - 1)(a_0b_i - a_ib_0)=0$. 
Since the order of $\beta$ is $>l$, $(\beta^{i} - 1)\neq 0$. This implies $(a_0b_i - a_ib_0)=0$. 
So if $b_i=0$, then $a_i=0$ (because $b_0\neq 0$) and if $b_i\neq 0$, then $\frac{a_0}{b_0} = \frac{a_i}{b_i}$. This concludes the claim. 
\end{proof}
Let $\lambda = \frac{a_0}{b_0} \in \F$. Thus $a_i = \lambda b_i$. Therefore $A(X) = \lambda B(X)$. Since $A(X)\not\equiv 0$, $\lambda\neq 0$. 

\paragraph{Case 2:}  Suppose $b_0=0$ and $a_0\neq 0$. 
Then let $A_1(X)=B(X)$ and $B_1(X)=A(X)$. 
Since  $A(X)B(\beta X) - A(\beta X)B(X) \equiv 0 $, we have that 
\begin{eqnarray*}
 & & B_1(X)A_1(\beta X) - B_1(\beta X)A_1(X) \equiv 0 \\
&\implies & -(B_1(\beta X)A_1(X) - B_1(X)A_1(\beta X))\equiv 0. 
\end{eqnarray*} 
Thus $A_1(X)B_1(\beta X) - A_1(\beta X)B_1(X) \equiv 0 $.  So by applying  {\bf Case 1}  with $A_1(X)$ and $B_1(X)$, we get $A_1(X)=\lambda B_1(X)$ where $0\neq \lambda\in \F$.  This implies that $A(X)=\lambda^{-1}B(X)$ where 
$0\neq \lambda^{-1}\in \F$.

\paragraph{Case 3:}  Suppose $b_0=0$ and $a_0= 0$.   Let $r$ be the least integer such that either $a_r\neq 0$ or $b_r\neq 0$. 
Then let $A(X) = X^r A_2(X)$ and $B(X) = X^rB_2(X)$. 
Since $A(X)B(\beta X) - A(\beta X)B(X) \equiv 0 $, we have that $A_2(X)B_2(\beta X) - A_2(\beta X)B_2(X) \equiv 0 $. 
Note that the coefficient of $X^0$ is non zero in at least one of the polynomials $A_2(X)$ or $B_2(X)$. Furthermore,   $A_2(X),B_2(X)\not\equiv 0$.  Thus, if the coefficient of $B_2(X)$ is non-zero then we are  in {\bf Case 1} else we are in {\bf Case 2}.  
This completes the proof of the lemma.
\end{proof}

The next lemma will  be useful in showing the forward direction of Theorem~\ref{thm: alpha folded wronskian}. 

\begin{lemma} 
\label{lem:foldedwronreverse}
Let $P_1(X), \ldots, P_k(X)$ be a set of polynomials from   $\F[X]^{< n}$ and $\alpha$ be an element of order $>(n-1)(k-1)$.  
If $\det(W_\alpha) \equiv 0$, then $P_1(X),  \ldots, P_k(X)$ are linearly dependent over $\F$. 
\end{lemma}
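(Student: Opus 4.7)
The plan is to establish the contrapositive by induction on $k$. For the base case $k=1$, $\det(W_\alpha) = P_1(X)$, so $\det(W_\alpha) \equiv 0$ forces $P_1 \equiv 0$, making $\{P_1\}$ linearly dependent. For the inductive step, suppose the lemma holds for $k-1$ polynomials. If $P_1, \ldots, P_{k-1}$ are already linearly dependent over $\F$ there is nothing to do, so assume they are linearly independent; the contrapositive of the induction hypothesis then yields $\det(W_\alpha(P_1, \ldots, P_{k-1})) \not\equiv 0$.

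View $W_\alpha$ as a matrix over $\F(X)$. The nonvanishing of the above $(k-1) \times (k-1)$ minor makes the first $k-1$ columns $Z_1, \ldots, Z_{k-1}$ linearly independent over $\F(X)$, while all $k$ columns are dependent by Lemma~\ref{lemma: det and lin ind}. Thus there exist unique $R_1, \ldots, R_{k-1} \in \F(X)$ with $Z_k = \sum_{i=1}^{k-1} R_i Z_i$, i.e.\ $P_k(\alpha^j X) = \sum_i R_i(X)\, P_i(\alpha^j X)$ for each $j=0,1,\ldots,k-1$. The pivotal step is to substitute $X \mapsto \alpha X$ in the $j$-th relation and subtract the $(j+1)$-st: $P_k$ cancels, leaving
\[
\sum_{i=1}^{k-1} \bigl(R_i(X) - R_i(\alpha X)\bigr)\, P_i(\alpha^{j+1} X) \;=\; 0 \qquad (j=0,1,\ldots,k-2).
\]
The coefficient matrix of this $(k-1) \times (k-1)$ system in the unknowns $R_i(X) - R_i(\alpha X)$ is exactly $W_\alpha(P_1, \ldots, P_{k-1})$ with $X$ replaced by $\alpha X$; since $\alpha \neq 0$ and the $(k-1)$-polynomial Wronskian is already nonzero, so is this substituted determinant. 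Hence only the trivial solution survives: $R_i(X) = R_i(\alpha X)$ for every $i$.

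Finally, I would write each $R_i$ in lowest terms as $A_i/B_i$ with $A_i, B_i \in \F[X]$. The identity $R_i(X) = R_i(\alpha X)$ rearranges to $A_i(X) B_i(\alpha X) - A_i(\alpha X) B_i(X) \equiv 0$, which is precisely the hypothesis of Lemma~\ref{lemma: poly multiple}. By Cramer's rule, $R_i$ is a quotient of determinants of $(k-1) \times (k-1)$ matrices whose entries are polynomials of degree at most $n-1$, so after reduction to lowest terms $\deg A_i, \deg B_i \leq (n-1)(k-1)$. The hypothesis that $\alpha$ has order $>(n-1)(k-1)$ is exactly what Lemma~\ref{lemma: poly multiple} demands, forcing $R_i \in \F$ for each $i$ (the case $R_i \equiv 0$ being trivial). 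Substituting back into the $j=0$ relation $P_k(X) = \sum_i R_i\, P_i(X)$ produces the sought nontrivial $\F$-linear dependence. The most delicate point of the argument is the degree bookkeeping: the bound $(n-1)(k-1)$ on $\deg A_i, \deg B_i$ matches the order hypothesis on $\alpha$ exactly, and any weakening on either side would break the application of Lemma~\ref{lemma: poly multiple}.
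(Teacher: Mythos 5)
Your proof is correct and takes essentially the same route as the paper: induction on $k$, elimination of $P_k$ via the substitution $X\mapsto \alpha X$ followed by subtraction, invertibility of the $(k-1)$-fold Wronskian $W_\alpha(P_1,\ldots,P_{k-1})$ supplied by the induction hypothesis, and an application of Lemma~\ref{lemma: poly multiple} with the matching degree bound $(n-1)(k-1)$. The only difference is presentational: you carry rational coefficients $R_i$ from Cramer's rule and apply Lemma~\ref{lemma: poly multiple} to their numerators and denominators in lowest terms, whereas the paper works with polynomial cofactor coefficients $\delta_i(X)$ throughout.
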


\begin{proof}  
We prove the lemma using induction on $k$ -- the number of polynomials.  For $k$ = 1, $W_\alpha = [ P_1(X) ]$ and the lemma vacuously holds. Form now on we assume that $k\geq 2$ and that the lemma holds for all $t<k$.  Recall that $Z_i$ denotes the $i$-th column of the matrix $W_\alpha$. 
We first give the proof for the case when there is a subset of columns of $Z_1, \ldots, Z_k$, of size $<k$ that are   linearly dependent 
over $\F(X)$.  Let $\{i_1,\ldots,i_t\}\subset \{1,\ldots,k\}$ of size at most $k-1$, such that  
$Z_{i_1},\ldots,Z_{i_t}$ are linearly dependent. Thus, there exists $\delta_1(X),\ldots,\delta_t(X)\in \F(X)$, not all equal to zero 
polynomial in $\F(X)$,  such that $\sum_{j=1}^t \delta_i(X)Z_{i_j}\equiv 0 $. 
Let $Z_{i_1}',\ldots,Z_{i_t}'$  denote the 
restriction of  $Z_{i_1},\ldots,Z_{i_t}$  to first $t$ rows of $W_\alpha$. 
Since $\sum_{j=1}^t \delta_i(X)Z_{i_j}\equiv 0 $, this implies that $\sum_{j=1}^t \delta_i(X)Z_{i_j}'\equiv 0 $. 
Therefore $Z_{i_1}',\ldots,Z_{i_t}'$ are also linearly dependent over $\F(X)$. 
Consider $W_\alpha(P_{i_1}, \ldots, P_{i_t})$. This is a $t\times t$ matrix with columns $Z_{i_1}',\ldots,Z_{i_t}'$. Since $Z_{i_1}',\ldots,Z_{i_t}'$ are linearly dependent, by Lemma~\ref{lemma: det and lin ind}, 
$\det(W_\alpha(P_{i_1}, \ldots, P_{i_t})) \equiv 0$.   By induction hypothesis, this implies  that 
$P_{i_1}(X),  \ldots, P_{i_t}(X)$ are linearly dependent over ${\mathbb  F}$ and thus 
$P_1(X), \ldots, P_k(X)$ are linearly dependent over $\F$.  So from now on we  assume that 
for any subset $\{i_1,\ldots,i_t\}\subset \{1,\ldots,k\}$ of size at most $k-1$, 
$Z_{i_1},\ldots,Z_{i_t}$ are linearly independent and $\det(W_\alpha(P_{i_1}, \ldots, P_{i_t})) \not\equiv 0$. 

Next we prove the claim that if $\{Z_i\}_{i=1}^k$ are linearly dependent then we can choose the polynomials 
$\delta_i(X)$ in $\F(X)$ which satisfy certain desirable properties. 
\begin{claim} \label{lemma: a_i from det}
Let $\det(W_\alpha(P_1,\ldots,P_k)\equiv 0$. Then there exist non-zero polynomials $\delta_1(X),  \ldots, \delta_k(X)\in \F[X]$, of degree at most $(n-1)(k-1)$ 
such that $\sum_{i=1}^{k} \delta_i(X) Z_i = 0$.
\end{claim}
\begin{proof} 
For all $i\in\{1,\ldots,k\}$, define 
\begin{eqnarray*}
\delta_i(X) & = & (-1)^{1+i} \det(W_\alpha(P_{1}(\alpha X),\ldots,P_{i-1}(\alpha X), P_{i+1}(\alpha X), \ldots, P_{k}(\alpha X))) \\
                   &= &   (-1)^{1+i} \det(W_\alpha(P_{1},\ldots,P_k)[\{2,\ldots,k\},\{1,\ldots,k\}\setminus \{i\}]).
\end{eqnarray*}
Observe that $$\sum_{i=1}^{k} \delta_i(X) P_i(X) = \det(W_\alpha(P_{1},  \ldots, P_{k})) \equiv 0,$$
because by assumption  $\det(W_\alpha(P_1,\ldots,P_k))\equiv 0$.
Now consider the matrix $W_j$ obtained by replacing the first row of  $W_\alpha(P_1,\ldots,P_k)$ with $j^{th}$ row of $W_\alpha(P_1,\ldots,P_k)$. That is,  

    $$ 
    W_j= 
    \begin{pmatrix}
        P_1(\alpha^{j-1} X)              & P_2(\alpha^{j-1} X)                & \ldots & P_k(\alpha^{j-1} X)                \\
        P_1(\alpha X)       & P_2(\alpha X)        & \ldots & P_k(\alpha X)         \\
        P_1(\alpha^2 X)       & P_2(\alpha^2 X)        & \ldots & P_k(\alpha^2 X)         \\
        \vdots              & \vdots                & \ddots & \vdots                \\
        P_1(\alpha^{k-1} X) & p_2(\alpha^{k-1} X)   & \ldots & P_k(\alpha^{k-1} X)   \\    
    \end{pmatrix}_{k \times k}
    $$                                         
Note that for any $j\in\{2,\ldots,k\}$, $\sum_{i=1}^{k} \delta_i(X) P_i({\alpha}^{j-1}  X)= \det(W_j)\equiv 0$ (as $W_j$ has two identical rows). Hence, $\sum_{i=1}^{k} \delta_i(X) Z_i = 0$.
Since we are in the case where for any subset $\{i_1,\ldots,i_t\}\subset \{1,\ldots,k\}$ of size at most $k-1$, 
$Z_{i_1},\ldots,Z_{i_t}$ are linearly independent we have that  
$$\det(W_\alpha(P_{1},\ldots,P_{i-1}, P_{i+1}, \ldots, P_{k})) \not\equiv 0.$$
 This implies that 
$$\delta_i(X)=(-1)^{1+i}\det(W_\alpha(P_{1}(\alpha X),\ldots,P_{i-1}(\alpha X), P_{i+1}(\alpha X), \ldots, P_{k}(\alpha X))) \not\equiv 0$$
and the degree of $\delta_i(X)$ is at most $(n-1)(k-1)$ for all $i\in\{1,\ldots,k\}$. This completes the  proof of the claim.  
    \end{proof}
From now on we work with the collection $\{\delta_i(X)\}_{i=1}^k$ provided by Claim~\ref{lemma: a_i from det}.
We have the following expression.
$$ \sum_{i=1}^{k} \delta _i(X) Z_i =\sum_{i=1}^{k} \delta_i(X) \begin{bmatrix} P_i(X) \\ P_i(\alpha X) \\ \ldots \\ P_i(\alpha^{k-1} X) \end{bmatrix}  = 0 $$
This implies that  for each $j\in\{0,\ldots,k-1\}$, we have
\begin{alignat}{1}
            \sum_{i=1}^{k}   \delta_i(X) P_i(\alpha^j X)  
                            & =  0 . \label{eqn:1} \\
        \intertext{By rearranging Equation~\ref{eqn:1} and absorbing the negative sign into $\delta_k(X)$ we get }
            \sum_{i=1}^{k-1} \delta_i(X) P_i(\alpha^j X) 
                            & =  \delta_k(X) P_k(\alpha^j X) . \label{eqn:2} \\
        \intertext{Substitute $\alpha X$ for $X$ in Equation~\ref{eqn:2}  for all $j\in \{0,\ldots,k-2\}$ we get} 
            \sum_{i=1}^{k-1} \delta_i(\alpha X) P_i(\alpha^{j+1} X) 
                            & =  \delta_k(\alpha X) P_k(\alpha^{j+1} X) . \label{eqn:3} \\
        \intertext{Substitute the value of $P_k(\alpha^{j+1} X)$ from Equation~\ref{eqn:2} into Equation~\ref{eqn:3}, we get that for all $j\in \{0,1,\ldots,k-2\}$} 
             \delta_k(X) \sum_{i=1}^{k-1} \delta_i(\alpha X) P_i(\alpha^{j+1} X)
                            & =  \delta_k(\alpha X) \sum_{i=1}^{k-1} \delta_i(X) P_i(\alpha^{j+1} X). \\ 
        \intertext{Thus for each $j\in \{0,1,\ldots,k-2\}$ we have}
            \sum_{i=1}^{k-1}\Big\{ \delta_k(X)\delta_i(\alpha X) - \delta_k(\alpha X) \delta_i(X) \Big\} P_i(\alpha^{j+1} X) 
                            & = 0. \label{eqn:5}\\
        \intertext{Substitute $\beta X$ for $X$ in Equation~\ref{eqn:5}, where $\beta = \alpha^{-1}$. Then  for all $j\in \{0,1,\ldots,k-2\}$ } 
            \sum_{i=1}^{k-1} \Big\{ \delta_k(\beta X)\delta_i(X) - \delta_k(X)\delta_i(\beta X) \Big\} P_i(\alpha^j X) 
                            & = 0. \label{eqn:6}\\
        \intertext{Let $Z_i'$ be the column vector corresponding to $Z_i$ in the matrix $W_\alpha(P_1,\ldots,P_k)$
                   restricted to the first $k-1$ rows. Then from Equation~\ref{eqn:6} we get that  
                   for all $j\in \{0,1,\ldots,k-2\}$  }
            \sum_{i=1}^{k-1} \Big\{ \delta_k(\beta X)\delta_i(X) - \delta_k(X)\delta_i(\beta X)\Big \}
                            \begin{bmatrix} P_i(X) \\ P_i(\alpha X) \\ \ldots \\ P_i(\alpha^{k-2} X) \end{bmatrix}
                            & = 0. \\
         \implies   \sum_{i=1}^{k-1} \Big\{ \delta_k(\beta X)\delta_i(X) - \delta_k(X)\delta_i(\beta X)\Big \} Z_i' & = 0. 
    \end{alignat}

    Consider the $(k-1) \times (k-1)$ matrix $[Z_1', Z_2', \ldots, Z_{k-1}'] = W_\alpha(P_1(X), P_2(X), \ldots, P_{k-1}(X))$. By the case of proof we are currently dealing, we have that  $W_\alpha(P_1(X), P_2(X), \ldots, P_{k-1}(X))$ has a non-zero determinant. In other words, the vectors $Z_1', \ldots, Z_{k-1}'$ are linearly independent over $\F(X)$. 
This implies that for all $ i\in \{1,\ldots,k-1\}$,
 $$ \delta_i(X)  \delta_k(\beta X)- \delta_i(\beta X) \delta_k(X) = 0$$
Observe that  $\delta_i(X)$,  $i\in \{1,\ldots,k\}$, are non-zero polynomials in $\F[X]$ of degree at most 
$(n-1)(k-1)$.  Furthermore, the order of $\beta$ is $>(n-1)(k-1)$, and thus by applying Lemma~\ref{lemma: poly multiple} we get 
that $\delta_i(X) = \lambda_i \delta_k(X)$ for all 
$i\in\{1,\ldots,k-1\}$ where $0\neq \lambda_i\in{\mathbb F}$. 
Now by simplifying we get the following expressions
\begin{eqnarray*}
         & &    \sum_{i=1}^{k}  \delta_i(X) Z_i  =  0  \\
         &\iff &   \sum_{i=1}^{k} \lambda_i \delta_k(X) Z_i  =0   \qquad\quad\mbox{ (because $\delta_i(X) = \lambda_i \delta_k(X)$)}\\
            &\iff&   \sum_{i=1}^{k} \lambda_i Z_i  =  0  \qquad\quad\mbox{ (because $\delta_k(X)\not\equiv 0$)} \\
            &\iff &   \sum_{i=1}^k \lambda_i P_i(X) =0 . 
\end{eqnarray*} 
Hence $P_1(X), \ldots P_k(X)$ are linearly dependent over $\F$. This completes the proof of the lemma. 
\end{proof}

Combining Lemmata~\ref{lem:foldedwronforward} and \ref{lem:foldedwronreverse}, we get the proof of Theorem~\ref{thm: alpha folded wronskian}.  We can get the following corrollary from Theorems~\ref{thm: folded wronskian} and \ref{thm: alpha folded wronskian}. 
\begin{corollary}
Let $\F$ be a field of size $>n$, $\alpha$ be either a primitive element or an element  of $\F$ of order $>(n-1)(k-1)$ and let 
$P_1(X),  \ldots, P_k(X)$ be a set of polynomials from $\F[X]^{<n}$. 
Then $P_1(X),  \ldots, P_k(X)$ are linearly independent over $\F$ if and only if the $\alpha$-folded Wronskian 
determinant 
$\mathrm{det}(W_\alpha(P_1, \ldots, P_k))\not\equiv 0$ in $\F[X]$.  
\end{corollary}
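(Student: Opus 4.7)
The plan is to observe that the hypothesis on $\alpha$ is phrased disjunctively: either $\alpha$ is a primitive element of $\F$, or $\alpha$ has order $>(n-1)(k-1)$. Since the conclusion to be proved (the biconditional between linear independence of $P_1,\ldots,P_k$ over $\F$ and non-vanishing of $\det(W_\alpha(P_1,\ldots,P_k))$ in $\F[X]$) is exactly the conclusion of both Theorem~\ref{thm: folded wronskian} and Theorem~\ref{thm: alpha folded wronskian}, the corollary reduces to a one-line case analysis.

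More concretely, I would proceed as follows. First, I would note that the hypotheses of the corollary (field $\F$ of size $>n$, polynomials in $\F[X]^{<n}$) match the hypotheses of both theorems above. In the first case, suppose $\alpha$ is a primitive element of $\F$; then since $|\F|>n$, Theorem~\ref{thm: folded wronskian} applies directly and yields the desired equivalence. In the second case, suppose $\alpha$ has order $>(n-1)(k-1)$ in $\F$; then Theorem~\ref{thm: alpha folded wronskian} applies directly (note that Theorem~\ref{thm: alpha folded wronskian} does not even require $|\F|>n$, so the cardinality hypothesis is only needed to ensure that primitive elements are meaningful in the first case). In either case the same conclusion is obtained, completing the proof.

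There is essentially no obstacle here, as the corollary is just a unification of the two already-proved theorems under a single statement. The only minor point worth remarking on is that when $\alpha$ is a primitive element of a finite field $\F$ of order $q$, then $\alpha$ has order $q-1$; if $q-1>(n-1)(k-1)$ one could also invoke Theorem~\ref{thm: alpha folded wronskian} directly, but since Theorem~\ref{thm: folded wronskian} handles the primitive case without this size assumption on $q$ relative to $(n-1)(k-1)$, the cleanest presentation is the case split as above. No new lemmas, calculations, or technical machinery are required beyond citing the two theorems already established.
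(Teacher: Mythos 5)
Your proposal is correct and matches the paper's approach exactly: the paper derives this corollary simply by combining Theorem~\ref{thm: folded wronskian} (for the primitive-element case, using $|\F|>n$) with Theorem~\ref{thm: alpha folded wronskian} (for the large-order case), which is precisely your case split. No gaps.
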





\subsubsection{Finding irreducible polynomials and elements of large order}
Whenever we will need to use folded Wronskians, we will also need to get hold of a primitive element  or an element of large 
order  of an appropriate field. 
We start by reviewing some known algorithms for finding irreducible polynomials over finite fields.
Note that for a finite field of order $p^l$, the field operations can be done in time $(l \log p)^{\cO(1)}$.
And for an  infinite field, the field operations will require $(\log  N)^{\cO(1)}$ where 
$N$ is the size of the largest value handled by the algorithm. 
Typically we will provide an upper  bound on $N$ when the need arises.
A result by Shoup \cite[Theorem 4.1]{Shoup88}) allows us to find an irreducible polynomial
of degree $r$ over $\F_{p^l}$ in time polynomial in $p, l$ and $d$.   Adleman and Lenstra \cite[Theorem 2]{AdlemanLenstra86}  gave an algorithm that allows us to compute  an irreducible polynomial of degree at least $r$ over a prime field $\F_p$
in time polynomial in $\log p$ and $r$.
\begin{lemma} [\cite{AdlemanLenstra86,Shoup88}]{\rm (Finding Irreducible Polynomials)}
\label{lemma: irred shoup}
\begin{enumerate}
\item  There is an algorithm that given prime $p$ and $r$, it can compute
an irreducible polynomial $f(X) \in \F_p[X]$ such that $r \leq \mathrm{deg}(f) \leq cr \log p$
in $(c r (\log p)^2)^c$ time, where $c$ is a constant.
    \item For $q = p^l$ and an integer $r$, we can compute an irreducible polynomial of $\F_q[X]$ of degree $r$
    in $\cO( \sqrt{p} (\log p)^3 r^3 (\log r)^{\cO(1)} + (\log p)^2 r^4 (\log r)^{\cO(1)} + (\log p) r^4 (\log r)^{\cO(1)} l^2 (\log l)^{\cO(1)} )$ time.
\end{enumerate}
\end{lemma}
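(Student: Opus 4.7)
The plan is to invoke the cited works rather than reprove them from scratch: part~1 is essentially \cite[Theorem~2]{AdlemanLenstra86} and part~2 is \cite[Theorem~4.1]{Shoup88}, both of which are classical algorithmic results in computational algebra. My proof proposal is therefore to quote these two results directly and verify that the shapes of the stated bounds match what those papers prove; I would also include a short recap of the underlying ideas for the reader's benefit.

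For part~1, the Adleman--Lenstra strategy uses cyclotomic polynomials. For an integer $n$ coprime to $p$, the polynomial $\Phi_n(X) \in \F_p[X]$ factors into irreducible polynomials all of the same degree, namely the multiplicative order of $p$ modulo $n$. The algorithm searches over moderately small $n$ to find one whose order $\mathrm{ord}_n(p)$ lies in the desired window $[r, cr\log p]$, computes $\Phi_n(X)$, and then extracts an irreducible factor by equal-degree factorization over $\F_p$. Proving that a suitable $n$ always exists within a polynomially bounded range and that the factorization can be performed deterministically within the claimed time is the nontrivial content, and for this I would follow the analysis in \cite{AdlemanLenstra86}.

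For part~2, Shoup's approach assumes a representation of $\F_{p^l}$ is available and constructs an irreducible polynomial of degree $r$ over $\F_{p^l}$ by combining root-finding and factorization steps across a tower of extensions, lifting irreducible polynomials from $\F_p$ through intermediate fields. The dominant terms in the running time correspond to distinct-degree/equal-degree factorization over $\F_{p^l}$, whose cost carries a $\sqrt{p}$ factor (accounting for the first summand in the quoted bound), together with manipulations on degree-$r$ polynomials and degree-$l$ reductions (accounting for the remaining summands). I would import these subroutines and their complexity analyses verbatim from \cite{Shoup88}.

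The main obstacle to giving a self-contained proof is that both results rely on nontrivial ingredients from computational number theory and finite-field arithmetic: for Adleman--Lenstra, effective bounds on the least $n$ realizing a prescribed multiplicative order of $p$; for Shoup, careful analysis of deterministic factorization algorithms in characteristic $p$. Since the rest of the paper uses these only as black-box subroutines to find an element of polynomially large order in an appropriate extension, the cleanest proof is simply the citation, accompanied by a sanity check that the exponents of $p$, $l$, and $r$ in the quoted bounds match what the cited algorithms achieve.
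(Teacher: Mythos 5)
Your proposal matches the paper exactly: the paper gives no proof of this lemma, but simply imports part~1 from Adleman--Lenstra \cite[Theorem~2]{AdlemanLenstra86} and part~2 from Shoup \cite[Theorem~4.1]{Shoup88}, treating both as black-box subroutines, and your attribution of each part to the correct source is right. Your added sketches of the cyclotomic-polynomial and tower-of-extensions ideas are accurate background but are not needed beyond the citation.
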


%

%

Next we consider a few algorithms for finding primitive elements in finite fields. For fields of  large order but small characteristic, we have the following lemma,
which follows from the results of Shparlinski \cite{Shparlinski90} and
also from the results of Shoup \cite{Shoup90}.
\begin{lemma} [\cite{Shparlinski90,Shoup90}]
\label{lemma: primitive set}
Let $\F = \F_{p^l}$  be a finite field. Then we can compute a set $S \subset \F$, containing a primitive element,  
of size $\poly(p,l)$ in time $\poly(p,l)$.
\end{lemma}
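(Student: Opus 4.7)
The plan is to produce a deterministically constructed set $S \subseteq \F$ of polynomially bounded size and then certify, by a counting argument, that $S$ must contain at least one element of full order $p^l - 1$. Note that the statement only requires a set that is guaranteed to contain a primitive element, not an algorithm that identifies the primitive element inside $S$; this is what makes a polynomial-time construction feasible despite the well-known difficulty of deterministically computing primitive elements in general.

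First I would fix an explicit representation of $\F_{p^l}$ by invoking Lemma~\ref{lemma: irred shoup} to compute an irreducible $f(X) \in \F_p[X]$ of degree $l$, so that $\F = \F_p[\theta]$ where $\theta$ is a root of $f$. This takes $\poly(p,l)$ time and makes subsequent field arithmetic cost $\poly(p,l)$ per operation. I would then define a low-complexity candidate set, the simplest natural choice being $S = \{a + b\theta : a,b \in \F_p\}$ of size $p^2$; more generally one takes elements whose expansion in the basis $1, \theta, \ldots, \theta^{l-1}$ is supported on a fixed small index set. Outputting such an $S$ is trivially polynomial in $p$ and $l$.

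The main obstacle, and where I expect most of the work to live, is proving that $S$ necessarily contains a primitive element. The standard tool is a multiplicative character sum estimate. For each prime divisor $q$ of $p^l-1$, let $\chi_q$ be a non-trivial multiplicative character of $\F^*$ of order $q$; an element is a $q$-th power iff $\chi_q$ evaluates to $1$ on it. A Weil-type bound of the form
\[
\Big| \sum_{c \in \F_p} \chi_q(\theta - c) \Big| \le (l-1)\sqrt{p}
\]
then controls the number of elements of the form $\theta - c$ that are $q$-th powers. Summing the resulting deficits via inclusion--exclusion over the prime divisors $q \mid p^l-1$, and comparing the total with $|S|$, yields the existence of some element of $S$ whose order is exactly $p^l-1$.

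The delicate point, and the reason the candidate family in the Shparlinski and Shoup arguments is more elaborate than the simple $\{a+b\theta\}$ above, is calibrating $S$ so that the character-sum bound remains effective in the regime where $p$ is small relative to $l$: the Weil bound $(l-1)\sqrt{p}$ is only useful once $|S|$ comfortably dominates the sum of deficits over all prime divisors of $p^l-1$. Handling this regime requires choosing $S$ to draw coordinates from a carefully chosen set of $\theta$-monomials (or, alternatively, from a subfield tower), and this balancing is the real technical core; once it is done, the bookkeeping to verify $|S| = \poly(p,l)$ and that $S$ is constructible in $\poly(p,l)$ time is routine given Lemma~\ref{lemma: irred shoup}.
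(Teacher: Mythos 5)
First, note that the paper does not prove this lemma at all: it is imported as a black box from Shparlinski and Shoup, so there is no in-paper argument to match, and the intended ``proof'' is simply the citation. Your sketch does correctly identify the strategy underlying those works --- fix a representation $\F_{p^l}=\F_p(\theta)$, pick an explicitly constructible candidate set, and certify the presence of a primitive element by bounding, for each prime $q\mid p^l-1$, a multiplicative character sum of order $q$ over the candidates via a Weil-type estimate, then combining the bounds by inclusion--exclusion over the prime divisors.

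The genuine gap is that the decisive step is exactly the one you defer. With the naive candidate set you propose, the character sum you write is over the $p$ elements $\theta-c$, $c\in\F_p$ (already inconsistent with the $p^2$-element set $\{a+b\theta\}$), with main term $p$ and error $(l-1)\sqrt{p}$ per character; since $p^l-1$ can have $\Theta(l/\log l)$ (and up to roughly $l\log p$) distinct prime divisors, the accumulated deficit swamps the main term unless $p$ is polynomially large compared to $l$. In particular the argument is vacuous for small characteristic, e.g.\ $p=2$ --- which is precisely the regime in which this paper invokes the lemma, since the folded-Wronskian route is only needed when $\mathrm{char}(\F)\le n$. The actual content of Shparlinski's and Shoup's results is the construction that survives this regime (candidates of the form $f(\theta)$ for all $f\in\F_p[X]$ of some small degree $m$ with $p^m=\poly(p,l)$, together with character-sum estimates over such polynomial families), and your proposal names this as ``the real technical core'' without supplying it. So as a standalone proof the proposal is incomplete; as used in the paper, the correct move is simply to cite the two references, as the authors do.
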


We use Lemma~\ref{lemma: primitive set} to get the following result  
that allows us to find elements of sufficiently large order in a finite field or a primitive element in a field of small size.
\begin{lemma} \label{lemma: large order element}
Let $\F = \F_{p^l}$ be a finite field. Given a number $n$ such that $n < p^l$,
we can compute an element of $\F$ of order greater than $n$ in $\poly(p,l,n)$ time. 
Furthermore, we can find a primitive element in $\F$ in time  $|\F|^{\cO(1)}$. 
\end{lemma}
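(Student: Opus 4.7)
The plan is to leverage Lemma~\ref{lemma: primitive set}, which already hands us a set $S \subset \F$ of size $\poly(p,l)$ that is guaranteed to contain a primitive element, computable in time $\poly(p,l)$. The remaining task for both parts of the statement is simply to filter $S$ (or $\F^\ast$) for the element we want. The delicacy is that we cannot afford to factor $p^l-1$, so we will test the order of a candidate by direct exponentiation rather than by the textbook primitivity test $\alpha^{(p^l-1)/q}\neq 1$.

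For the first claim, first invoke Lemma~\ref{lemma: primitive set} to produce $S$. Since $n < p^l$, the primitive element contained in $S$ has order $p^l-1 > n$ (in the borderline case $n = p^l-1$ the statement is vacuous and no such element exists, so we tacitly assume $n \le p^l-2$). Now, for each candidate $\alpha \in S$, iteratively compute $\alpha, \alpha^2, \ldots, \alpha^n$ and check whether $1$ appears in this list; if not, the order of $\alpha$ exceeds $n$ and we return $\alpha$. The primitive element in $S$ passes this test, so the loop succeeds. The cost is $|S| \le \poly(p,l)$ candidates, each requiring $n$ field multiplications, and each multiplication costs $\poly(p,l)$ bit operations, giving the claimed $\poly(p,l,n)$ total running time.

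For the second claim, we use essentially the same idea but push the threshold all the way up: an element $\alpha \in \F^\ast$ is primitive if and only if its order is $p^l-1$, equivalently, $\alpha^i \neq 1$ for every $1 \le i \le p^l - 2$. Apply the above testing procedure to each element of $S$ (or, if one prefers, to each element of $\F^\ast$ itself) with the threshold set to $n := p^l-2$. The correctness is immediate from Lemma~\ref{lemma: primitive set} guaranteeing a primitive element in $S$. The running time is $|S| \cdot (p^l-1) \cdot \poly(p,l) = |\F|^{\cO(1)}$.

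The only potential obstacle is whether one can avoid the $\poly(p,l,n)$ dependence, but this is not needed for our purposes: the lemma is tailored to the regime where $n$ is polynomial in the input size, so the linear scan in exponents is already within budget. The main point to verify carefully in the write-up is that the cost of each field operation in $\F_{p^l}$ is indeed $\poly(p,l)$ (using the representation of $\F_{p^l}$ via an irreducible polynomial from Lemma~\ref{lemma: irred shoup}), so that the per-candidate cost times the number of candidates times the number of exponentiation steps multiplies out to the stated bound.
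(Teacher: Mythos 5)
Your proposal is correct and matches the paper's own argument essentially step for step: invoke Lemma~\ref{lemma: primitive set} to get a small set $S$ guaranteed to contain a primitive element, then for each candidate compute its first $\approx n$ (respectively $\approx p^l$) powers to certify order $>n$ (respectively primitivity). Your explicit remark about the borderline case $n=p^l-1$ is a small but fair refinement that the paper glosses over; otherwise the two proofs coincide, including the running-time accounting.
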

\begin{proof} We begin by applying Lemma~\ref{lemma: primitive set} to the field $\F$ and
obtain a set $S$ of size $\poly(p,l)$. This takes time $\poly(p,l)$. Then for each element $\alpha \in S$ we compute the set 
$G_\alpha = \{\alpha^i ~|~ i=1,2, \ldots ,n+1\}$.
If for any $\alpha$, $|G_\alpha| = n+1$, then we return this as the required element
of order greater than $n$. Since the set $S$ contains at least one primitive element of $\F$,
we will find some $\alpha$ in this step. Note this this step too takes $\poly(p,l,n)$ time. 

To prove the second statement of the lemma do as follows.  For each element $\alpha \in S$, consider the set 
$S(\alpha) = \{ \alpha^i ~|~ i=1, \ldots, p^l \}$.
  If $|S(\alpha)| = |\F^*| = p^l - 1$, then $\alpha$ generates $\F^*$. Since the set $S$ contains at least one primitive element of $\F$,
we will find some $\alpha$ in this step. Note this this step can be done in time $|\F|^{\cO(1)}$. 
This completes the proof of this lemma.
\end{proof}


When given a small field, the following lemma allows us to increase the size of the field
as well as find a primitive element in the larger field.
\begin{lemma} \label{lemma: increase field size}
Given a field $\F = \F_{p^l}$ and a number $n$ such that $p^l < n$,
we can find an extension $\K$ of $\F$ such that $n<|\K| < n^2$
and a primitive element $\alpha$ of $\K$ in time $n^{\cO(1)}$.
\end{lemma}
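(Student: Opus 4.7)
The plan is to produce the extension $\K$ by adjoining the root of an irreducible polynomial over $\F$ of an appropriate degree, and then brute-force a primitive element using the second part of Lemma~\ref{lemma: large order element}.

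First I would choose the degree of the extension. Let $r$ be the smallest positive integer such that $p^{lr}>n$. By minimality $p^{l(r-1)}\leq n$, so $p^{lr} = p^l\cdot p^{l(r-1)} \leq p^l\cdot n < n\cdot n = n^2$, using the hypothesis $p^l<n$. Thus setting $\K$ to be the degree-$r$ extension of $\F$ gives $n<|\K|<n^2$ as required. Observe also that $r\leq \lceil \log n/\log(p^l)\rceil \leq \log n$, so $r$ is small.

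Next I would construct $\K$ explicitly by invoking the second statement of Lemma~\ref{lemma: irred shoup} to produce an irreducible polynomial $f(Y)\in\F[Y]$ of degree exactly $r$; the running time there is polynomial in $p$, $l$ and $r$, all of which are bounded by $n$, so this step runs in $n^{\cO(1)}$ time. Then $\K=\F[Y]/(f(Y))$ is a field of order $p^{lr}$ in which each element is represented by a polynomial of degree $<r$ over $\F$, and the field operations of $\K$ can be performed in $(rl\log p)^{\cO(1)}= (\log n)^{\cO(1)}$ time.

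Finally, to locate a primitive element of $\K$, I would apply the second part of Lemma~\ref{lemma: large order element} to the field $\K$. That lemma finds a primitive element in time $|\K|^{\cO(1)}$, and since $|\K|<n^2$ the cost is $n^{\cO(1)}$. Combining the three steps, the whole procedure runs in $n^{\cO(1)}$ time and outputs both the extension $\K$ (presented via the irreducible polynomial $f$) and a primitive element $\alpha\in\K$. The main subtlety is simply to verify that $r$ is bounded polynomially (indeed logarithmically) in $n$ so that all invocations of the prior lemmas fit within the desired time bound; there is no deeper obstacle, as both ingredients are already in place.
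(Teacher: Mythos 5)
Your proposal is correct and follows essentially the same route as the paper: choose the smallest $r$ with $p^{lr}>n$, build the degree-$r$ extension via the irreducible-polynomial algorithm of Lemma~\ref{lemma: irred shoup}, and then find a primitive element by the second part of Lemma~\ref{lemma: large order element}, all within $n^{\cO(1)}$ time. Your derivation of the upper bound $|\K|<n^2$ from minimality of $r$ together with $p^l<n$ is in fact a slightly cleaner justification than the paper's ``$p^{lr/2}<n$'' remark, but the argument is the same.
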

\begin{proof}

    Let $r$ be smallest number such that $p^{lr} > n$.
    But then $p^{lr/2} < n$. Therefore we have that $p^{lr} < n^2$.
    Next we find an extension of $\F$ of degree $r$,
    by finding an irreducible polynomial over $\F$ 
    of degree $r$ using Lemma~\ref{lemma: irred shoup}
    in time polynomial in $p, l, r$, which is $n^{\cO(1)}$.
    Then we can use Lemma~\ref{lemma: large order element}
    to compute a primitive element of $\K$.
    Since $|\K| < n^2$, this can be done in time $n^{\cO(1)}$. This completes the proof of this lemma.
\end{proof}

\subsection{Deterministic Truncation of Matrices}
In this section we look at algorithms for computing $k$-truncation of matrices.
We consider matrices  over the set of rational numbers ${\mathbb Q}$ or over some finite field $\F$.
Therefore, we are given as input a matrix $M$ of rank $n$ over a field $\F$.
Let $p$ be the characteristic of the field $\F$ and  $N$ denote the size of the input in bits.  
The following theorem, gives us an algorithm to compute the truncation of a matrix
using the classical wronskian, over an appropriate field.
We shall refer to this as the {\em classical wronskian method of truncation}.

\begin{lemma} \label{lemma: classic truncation}
    Let $M$ be a $n \times m$ matrix of rank $n$ over a field $\F$, where  $\F$ is either $\mathbb{Q}$ or $\ch(\F)>n$.  
    Then we can compute a $k \times m$ matrix $M_k$ of rank $k$ over the field $\F(X)$
    which is a $k$-truncation of the matrix $M$ in  $\cO(mnk)$ field operations.
\end{lemma}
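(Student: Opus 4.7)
The plan is to instantiate the classical Wronskian construction column-by-column. For each column $C_i = (c_{1i}, \ldots, c_{ni})^T$ of $M$, associate the polynomial $P_i(X) = \sum_{j=1}^n c_{ji}X^{j-1} \in \F[X]^{<n}$, and take
\[
Y_i^k = \bigl(P_i(X),\,P_i^{(1)}(X),\,\ldots,\,P_i^{(k-1)}(X)\bigr)^T,
\]
setting $M_k := [Y_1^k, \ldots, Y_m^k]$. Each derivative of a polynomial of degree at most $n-1$ can be computed in $\cO(n)$ field operations, so producing $k$ derivatives per column costs $\cO(nk)$ operations, and the whole matrix $M_k$ is built in $\cO(mnk)$ operations over $\F$, matching the claimed bound.

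For correctness, fix any $I = \{i_1, \ldots, i_t\} \subseteq [m]$ with $t \leq k$; I need to show that the columns $\{C_{i_j}\}_{j=1}^t$ of $M$ are linearly independent over $\F$ if and only if the columns $\{Y_{i_j}^k\}_{j=1}^t$ of $M_k$ are linearly independent over $\F(X)$. For the easy direction, suppose $\sum_j a_j C_{i_j} = 0$ with $a_j \in \F$ not all zero. Then $\sum_j a_j P_{i_j}(X) \equiv 0$, and because formal derivation is $\F$-linear, $\sum_j a_j P_{i_j}^{(d)}(X) \equiv 0$ for every $d$, so $\sum_j a_j Y_{i_j}^k = 0$ over $\F(X)$, witnessing dependence in $M_k$. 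Conversely, suppose the columns $\{Y_{i_j}^k\}_{j=1}^t$ are linearly dependent over $\F(X)$. Restricting to the top $t$ rows gives a $t \times t$ submatrix that is also singular, but this submatrix is exactly the classical Wronskian $W(P_{i_1}, \ldots, P_{i_t})$. Since $\ch(\F) > n \geq t$ (or $\F = \mathbb{Q}$), Theorem~\ref{thm: classic wronskian} yields that $P_{i_1}, \ldots, P_{i_t}$ are linearly dependent over $\F$, and Lemma~\ref{lemma: vectors and polynomials} then gives linear dependence of $C_{i_1}, \ldots, C_{i_t}$ over $\F$.

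This equivalence for every $t \leq k$ is exactly the definition of a $k$-truncation. For the rank statement, since $M$ has rank $n \geq k$ it contains $k$ linearly independent columns, and by the equivalence those same $k$ columns of $M_k$ are linearly independent over $\F(X)$, so $M_k$ has rank $k$. The only subtle point, and in some sense the main thing to get right, is the converse direction for $t < k$: we cannot invoke the full Wronskian on the $k \times t$ slab directly, which is why the argument restricts to the top $t \times t$ block before applying Theorem~\ref{thm: classic wronskian}. Everything else is bookkeeping about the cost of iterated formal differentiation.
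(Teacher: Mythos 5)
Your proposal is correct and follows essentially the same route as the paper: associate the polynomial $P_i(X)$ to each column, form the derivative vectors $Y_i^k$, use the linearity of formal differentiation for the easy direction, and reduce the other direction to the top $t \times t$ block being the classical Wronskian $W(P_{i_1},\ldots,P_{i_t})$, invoking Theorem~\ref{thm: classic wronskian} together with Lemma~\ref{lemma: vectors and polynomials}; the paper merely phrases the two directions contrapositively. The $\cO(mnk)$ cost accounting is also identical.
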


\begin{proof} 
Let $\F[X]$ be the ring of polynomials in $X$ over $\F$ and let $\F(X)$ be the corresponding field of fractions.
Let $C_1,\ldots,C_m$ denote the columns of $M$.  
Observe that we have a polynomial $P_i(X)$ corresponding to the column $C_i$ of degree at most $n-1$,
and by Lemma~\ref{lemma: vectors and polynomials} we have that $C_{i_1}, \ldots, C_{i_l}$ are linearly independent over $\F$
if and only if $P_{i_1}(X), \ldots,P_{i_l}(X)$ are linearly independent over $\F$.
Further note that $P_i$ lies in $\F[X]$ and thus also in $\F(X)$. 
Let $D_i$ be the vector $(P_i(X), P_i^{(1)}(X),  \ldots, P_i^{(k-1)}(X))$  of length $k$
with entries from $\F[X]$ (and also in $\F(X)$).
Note that the entries of $D_i$ are  polynomials of degree at most $n-1$.
Let us define the matrix $M_k$ to be the $(k \times m)$ matrix whose columns are $D_i^T$, 
and note that $M_k$ is a matrix with entries from $\F[X]$.
We will show that indeed $M_k$ is a desired $k$-truncation of the matrix $M$ . 


Let $I\subseteq \{1,\ldots,m\}$ such that $|I| = l \leq k$.
Let $C_{i_1},\ldots,C_{i_l}$
be a linearly independent set of columns of the matrix $M$ over $\F$,  where $I = \{ i_1, \ldots, i_l \}$.
We will show that the columns $D_{i_1}^T, \ldots, D_{i_l}^T$ are linearly independent in $M_k$ over $\F(X)$.
Consider the $k \times l$ matrix $M_I$ whose column are the vectors $D_{i_1}^T, \ldots, D_{i_l}^T$.
We shall show that $M_I$ has rank $l$ by showing that there is a $l \times l$ submatrix 
whose determinant is a non-zero polynomial.
Let $P_{i_1}(X), \ldots, P_{i_l}(X)$ be the polynomials corresponding to the vectors $C_{i_1}, \ldots, C_{i_l}$.
By Lemma~\ref{lemma: vectors and polynomials} we have that $P_{i_1}(X),  \ldots, P_{i_l}(X)$ are linearly independent over $\F$.
Then by Theorem~\ref{thm: classic wronskian}, the $(l \times l)$ matrix formed by the column vectors 
$(P_{i_j}(X), P_{i_j}^{(1)}(X), \ldots, P_{i_j}^{(l-1)}(X))^T$, $i_j\in I$, is a non-zero determinant in $\F[X]$.
But note that this matrix is a submatrix of $M_I$. Therefore $M_I$ has rank $l$ in $\F(X)$.
Therefore the vectors $D_{i_1}^T, \ldots, D_{i_l}^T$ are linearly independent in $\F(X)$. This completes the proof of the forward direction. 

Let $I\subseteq \{1,\ldots,m\}$ such that $|I| = l \leq k$ and let $D_{i_1}^T, \ldots, D_{i_l}^T$ be 
linearly independent in $M_k$ over $\F(X)$, where $I = \{ i_1, \ldots, i_l \}$. We will show that the corresponding set of columns  $C_{i_1}, \ldots, C_{i_l}$ are also 
linearly independent over $\F$.  For a contradiction assume that  $C_{i_1}, \ldots, C_{i_l}$ are  linearly {\em dependent}  over $\F$. 
Let $P_{i_1}(X), \ldots, P_{i_l}(X)$ be the polynomials in $\F[X]$ corresponding to these vectors.  
Then by Lemma~\ref{lemma: vectors and polynomials} we have that $P_{i_1}(X), \ldots, P_{i_l}(X)$ are linearly dependent over $\F$.
So there is a tuple $a_{i_1}, \ldots, a_{i_l}$ of values of $\F$ such that $\sum_{j=1}^{l} a_{i_j} P_{i_j}(X) = 0$.
Therefore, for any $d \in \{1, \ldots, l-1 \}$, we have that $\sum_{j=1}^{l} a_{i_j} P_{i_j}^{(d)}(X) = 0$.
Now let $D_{i_1}^T, \ldots, D_{i_l}^T$ be the column vectors of $M_k$ corresponding to $C_{i_1}, \ldots, C_{i_l}$.
Note that $\F$ is a subfield of $\F(X)$ and by the above, we have that $\sum_{j=1}^{l} a_{i_j} D_{i_j} = 0$.
Thus $D_{i_1}^T, \ldots, D_{i_l}^T$ are linearly dependent in $M_k$ over $\F(X)$, a contradiction to our assumption.  

Thus we have shown that for any $\{i_1, \ldots, i_l\} \subseteq \{1, \ldots, m\}$ such that $l \leq k$,
$C_{i_1}, \ldots, C_{i_l}$ are linearly independent over $\F$ if and only if  $D_{i_1}, \ldots, D_{i_l}$ are linearly independent over 
$\F(X)$. To estimate the running time, observe that for each $C_i$ we can compute $D_i$ in $\cO(kn)$ field operations and thus we can compute $M_k$ in $\cO(mnk)$ field operations.  
This completes the proof of this lemma.
\end{proof}
Lemma~\ref{lemma: classic truncation} is useful in obtaining $k$-truncation of matrices which entries are either 
from the filed of large characteristic or from $\mathbb Q$.  
The following lemma allows us to find truncations in fields of small characteristic 
which have large order. We however require a primitive element or an element of high order 
of such a field to compute the truncation. In the next lemma we demand a lower bound on the size of the field as we need an element of certain order. We will later see how to 
remove this requirement from the statement of the next lemma.

%
\begin{lemma} \label{lemma: alpha folded truncation}
Let $\F$ be a finite field
and $\alpha$ be an element of $\F$ 
of order at least $(n-1)(k-1)+1$. 
Let $M$ be a $(n \times m)$ matrix of rank $n$ over a field $\F$.
Then we can compute a ($k \times m)$ matrix $M_k$ of rank $k$ over the field $\F(X)$ 
which is a $k$-truncation of the matrix $M$ in $\cO(mnk)$ field operations.
\end{lemma}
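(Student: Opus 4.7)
The plan is to mirror the proof of Lemma~\ref{lemma: classic truncation} almost verbatim, swapping the classical Wronskian for the $\alpha$-folded Wronskian and citing Theorem~\ref{thm: alpha folded wronskian} in place of Theorem~\ref{thm: classic wronskian}. First, for each column $C_i$ of $M$, I associate the polynomial $P_i(X) = \sum_{j=1}^{n} C_i[j] X^{j-1} \in \F[X]^{<n}$ via the correspondence of Lemma~\ref{lemma: vectors and polynomials}. Next, for each $i \in \{1,\ldots,m\}$ define the column vector
\[
Z_i = \bigl(P_i(X),\, P_i(\alpha X),\, \ldots,\, P_i(\alpha^{k-1}X)\bigr)^T
\]
with entries in $\F[X] \subseteq \F(X)$, and set $M_k := [Z_1, Z_2, \ldots, Z_m]$, a $k \times m$ matrix over $\F(X)$.

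To verify that $M_k$ is a $k$-truncation of $M$, fix $I = \{i_1,\ldots,i_l\} \subseteq \{1,\ldots,m\}$ with $l \leq k$. For the forward direction, suppose $C_{i_1},\ldots,C_{i_l}$ are linearly independent over $\F$. By Lemma~\ref{lemma: vectors and polynomials}, $P_{i_1}(X),\ldots,P_{i_l}(X)$ are linearly independent over $\F$. Since $\alpha$ has order greater than $(n-1)(k-1) \geq (n-1)(l-1)$, Theorem~\ref{thm: alpha folded wronskian} applies to $P_{i_1},\ldots,P_{i_l}$ and gives $\det(W_\alpha(P_{i_1},\ldots,P_{i_l})) \not\equiv 0$ in $\F[X]$. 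But $W_\alpha(P_{i_1},\ldots,P_{i_l})$ is precisely the $l \times l$ submatrix of $M_k[\{1,\ldots,l\},I]$ obtained by restricting the columns $Z_{i_1},\ldots,Z_{i_l}$ to the first $l$ rows. Hence $M_k$ restricted to the columns of $I$ has rank $l$ over $\F(X)$ by Lemma~\ref{lemma: det and lin ind}, so $Z_{i_1},\ldots,Z_{i_l}$ are linearly independent over $\F(X)$.

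For the reverse direction, suppose toward a contradiction that $Z_{i_1},\ldots,Z_{i_l}$ are linearly independent over $\F(X)$ while $C_{i_1},\ldots,C_{i_l}$ are linearly dependent over $\F$. By Lemma~\ref{lemma: vectors and polynomials} there exist $a_{i_1},\ldots,a_{i_l} \in \F$, not all zero, with $\sum_{j=1}^{l} a_{i_j} P_{i_j}(X) \equiv 0$. Substituting $\alpha^d X$ for $X$ yields $\sum_{j=1}^{l} a_{i_j} P_{i_j}(\alpha^d X) \equiv 0$ for every $d \in \{0,1,\ldots,k-1\}$, so $\sum_{j=1}^{l} a_{i_j} Z_{i_j} = 0$. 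Since $\F \subseteq \F(X)$, this contradicts the linear independence of the $Z_{i_j}$ over $\F(X)$.

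Finally, for the running time, each $Z_i$ is produced by evaluating the $k$ formal substitutions $X \mapsto \alpha^d X$ on $P_i$, which amounts to scaling each coefficient of $P_i$ by the appropriate power of $\alpha$; this takes $\cO(nk)$ field operations per column and $\cO(mnk)$ operations in total. No step here is a real obstacle, since Theorem~\ref{thm: alpha folded wronskian} has already absorbed all of the technical work about folded Wronskians; the only point requiring care is checking that the order requirement on $\alpha$ from the hypothesis, $>(n-1)(k-1)$, is strong enough to invoke Theorem~\ref{thm: alpha folded wronskian} uniformly for every $l \leq k$, which it is.
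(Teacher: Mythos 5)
Your proposal is correct and follows essentially the same route as the paper's own proof: associate polynomials to columns via Lemma~\ref{lemma: vectors and polynomials}, build the columns $(P_i(X),P_i(\alpha X),\ldots,P_i(\alpha^{k-1}X))^T$, invoke Theorem~\ref{thm: alpha folded wronskian} on the top $l\times l$ submatrix for the forward direction, and use the substitution $X\mapsto \alpha^d X$ to transfer a dependence for the reverse direction, with the same $\cO(mnk)$ cost analysis. Your explicit remark that order $>(n-1)(k-1)\geq (n-1)(l-1)$ justifies applying the theorem for every $l\leq k$ is a point the paper leaves implicit, but the argument is the same.
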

\begin{proof} 
Let $\F[X]$ be the ring of polynomials in $X$ over $\F$ and let $\F(X)$ be the corresponding field of fractions.
Let $C_1,\ldots,C_m$ denote the columns of $M$.  
Observe that we have a polynomial $P_i(X)$ corresponding to the column $C_i$ of degree at most $n-1$, 
and by Lemma~\ref{lemma: vectors and polynomials} we have that $C_{i_1}, \ldots, C_{i_l}$ are linearly independent over $\F$
if and only if $P_{i_1}(X), \ldots, P_{i_l}(X)$ are linearly independent over $\F$.
Further note that $P_i(X)$ lies in $\F[X]$ (and also in $\F(X)$). 
%
%

Let $D_i$ be the vector $(P_i(X), P_i( \alpha X),  \ldots, P_i(\alpha^{k-1}X))$.
Observe that the entries of $D_i$ are polynomials of degree at most $n-1$ and are elements of $\F[X]$.
Let us define the matrix $M_k$ to be the $(k \times m)$ matrix whose columns are the vectors $D_i^T$, 
and note that $M_k$ is a matrix with entries from $\F[X]\subseteq \F(X)$.
We will show that $M_k$ is a desired $k$-truncation of the matrix $M$ . 


Let $I\subseteq \{1,\ldots,m\}$ such that $|I| = l \leq k$.
Let $C_{i_1},\ldots,C_{i_l}$
be a  linearly independent set of columns of the matrix $M$ over $\F$,  where $I = \{ i_1, \ldots, i_l \}$.
We will show that $D_{i_1}^T, \ldots, D_{i_l}^T$ are linearly independent in $M_k$ over $\F(X)$.
Consider the $k \times l$ matrix $M_I$ whose columns are the vectors $D_{i_1}^T, \ldots, D_{i_l}^T$.
We shall show that $M_I$ has rank $l$ by showing that there is a $l \times l$ submatrix 
whose determinant is a non-zero polynomial.
Let $P_{i_1}(X), \ldots, P_{i_l}(X)$ be the polynomials corresponding to the vectors $C_{i_1}, \ldots, C_{i_l}$.  
By Lemma~\ref{lemma: vectors and polynomials} we have that $P_{i_1}(X),  \ldots, P_{i_l}(X)$ are linearly independent over $\F$. 
Then by Theorem~\ref{thm: alpha folded wronskian}, the $(l \times l)$ matrix formed by the column vectors 
$(P_{i_j}(X), P_{i_j}(\alpha X), \ldots, P_{i_j}(\alpha^{(l-1)}X))^T$, $i_j\in I$, is a non-zero determinant in $\F[X]$. 
But note that this matrix is a submatrix of $M_I$. Therefore $M_I$ has rank $l$ in $\F(X)$.
Therefore the vectors $D_{i_1}, \ldots, D_{i_l}$ are linearly independent in $\F(X)$. This completes the proof of the forward direction.

Let $I\subseteq \{1,\ldots,m\}$ such that $|I| = l \leq k$ and let $D_{i_1}^T, \ldots, D_{i_l}^T$ be 
linearly independent in $M_k$ over $\F(X)$, where $I = \{ i_1, \ldots, i_l \}$. We will show that the corresponding set of columns  $C_{i_1}, \ldots, C_{i_l}$ are also 
linearly independent over $\F$.  For a contradiction assume that  $C_{i_1}, \ldots, C_{i_l}$ are  linearly {\em dependent}  over $\F$. 
Let $P_{i_1}(X), \ldots, P_{i_l}(X)$ be the polynomials in $\F[X]$ corresponding to these vectors.  
Then by Lemma~\ref{lemma: vectors and polynomials} we have that $P_{i_1}(X), \ldots, P_{i_l}(X)$ are linearly dependent over $\F$
So there is a tuple $a_{i_1}, \ldots, a_{i_l}$ of values of $\F$ such that $\sum_{j=1}^{l} a_{i_j} P_{i_j}(X) = 0$.
Therefore, for any $d \in \{1, \ldots, l-1 \}$, we have that $\sum_{j=1}^{l} a_{i_j} P_{i_j}(\alpha^{d}X) = 0$.
Now let $D_{i_1}^T, \ldots, D_{i_l}^T$ be the column vectors of $M_k$ corresponding to $C_{i_1}, \ldots, C_{i_l}$.
Note that $\F$ is a subfield of $\F(X)$ and by the above, we have that $\sum_{j=1}^{l} a_{i_j} D_{i_j} = 0$.
Thus $D_{i_1}^T, \ldots, D_{i_l}^T$ are linearly dependent in $M_k$ over $\F(X)$, a contradiction to our assumption.

Thus we have shown that for any $\{i_1, \ldots, i_l\} \subseteq \{1, \ldots, m\}$ such that $l \leq k$,
$C_{i_1}, \ldots, C_{i_l}$ are linearly independent over $\F$ if and only if  $D_{i_1}, \ldots, D_{i_l}$ are linearly independent over 
$\F(X)$. To estimate the running time, observe that for each $C_i$ we can compute $D_i$ in $\cO(kn)$ field operations and thus we can compute $M_k$ in $\cO(mnk)$ field operations.  This completes the proof of this lemma.
%
\end{proof}

In Lemma~\ref{lemma: alpha folded truncation} we require that $\alpha$ be an element of order at least $(n-1)(k-1)+1$. This implies that 
the order of the field $\F$ must be at least $(n-1)(k-1)+1$.
We can ensure these requirements by preprocessing the input before invoking the Lemma~\ref{lemma: alpha folded truncation}. 
Formally, we show the following lemma.

\begin{lemma} \label{lemma: folded preprocess}
Let $M$ be a matrix of dimension $n \times m$ over a finite field $\F$, and of rank  $n$.
Let $\F = \F_{p^\ell}$ where $p < n$. Then in polynomial time we can find an extension field $\K$ 
of order at least $nk + 1$ 
and an element $\alpha$ of $\K$ of order at least $nk+1$,
such that $M$ is a matrix over $\K$ with the same linear independence relationships
between it's columns as before.
\end{lemma}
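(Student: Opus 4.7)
The plan is to split into two cases depending on whether $|\F|$ is already large enough to contain an element of order at least $nk+1$, and in each case invoke the appropriate lemma established in the previous subsection. Before that, I would observe that the ``same linear independence relationships'' claim is automatic: whenever $\F \subseteq \K$, a finite collection of vectors in $\F^n$ is linearly independent over $\F$ if and only if it is linearly independent over $\K$. One direction is trivial; for the other, a witnessing minor is an element of $\F$ which equals its image in $\K$, and non-vanishing in $\F$ coincides with non-vanishing in $\K$. So the whole content of the lemma is producing $\K$ and $\alpha$.

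Case A (large field): Suppose $|\F| = p^\ell \geq nk+1$. Set $\K = \F$ and apply Lemma~\ref{lemma: large order element} with parameter $nk$. Since $nk < |\F|$, this produces in $\poly(p,\ell,nk)$ time an element $\alpha \in \F$ of order greater than $nk$, i.e. at least $nk+1$. The assumption $p < n$ ensures this running time is polynomial in the input size.

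Case B (small field): Suppose $|\F| = p^\ell \leq nk$, so in particular $p^\ell < nk+1$. Then I apply Lemma~\ref{lemma: increase field size} with parameter $nk+1$, whose hypothesis $p^\ell < nk+1$ is satisfied. This yields, in $(nk)^{\cO(1)}$ time, an extension $\K$ of $\F$ with $nk+1 < |\K| < (nk+1)^2$ together with a primitive element $\alpha$ of $\K$. A primitive element of $\K$ has order $|\K^*| = |\K| - 1 \geq nk+1$, as required.

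I do not expect any real obstacle: the two cases exactly match the hypotheses of Lemmas~\ref{lemma: large order element} and~\ref{lemma: increase field size}, and the bound $p < n$ is used only to keep the running time in Case A polynomial. The only thing worth stating explicitly in the write-up is the field-extension remark in the first paragraph, since the lemma statement does talk about ``$M$ as a matrix over $\K$'' and one must check that nothing about the matroid structure changes.
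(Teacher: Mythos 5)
Your proposal is correct and follows essentially the same route as the paper: the same case split on whether $|\F|$ exceeds roughly $nk+1$, invoking Lemma~\ref{lemma: large order element} in the large-field case and Lemma~\ref{lemma: increase field size} in the small-field case, plus the observation that linear (in)dependence of columns is preserved when passing to an extension field. If anything, your handling of the boundary value $p^\ell = nk+1$ is slightly more careful than the paper's, since placing it in the large-field case keeps the hypothesis of Lemma~\ref{lemma: increase field size} satisfied.
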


\begin{proof}  We distinguish two cases by comparing the values of $p^l$ and $n$.
\begin{description}
\item [Case 1: $p^\ell \leq nk+1$. ] In this case we use Lemma~\ref{lemma: increase field size} to obtain 
an extension $\K$ of $\F$ of  size at most $(nk+1)^2$, and a primitive element $\alpha$ of $\K$
in polynomial time. 
\item[Case 2: $nk+1 < p^\ell$.]  In this case we set $\K = \F$ and and use Lemma~\ref{lemma: large order element} 
to find an element of order at least $nk$, in time $\poly(p,l,nk)$.
\end{description}
%
%

Observe that  $\F$ is a subfield of $\K$, $M$ is also a matrix over $\K$. Thus, 
any set of linearly dependent columns over $\F$ continue to be linearly dependent over $\K$.
Similarly, linearly independent columns continue to be linearly independent.  
This completes the proof of this lemma.
\end{proof}

Next we show a result that allows us to find basis of matrices with entries from $\F[X]$. 
\begin{lemma}
\label{lemma:ind}
 Let $M$ be a  $m \times t$ matrix with entries from $\F[X]^{<n}$ and let $m\leq t$.
Let $w~:~\mathbf{C}(M)\rightarrow {\mathbb R}^+$ be a weight function. 
Then we can compute minimum weight column basis of $M$ in $\cO(m^2n^2t+m^{\omega}nt)$ field operations in $\F$
\end{lemma}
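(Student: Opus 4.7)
The plan is to apply the standard matroid greedy algorithm for minimum weight basis to the column matroid of $M$ regarded as a matrix over the field $\F(X)$. Sort the $t$ columns of $M$ in non-decreasing order of weight; process them in this order, maintaining a set $B$ of already-chosen columns, and add a column $c$ to $B$ iff $B\cup\{c\}$ is linearly independent over $\F(X)$. The correctness of outputting a minimum weight basis is the standard matroid exchange argument, so the real work is implementing the $\F(X)$-independence test using only field operations in $\F$ within the stated budget.

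Fix $N := m(n-1)+1$ distinct elements $\alpha_1,\ldots,\alpha_N\in\F$ (if $|\F|<N$, first pass to a small extension of $\F$ via Lemma~\ref{lemma: increase field size}; this incurs only polylogarithmic overhead per field operation and can be absorbed). For each $j$, compute the evaluation matrix $M^{(j)}:=M(\alpha_j)\in\F^{m\times t}$. Each of the $mt$ polynomial entries has degree $<n$ and is evaluated by Horner's rule in $\cO(n)$ operations, so forming all $N$ matrices costs $\cO(Nmnt)=\cO(m^2n^2t)$ field operations, matching the first term in the claimed bound.

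The independence test relies on the following observation: for $I\subseteq[t]$ with $|I|\le m$, the columns $M[:,I]$ are linearly independent over $\F(X)$ iff some $|I|\times|I|$ minor of $M[:,I]$ is a nonzero polynomial in $\F[X]$; since such a minor has degree at most $|I|(n-1)\le m(n-1)<N$, this is equivalent to some $M^{(j)}[:,I]$ having rank $|I|$ over $\F$. Thus at each greedy step, $c$ should be added to $B$ iff there exists some $j$ with $\mathrm{rank}_\F M^{(j)}[:,B]=|B|$ and $c(\alpha_j)\notin\mathrm{span}\,M^{(j)}[:,B]$. I maintain, for every $j$ in the active set $J^*:=\{j:\mathrm{rank}_\F M^{(j)}[:,B]=|B|\}$, a column-echelon representation of this span in $\F^m$; initially $B=\emptyset$ and $J^*=\{1,\ldots,N\}$. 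When $c$ is accepted, update each surviving span by one rank-one step, and drop any $j$ where $c(\alpha_j)$ was in the span.

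The naive implementation of this greedy step costs $\cO(m^2)$ per $(c,j)$ pair, giving total $\cO(tNm^2)=\cO(nm^3t)$, which is a factor $m^{3-\omega}$ too slow. To achieve $\cO(m^\omega nt)$, I process columns in batches of size $\Theta(m)$: for each batch and each active $j$, use one $\cO(m^\omega)$ rectangular elimination / matrix inversion to determine, relative to the span at the \emph{start} of the batch, which columns of the batch lie in that span and which extend it. Columns lying in the span at every active $j$ are definitively rejected; the surviving columns are then replayed sequentially in weight order against the evolving span so as to faithfully implement the greedy rule. With $\cO(t/m)$ batches and $N=\cO(mn)$ points, the total cost is $\cO((t/m)\cdot N\cdot m^\omega)=\cO(m^\omega nt)$. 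The main obstacle is precisely this last step: the greedy rule for minimum weight bases is inherently sequential (the decision for column $c$ depends on the full current $B$), so batching must be done using a stale snapshot of the span while ensuring that every rejected column truly is dependent on some subset of earlier admitted columns; standard bookkeeping (separating provably rejected columns from provisional survivors and processing survivors in weight order) handles this, and since at most $m$ columns are ever admitted, the sequential post-processing of survivors does not dominate.
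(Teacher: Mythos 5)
Your reduction of $\F(X)$-independence to independence of the evaluated matrices at $N=m(n-1)+1$ points is sound, and the greedy-over-$\F(X)$ skeleton is correct in principle. The gap is in the running-time analysis of the batched greedy, which is exactly where the exponent $\omega$ has to come from. Your justification for the sequential ``replay'' phase is that ``at most $m$ columns are ever admitted,'' but the cost of the replay is governed by the number of batch \emph{survivors}, not by the number of admissions, and these are not comparable. Consider an instance in which each batch consists of $\Theta(m)$ columns that are pairwise dependent modulo the current basis but each individually extends the stale span (e.g.\ all equal to one fresh direction plus vectors already spanned): every column of every batch passes the stale-snapshot filter, so $\Theta(t)$ columns in total must be replayed sequentially, and each rejected survivor must be tested against the evolving span at up to $N=\Theta(mn)$ active evaluation points at $\Theta(m^2)$ field operations per point before it can be discarded. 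This gives $\Theta(m^3nt)$ in the worst case, which exceeds the claimed $\cO(m^{\omega}nt)$; the ``standard bookkeeping'' you invoke is precisely the nontrivial part and is not supplied. Patching it (e.g.\ re-filtering survivors by rank-one updates after each admission) runs into the same kind of overhead unless done with additional care, so as written the complexity bound is not established.

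For comparison, the paper avoids the interleaved greedy altogether. It fixes a set $S\subseteq\F^*$ of $(n-1)m+1$ evaluation points, computes for \emph{each} $\alpha\in S$ a minimum-weight column basis $C(\alpha)$ of $M(\alpha)$ over $\F$ by a black-box algorithm in $\cO(tm^{\omega-1})$ operations, and outputs, among the points attaining the maximum rank, the basis of smallest weight. Correctness follows from two degree arguments: the rank of $M$ equals the maximum rank of the $M(\alpha)$, any columns independent in some $M(\alpha)$ are independent in $M$, and the true minimum-weight basis of $M$ stays independent (hence is a basis) at some $\alpha\in S$, so the selected basis weighs no more than it. This yields $\cO(m^2n^2t)$ for the evaluations plus $\cO(m^{\omega}nt)$ for the per-point basis computations with no sequential interleaving to amortize. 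If you want to keep your greedy formulation, you would need to either supply a genuinely subcubic-per-batch replay procedure or switch to the per-evaluation-point strategy.
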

\begin{proof}
Let $S\subseteq \F^*$ be a set of size $(n-1)m+1$ and for every $\alpha \in S$, 
let $M(\alpha)$ be the matrix obtained by substituting $\alpha$ for $X$ in the polynomials in matrix $M$. 
Now we compute the minimum weight column basis $C(\alpha)$ in $M(\alpha)$ for all $\alpha\in S$. 
Let $l=\max\{|C(\alpha)|~|~\alpha\in S\}$. Among all the column basis of size $l$, let $C(\zeta)$ be a minimum weighted column basis for some $\zeta \in S$.  
Let $C'$ be the columns in $M$  corresponding to $C(\zeta)$. 
We will prove that $C'$ is a minimum weighted column basis of $M$.  Towards this we start with the following claim. 

 \begin{claim}
\label{claim:rnakofmiandm}
 The rank of $M$ is the maximum of the rank of matrices $M(\alpha)$, $\alpha \in S$. 
 \end{claim}
 \begin{proof}
 Let $r\leq m$ be the rank of $M$. Thus, we know that there exists a submatrix $W$of $M$  of dimension $r\times r$ such that 
 $\mathrm{det}(W)$ is a non-zero polynomial.   The degree of  the polynomial  $\mathrm{det}(W(X))\leq (n-1)\times r \leq (n-1) m$. Thus, we know that it has at-most $(n-1)m$ roots. Hence, when we evaluate $\mathrm{det}(W(X))$ on  set $S$ of size more than $(n-1)m$,  
 there exists at least one element in $S$, say $\beta$, such that  $\mathrm{det}(W(\beta))\neq 0$. Thus, the rank of 
 $M$ is upper bounded by the rank of  $M(\beta)$ and  hence upper bounded by the   
 maximum of the rank of matrices $M(\alpha)$, $\alpha \in S$. 
 
 As before let $r\leq l$ be the rank of $M$.  
 Let $\alpha$ be an arbitrary element of $S$. Observe that for any submatrix $Z$ of dimension $r'\times r'$, $r'>r$  we have that 
 $\mathrm{det}(Z(X))\equiv 0$. Thus, for any $\alpha$, the determinant of the corresponding submatrix of $M(\alpha)$ is also $0$. This implies that for any $\alpha$, the rank of $M(\alpha)$ is at most $r$. 
 This completes the proof.  
 \end{proof}
 \noindent 
Claim~\ref{claim:rnakofmiandm} implies that $l=\max\{|C(\alpha)|~|~\alpha\in S\}$  is equal to the rank of $M$.  Our next claim is following. 
\begin{claim}
\label{claim:ind}
For any $\alpha \in S$,  and $C\subseteq \mathbf{C}(M(\alpha))$, if $C$ is  linearly independent in $M(\alpha)$ then $C$ is also linearly independent in $M$.
\end{claim}
The proof follows from the arguments similar to the ones used in proving reverse direction of Claim~\ref{claim:rnakofmiandm}. 
Let $r\leq m$ be the rank of $M$ and let $C^*$ be a minimum weight column basis of $M$. 
Thus, we know that there exists a submatrix $W$ of $M$  of dimension $r\times r$ such that 
$\mathrm{det}(W)$ is a non-zero polynomial.  The degree of  the polynomial  $\mathrm{det}(W(X))\leq (n-1)\times r \leq (n-1) m$. 
Thus, we know that it has at most $(n-1)r$ roots. Hence, when we evaluate $\mathrm{det}(W(X))$ on  set $S$ of size more than 
$(n-1)r$, there exists at least one element in $S$, say $\beta$, such that  $\mathrm{det}(W(\beta))\neq 0$ and the set 
of columns $C^*$ is linearly independent in $M(\beta)$.  Using Claim~\ref{claim:ind} and the fact that $C^*$ is linearly independent 
in both $M(\beta)$ and $M$, we can conclude that $C^*$ is a column basis for $M(\beta)$. Since $|C'|=|C^*|$, $w(C')\leq w(C^*)$, 
$C'$ is indeed  a minimum weighted column basis of $M$.

%

We can obtain any $M(\alpha)$ with at most $\cO(nmt)$ field operations in $\F$. 
Furthermore, we can compute  minimum weight column basis of $M(\alpha)$ in  $\cO(tm^{\omega-1})$ field operations~\cite{BodlaenderCKN13}.  
Hence the total number of field operations over $\F$ is bounded by  $\cO(m^2n^2t+m^{\omega}nt)$.
%
%
\end{proof}

Finally, we combine Lemma~\ref{lemma: classic truncation},  Lemma~\ref{lemma: folded preprocess} and Lemma~\ref{lemma: alpha folded truncation}
to obtain the following theorem.

\begin{theorem}[Theorem~\ref{main:thm: truncation}, restated]
 \label{thm: truncation}
Let $\F=\F_{p^l}$ be a finite field or $\F=\mathbb{Q}$. 
Let $M$ be a $n\times m$ matrix over $\F$ of rank $n$. 
Given a number $k \leq n$, we can compute a matrix $M_k$ over the field $\F(X)$
such that it is a representation of the $k$-truncation of $M$.   
Furthermore, given $M_k$, we can test whether a given set of $l$ columns 
in $M_k$ are linearly  independent in $\cO(n^2k^3)$ field operations.   
\end{theorem}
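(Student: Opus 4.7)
The plan is to assemble Theorem~\ref{thm: truncation} from the lemmas already developed, splitting into cases according to the characteristic of $\F$, and then addressing the promised independence test.

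First, I would handle the ``easy'' cases by direct appeal to Lemma~\ref{lemma: classic truncation}: if $\F = \mathbb{Q}$, or if $\F = \F_{p^\ell}$ with $p > n$, the classical Wronskian construction produces the required $k \times m$ matrix $M_k$ over $\F(X)$ in $\cO(mnk)$ field operations, and Theorem~\ref{thm: classic wronskian} guarantees that linear independence among any $\leq k$ columns is preserved. For the remaining case, $\F = \F_{p^\ell}$ with $p \leq n$, I would first invoke Lemma~\ref{lemma: folded preprocess} to obtain (in polynomial time) an extension field $\K \supseteq \F$ of order $\geq nk+1$ together with an element $\alpha \in \K$ of order $> (n-1)(k-1)$. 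Since $\K \supseteq \F$, the matrix $M$ retains its independence structure when viewed over $\K$; then Lemma~\ref{lemma: alpha folded truncation} applied with this $\alpha$ produces the desired $k$-truncation $M_k$ over $\K(X)$, again in $\cO(mnk)$ field operations of $\K$. Correctness in this case is ultimately Theorem~\ref{thm: alpha folded wronskian}, which is the paper's main technical contribution.

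Second, for the independence test, I would note that by construction every entry of $M_k$ is a polynomial in $\F[X]$ (or $\K[X]$) of degree at most $n-1$. Given a set of $l \leq k$ columns, the submatrix they form is a $k \times l$ matrix over $\F[X]$ whose columns are linearly independent over $\F(X)$ if and only if some $l \times l$ minor has non-vanishing determinant polynomial. Rather than search for such a minor symbolically, I would use the evaluation strategy already employed in Lemma~\ref{lemma:ind}: pick a set $S \subseteq \F$ (or the extension) of size $(n-1)l + 1$, substitute each $\beta \in S$ for $X$ to obtain a scalar matrix $M_k(\beta)$, and compute its rank by Gaussian elimination. By the argument of Claim~\ref{claim:rnakofmiandm}, the rank over $\F(X)$ equals the maximum rank over $\beta \in S$, so the $l$ columns are independent iff some $M_k(\beta)$ attains rank $l$.

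For the running time: each evaluation costs $\cO(knl)$ field operations (one per entry, each of degree $< n$), and each subsequent rank computation on a $k \times l$ matrix costs $\cO(kl^2)$ by Gaussian elimination. Since $|S| = \cO(nl)$, the total is $\cO(nl \cdot (knl + kl^2)) = \cO(n^2 k l^2 + n k l^3)$, which is $\cO(n^2 k^3)$ using $l \leq k$. I do not anticipate a genuine obstacle here; the proof is essentially a packaging of the previous lemmas. The only point requiring mild care is articulating that in the small-characteristic case the matrix $M_k$ lives over the extension $\K(X)$ rather than $\F(X)$ in the strict sense, but since $\K$ is computed explicitly and has size polynomially bounded in $n$ and $k$, the field operations remain polynomially bounded in the input.
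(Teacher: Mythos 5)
Your proposal matches the paper's own proof: the same case split (Lemma~\ref{lemma: classic truncation} for $\mathbb{Q}$ or $\ch(\F)>n$, Lemma~\ref{lemma: folded preprocess} followed by Lemma~\ref{lemma: alpha folded truncation} for small characteristic, with the same remark that one really works over the extension $\K$). Your independence test is just an inlined version of Lemma~\ref{lemma:ind}, which the paper applies directly to the transpose of the $k\times l$ submatrix, and it gives the same $\cO(n^2k^3)$ bound, so the argument is correct and essentially identical in approach.
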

\begin{proof}
We first consider the case when the characteristic of the field $\F$ is $0$,
or if the characteristic  $p$ of the finite field is strictly larger than $n$ ($\ch(\F)>n$). 
In this case we apply Lemma~\ref{lemma: classic truncation} to obtain a matrix $M_k$ over $\F(X)$
which is a $k$-truncation of $M$. 
We now  consider the case when $\F$ is a finite field and $\ch(\F) <n$. 
First apply Lemma~\ref{lemma: folded preprocess} to ensure that
the order of the field $\F$ is greater than $(n-1)(k-1) + 1$ and to
obtain an element of order at least $(n-1)(k-1) + 1$  in the field $\F$.  Of course by doing this, 
we have gone to an extension of $\K$ of $\F$  of size at least $(n-1)(k-1)+1$. However, for brevity of presentation 
we will assume that the input is given over such an extension. 
We then apply Lemma~\ref{lemma: alpha folded truncation}
to obtain a matrix $M_k$ over $\F(X)$ which is a representation
of the $k$-truncation of the matrix $M$.  One should notice that in fact  
This completes the description of $M_k$ in all the cases. 

Let $I\subseteq \{1,\ldots,m\}$ such that $|I| = l \leq k$.
Let $D_{i_1},\ldots,D_{i_l}$
be a set of columns of the matrix $M_k$ over $\F$,  where $I = \{ i_1, \ldots, i_l \}$. 
Furthermore, by $M_I$ we denote the $k\times \ell$ 
submatrix of $M_k$ containing the columns   $D_{i_1},\ldots,D_{i_l}$.  
To test whether these columns are linearly independent, we can apply Lemma~\ref{lemma:ind} on $M_I^T$ and see the 
size of column basis of $M_I^T$ is $l$ or not. This takes time $\cO(l^2n^2k+l^\omega nk)=\cO(n^2k^3)$ field operations in $\F$.   
\end{proof}

\subsubsection{Representating the truncation over a finite field}
In Theorem~\ref{thm: truncation}, the representation $M_k$ is over the field $\F(X)$.
However, in some cases  this matrix can also be viewed as a representation over a finite extension of $\F$ of sufficiently 
large degree. That is, if $\F=\F_{p^l}$ is a finite field then $M_k$ can be given over $\F_{p^{l'}}$ where $l'\geq nkl$. 
Formally we have the following lemma.

\begin{theorem} \label{lemma: represent truncation over finite field}
Let $M$ be a $n\times m$ matrix over $\F$ of rank $n$, $k\leq n$ be a positive integer and $N$ be the size of the input matrix. If $\F=\F_{p}$ be a prime field or $\F=\F_{p^l} $
where $p = N^{\cO(1)}$, 
then in polynomial time we can find a $k$-truncation $M_k$
of $M$  over a finite extension $\K$ of $\F$ where $\K=\F_{p^{nkl}}$. 
\end{theorem}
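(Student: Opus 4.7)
The plan is to take the matrix $M_k$ over $\F(X)$ produced by Theorem~\ref{thm: truncation}, whose entries are polynomials in $\F[X]$ of degree at most $n-1$, and to reduce the indeterminate $X$ modulo an irreducible polynomial $f(X) \in \F[X]$ of degree $nk$. The quotient $\F[X]/(f(X))$ is a field of order $p^{nkl}$ extending $\F$, and since every entry of $M_k$ and every relevant minor determinant has degree strictly less than $\deg f$, non-zero polynomials inject into non-zero elements, so the linear-independence structure of $M_k$ carries over verbatim.

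Concretely, I first invoke Lemma~\ref{lemma: irred shoup} (Part~2) to compute an irreducible $f(X) \in \F[X]$ of exact degree $nk$; its stated running time is polynomial in $\sqrt{p}$, $l$, and $nk$, which is polynomial in $N$ under the hypothesis $p = N^{\cO(1)}$ (the prime-field case sits inside by setting $l = 1$). Set $\K := \F[X]/(f(X))$, a field of order $p^{nkl}$ extending $\F$. Next I apply Theorem~\ref{thm: truncation} to obtain $M_k$ over $\F(X)$; each entry has degree $\leq n-1 < nk = \deg f$, so reducing modulo $f$ acts as the identity on the polynomial representatives and yields a matrix $\bar{M}_k$ living over $\K$. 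To verify the $k$-truncation property, fix $I \subseteq [m]$ with $|I| = \ell \leq k$. By Theorem~\ref{thm: truncation} the columns of $M$ indexed by $I$ are linearly independent over $\F$ iff some $\ell \times \ell$ minor of the submatrix of $M_k$ indexed by $I$ has non-zero determinant as a polynomial in $\F[X]$. Such a determinant has degree at most $(n-1)\ell \leq (n-1)k < nk$, and the canonical embedding $\F[X]^{<nk} \hookrightarrow \K$ sends non-zero polynomials to non-zero elements. Therefore the dependence structure on every subset of at most $k$ columns is preserved exactly, and $\bar{M}_k$ is the desired $k$-truncation over $\K = \F_{p^{nkl}}$.

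The main obstacle lies in the small-characteristic subcase where $\ch(\F) \leq n$ and $\F$ itself does not contain an $\alpha$ of order $>(n-1)(k-1)$ demanded by Lemma~\ref{lemma: alpha folded truncation}. A naive preprocessing step that enlarges $\F$ would push the entries of $M_k$ into an extension of $\F[X]$, and the subsequent quotient by $f$ would then land in a field strictly larger than $\F_{p^{nkl}}$. I would resolve this by constructing $\K$ first and locating $\alpha$ inside $\K$ via Lemma~\ref{lemma: large order element}, then running the folded Wronskian construction in a coordinated way so that the eventual reduction modulo $f$ stays inside $\K$; arguing that this coordination still yields minor determinants with the $\F[X]$-structure and degrees below $nk$ used for the injection into $\K$ is the key technical subtlety.
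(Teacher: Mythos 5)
Your core construction is the same as the paper's: take $M_k$ from Theorem~\ref{thm: truncation}, compute an irreducible polynomial of degree (at least) $nk$ via Lemma~\ref{lemma: irred shoup}, pass to $\K=\F[X]/(f(X))$, and use the fact that every relevant minor of $M_k$ is a polynomial of degree at most $(n-1)k<nk$, so it vanishes in $\F[X]$ if and only if its image in $\K$ vanishes. Two peripheral remarks: the paper handles the prime-field alternative with part~1 of Lemma~\ref{lemma: irred shoup} (Adleman--Lenstra, polynomial in $\log p$), whereas your use of part~2 with $l=1$ silently needs $p=N^{\cO(1)}$ even in that case because of the $\sqrt{p}$ term; moreover part~1 only guarantees a degree in $[nk,\,c\,nk\log p]$, so the exact identification $\K=\F_{p^{nkl}}$ is only available when the degree comes out to be exactly $nk$ --- a looseness the paper itself shares.

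The genuine gap is the subcase you flag and leave open. When $\ch(\F)\le n$ and $|\F|\le (n-1)(k-1)+1$, Theorem~\ref{thm: truncation} first passes (via Lemma~\ref{lemma: folded preprocess} and Lemma~\ref{lemma: increase field size}) to an extension $\K'=\F_{p^{lr}}$, and the entries of $M_k$ then lie in $\K'[X]$, not $\F[X]$. Your proposed repair --- build $\K=\F[X]/(f(X))$ first, find $\alpha\in\K$ of large order, and run the folded Wronskian there --- does not restore the injection argument: once $\alpha\notin\F$, the minor determinants are elements of $\K[X]$, and ``reduction modulo $f$'', i.e.\ evaluation at the root $\theta$ of $f$, is not injective on low-degree polynomials over $\K$ (already $X-\theta$ is killed). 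What the degree argument really needs is that $X$ be specialized at an element whose degree over the \emph{coefficient field of the minors} exceeds $(n-1)k$; when that coefficient field is $\K'$ rather than $\F$, this forces an extension of degree $>(n-1)k$ over $\K'$, i.e.\ a field of size roughly $p^{lrnk}$, strictly larger than the claimed $\F_{p^{nkl}}$. The paper's own proof quietly assumes that the entries of $M_k$ are in $\F[X]$ (equivalently, that $\F$ already contains an element of order $>(n-1)(k-1)$, which Lemma~\ref{lemma: large order element} provides whenever $|\F|$ is large enough), so your observation exposes an imprecision there as well --- but as written your proposal does not close it, and the ``coordination'' you defer is exactly the point where the argument breaks.
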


\begin{proof}  Let $M_k$ be the matrix returned by Theorem~\ref{thm: truncation}.  Next we show how we can view the entries in $M_k$ over a finite extension of $\F$. 
Consider any extension $\K$ of $\F$ of degree $r \geq nk$.
Thus $\K = \frac{\F[X]}{r(X)}$, where 
$r(X)$ is a irreducible polynomial
in $\F[X]$ of degree $r$.  Recall that each entry of $M_k$ is a polynomial in $\F[X]$ of degree
at most $n-1$ and therefore they are present in $\K$.
Further the determinant of any $k \times k$ submatrix of $M_k$
is identically zero in $\K$ if and only if it is identically zero in $\F(X)$.
This follows from the fact that the determinant is a polynomial of degree
at most $(n-1)k$ and therefore is also present in $\K$. Thus $M_k$ is a representation over $\K$.

To specify the field $\K$ we need to compute the irreducible polynomial $r(X)$.
If $\F$ is a prime field, i.e. $\F = \F_p$, then we can compute the polynomial $r(X)$
using the first part of Lemma~\ref{lemma: irred shoup}. 
And if $p = N^{\cO(1)}$ we can use the second part of  Lemma~\ref{lemma: irred shoup} to compute $r(X)$.
Thus we have a well defined $k$-truncation of $M$ over the finite field $\K = \frac{\F[X]}{r(X)}$. Furthermore, if degree of $r(X)$ 
is $nk$ then $\K$ is isomorphic to  $\F_{p^{nkl}}$. 
This completes the proof of this theorem. 
\end{proof}
\section{ Application to Computation of Representative Families}
In this section we give deterministic algorithms to compute representative families of a linear matroid, given its representation matrix. 
We start with the definition of a {\em $q$-representative family}.  
\begin{definition}[{\bf $q$-Representative Family}]
Given a matroid  \mat{} and a family $\cal S$ of subsets of $E$, we say that a subfamily $\widehat{\cal{S}}\subseteq \cal S$ 
is {\em $q$-representative} for $\cal S$ 
if the following holds: for every set $Y\subseteq  E$ of size at most $q$, if there is a set $X \in \cal S$ disjoint from $Y$ with $X\cup Y \in \I$, then there is a set $\whnd{X} \in \whnd{\cal S}$ disjoint from $Y$ with $\whnd{X} \cup  Y \in \I$.  If $\hat{\cal S} \subseteq {\cal S}$ is $q$-representative for ${\cal S}$ we write \rep{S}{q}. 
\end{definition}

In other words if some independent set in $\cal S$ can be extended to a larger independent set by $q$ new elements, then there is a set in 
$\widehat{\cal S}$ that can be extended by the same $q$ elements.   We say that a family  $ \cS = \{S_1,\ldots, S_t\}$ of 
sets is a {\em $p$-family} if each set in $\cal S $ is of size $p$. In~\cite{FominL14abs-1304-4626,FominLS14} the following theorem is proved. See \cite[Theorem~4]{FominL14abs-1304-4626}.


%

\begin{theorem}[\cite{FominL14abs-1304-4626,FominLS14}]
\label{thm:repsetlovaszrandomized}
Let \mat{}   be a linear matroid and let $ \cS = \{S_1,\ldots, S_t\}$ be a $p$-family of independent sets. Then there 
exists \rep{S}{q} of size \bnoml{p+q}{p}. 
Furthermore, given a representation \repmat{M}  of $M$ over a field $ \mathbb{F}$, there is a randomized algorithm  computing  
\rep{S}{q} in  \tgem \, operations over $ \mathbb{F}$. 
 \end{theorem}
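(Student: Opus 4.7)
The plan is to establish both the existence bound and the algorithmic running time simultaneously, via exterior algebra on top of a rank-$(p+q)$ truncation of $M$. First I would truncate $M$ to rank $p+q$ using Proposition~\ref{prop:truncationrep}, obtaining a matroid $M'$ with a representation matrix $A'$ of dimension $(p+q) \times |E|$; since a set of size at most $p+q$ is independent in $M'$ iff it is independent in $M$, working in $M'$ does not change the representative-family problem. For each $S_i = \{e^i_1,\ldots,e^i_p\} \in \cS$ I would form the wedge product
\[
  w_i \;=\; A'_{e^i_1} \wedge \cdots \wedge A'_{e^i_p} \;\in\; \bigwedge^{p} \F^{p+q},
\]
a vector in a space of dimension exactly $\binom{p+q}{p}$ whose coordinates in the standard basis are the $\binom{p+q}{p}$ maximal $p \times p$ minors of $A'[\cdot, S_i]$. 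The governing fact is that for any $Y \subseteq E$ with $|Y| = q$, writing $w_Y$ for the analogous $q$-fold wedge, $w_i \wedge w_Y \neq 0$ in $\bigwedge^{p+q}\F^{p+q} \cong \F$ iff $S_i \cup Y$ is independent in $M'$; moreover $w_i \wedge w_Y = 0$ whenever $S_i \cap Y \neq \emptyset$, so disjointness is enforced automatically.

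For the existence claim I would take $\widehat{\cS} \subseteq \cS$ such that $\{w_i : S_i \in \widehat{\cS}\}$ is an $\F$-basis of the $\F$-span of $\{w_i : S_i \in \cS\}$; immediately $|\widehat{\cS}| \leq \binom{p+q}{p}$. For any $Y$ witnessing that some $S_i \in \cS$ extends to an independent set, the linear functional $w \mapsto w \wedge w_Y$ is nonzero on the whole span, hence nonzero on some basis element $w_{\widehat{S}}$, giving $\widehat{S} \in \widehat{\cS}$ disjoint from $Y$ with $\widehat{S} \cup Y \in \mathcal{I}$. This is a matroid-theoretic incarnation of the Bollob\'as--Lov\'asz two-families inequality.

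For the algorithmic part I would process $S_1, \ldots, S_t$ in order, maintaining a matrix $B$ in reduced form whose columns span the wedge vectors already committed to $\widehat{\cS}$. For each new $S_i$ I would (i) compute $w_i$ by evaluating its $\binom{p+q}{p}$ coordinates as $p \times p$ determinants of submatrices of $A'$, at a cost of $\cO(p^\omega)$ per coordinate, giving $\cO\bigl(\binom{p+q}{p} t p^\omega\bigr)$ over all $t$ sets; and (ii) test whether $w_i$ lies in the column span of $B$ by one round of Gaussian elimination, which by batching via fast rectangular matrix multiplication in a space of dimension $\binom{p+q}{q}$ costs $\cO\bigl(\binom{p+q}{q}^{\omega-1}\bigr)$ amortised per set, contributing the second term $t\binom{p+q}{q}^{\omega-1}$. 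Append $w_i$ to $B$ if linearly independent, otherwise discard $S_i$. By the existence argument, the surviving subfamily has size at most $\binom{p+q}{p}$ and is $q$-representative.

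The conceptual ingredients -- the wedge-product dictionary and the basis-extraction argument -- are clean once stated. The real obstacle is scheduling step (ii) so that its cost truly aggregates to $t\binom{p+q}{q}^{\omega-1}$; this requires applying fast matrix multiplication to rectangular matrices of dimension roughly $\binom{p+q}{q} \times t$ rather than performing $t$ independent eliminations, and this amortisation is the heart of the technical analysis. Randomisation enters only via Proposition~\ref{prop:truncationrep}; derandomising precisely this step is the content of Theorem~\ref{thm: truncation}, which is why the rest of the paper then upgrades the algorithm to a deterministic one.
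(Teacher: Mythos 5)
Your proposal is correct and follows essentially the same route as the paper (and the cited Fomin et al.\ proof it relies on): a randomized $(p+q)$-truncation via Proposition~\ref{prop:truncationrep}, then for each $S_i$ the vector of $p\times p$ minors of the truncated representation (your wedge coordinates are exactly these), a column-basis extraction giving the $\binom{p+q}{p}$ bound, and correctness via the pairing $w\mapsto w\wedge w_Y$, which is the generalized Laplace expansion used in \cite[Theorem 3.1]{FominLS14}. The only cosmetic difference is the coordinate-free exterior-algebra phrasing and the incremental description of the elimination step, which the paper instead performs as one fast rectangular column-basis computation; the costs coincide.
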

Let $p+q=k$.  Fomin et al.~\cite[Theorem 3.1]{FominLS14} first  gave  a deterministic algorithm for computing $q$-representative of a $p$-family of indpendent sets if the rank of the corresponding matroid is $p+q$.  To prove Theorem~\ref{thm:repsetlovaszrandomized} one first computes the representation matrix of a 
$k$-truncation of \mat. 
This step returns a representation of a $k$-truncation of \mat{} with a high probability.  Given this matrix, one 
applies \cite[Theorem 3.1]{FominLS14} and arrive at Theorem~\ref{thm:repsetlovaszrandomized}.  In this section we design a deterministic 
algorithm for computing $q$-representative even if the underlying linear matroid has unbounded rank, using deterministic truncation of linear matroids.


Observe that the representation given by Theorem~\ref{thm: truncation}  is over $\F(X)$. For the purpose of computing $q$-representative of a $p$-family of independent sets we need to find a set of linearly independent columns over a matrix withe entries from $\F[X]$.  However, deterministic   algorithms to compute basis of matrices over $\F[X]$  is not as fast as compare to the algorithms where we do not need to do symbolic computation. We start with a lemma that allows us to find  a set of linearly independent columns of a matrix over $\F[X]$ quickly; though the size of the set returned by the algorithm given by the lemma  could be slightly larger than the basis of the given matrix. 
\begin{define}
Let $W=\{v_1,\ldots,v_m\}$ be a set of vectors  over $\F$ and $w:W \rightarrow {\mathbb R}^{+}$. We say that $S\subseteq W$ is a 
{\em spanning set}, if every $v\in W$ can be written as linear combination of vectors in  $S$ with coefficients 
from $\F$. We say that $S$ is a {\em nice spanning set of $W$}, if $S$ is a spanning set and for  any $z\in W$
if $z = \sum_{v \in S} \lambda_v v$, and $ 0 \neq \lambda_v \in \F$ then we have $w(v) \leq w(z)$. 
\end{define}
The following lemma  enables us to find a small size 
spanning set  of vectors over $\F(X)$.

\begin{lemma} \label{lemma:GE in Fx}
Let $\F$ be a field and 
let $M \in \F[X]^{m \times t}$ be a matrix over $\F[X]^{<n}$ 
and let $w:\mathbf{C}(M) \rightarrow {\mathbb R}^{+}$ be a weight function. 
Then we can find a nice spanning set $S$  of $\mathbf{C}(M) $
of size at most $nm$ with at most 
$\cO(t (nm)^{\omega -1})$ field operations. 
\end{lemma}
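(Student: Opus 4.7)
The plan is to reduce the problem to computing a minimum weight column basis of an ordinary matrix over $\F$, and then invoke a known fast algorithm for that task. The first step is a flattening reduction: each column $c_i$ of $M$ is a vector of $m$ polynomials in $\F[X]^{<n}$, so by listing the $n$ coefficients of every polynomial entry we identify $c_i$ with a column vector $\tilde c_i \in \F^{nm}$. This defines an $nm \times t$ matrix $\tilde M$ over $\F$. The key observation is that taking $\F$-linear combinations of polynomials is the same as taking $\F$-linear combinations of their coefficient sequences, so for any $\lambda_1,\ldots,\lambda_t \in \F$ and any $z \in \mathbf{C}(M)$, the equality $z = \sum_i \lambda_i c_i$ holds in $\F[X]^m$ if and only if $\tilde z = \sum_i \lambda_i \tilde c_i$ holds in $\F^{nm}$. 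In particular, a subset $S \subseteq \mathbf{C}(M)$ is an $\F$-spanning set of $\mathbf{C}(M)$ if and only if the corresponding flattened columns $\tilde S$ span the column space of $\tilde M$ over $\F$.

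The second step is to compute a minimum weight column basis of $\tilde M$ with respect to the weight function $w$ (inherited from $\mathbf{C}(M)$ via the natural bijection). Using a standard weighted greedy Gaussian elimination procedure, such as the one from~\cite{BodlaenderCKN13} which I am permitted to cite here, this can be carried out in $\cO(t \cdot r^{\omega-1})$ field operations, where $r \le nm$ is the rank of $\tilde M$. Since $r \le nm$, the running time is bounded by $\cO(t(nm)^{\omega-1})$, and the flattening itself costs only $\cO(nmt)$ extra operations, which is absorbed. Let $S$ be the subset of $\mathbf{C}(M)$ corresponding to the basis columns. Then $|S| \le nm$ and $S$ is an $\F$-spanning set by the observation above.

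It remains to verify the ``nice'' property: for every $z \in \mathbf{C}(M)$, if $z = \sum_{v \in S} \lambda_v v$ and $\lambda_v \neq 0$ for some $v \in S$, then $w(v) \le w(z)$. Since $S$ is a basis of the column space of $\tilde M$ over $\F$, this representation is unique. The standard matroid exchange argument applies: if some $v \in S$ had $\lambda_v \neq 0$ and $w(v) > w(z)$, then $\{z\} \cup \{v' \in S : \lambda_{v'} \neq 0\}$ is a circuit of the column matroid of $\tilde M$ containing both $z$ and $v$, so $(S \setminus \{v\}) \cup \{z\}$ would also be a basis, of strictly smaller total weight, contradicting minimality. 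The case $z \in S$ is trivial since then the unique representation is $z = 1 \cdot z$ and $w(z) \le w(z)$ holds.

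The main obstacle is really just pinning down the right book-keeping for the flattening step and citing the correct fast weighted basis routine; no new ideas beyond the standard matroid greedy analysis are required. The only subtlety worth double-checking is that our notion of ``$\F$-linear combination of columns of $M$'' is faithfully mirrored by linear combinations of their coefficient vectors, which is immediate since the polynomial degrees are strictly less than $n$ and each polynomial is uniquely specified by its $n$ coefficients.
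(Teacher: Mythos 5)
Your proposal is correct and matches the paper's own proof essentially step for step: both flatten each column of $M$ into a coefficient vector in $\F^{nm}$, compute a minimum weight column basis of the resulting $nm\times t$ matrix over $\F$ via the fast weighted Gaussian elimination of the cited reference in $\cO(t(nm)^{\omega-1})$ field operations, and establish the nice-spanning property by the same exchange argument. No meaningful differences to report.
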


\begin{proof} 
The main idea is to do a ``gaussian elimination''  in $M$, but only over the subfield $\F$ of $\F(X)$.
    Let $C_i$ be a column of the matrix $M$. 
    It is a vector of length $m$ over $\F[X]^{<n}$ and it's entries are polynomials $P_{ji}(X)$, where $j \in \{1, \ldots ,m\}$.
    Observe that $P_{ji}(X)$ is a polynomial of degree $n-1$ with coefficients from $\F$. 
    Let $v_{ji}$ denote the vector of length $n$  corresponding to the polynomial $P_{ji}(X)$. 
    Consider the column vector $v_i$ formed by concatenating each $v_{ji}$ in order from $j=1$ to $m$. 
    That is, $v_i=(v_{1i},\ldots,v_{mi})^T$.
    This vector has length $nm$ and has entries from $\F$.
    Let $N$ be the matrix where columns correspond to column vectors $v_i$.
    Note that $N$ is a matrix over $\F$ of dimension $ nm \times t$ and
    the time taken to compute $N$ is $\cO(tnm)$.
    For each column $v_i$ of $N$ we define it's weight to be $w(C_i)$.
    We now do a gaussian elimination in $N$ over $\F$ and compute a minimum weight set of column vectors $S'$, which spans $N$.
    Observe that $|S'| \leq nm$ and time taken to compute $S'$ is $\cO(t (nm)^{\omega -1 })$~\cite{BodlaenderCKN13}. 
    Let $S$ be the set of column vectors in $M$ corresponding to  the column vectors in $S'$.
    We return $S$ as a nice spanning set of column vectors in $M$. 

    Now we show the correctness of the above algorithm.
    We first show that $S$ is a spanning set of $M$.
    Let $v_1, \ldots, v_{|S|}$ be the set of vectors in $S$ and let $v_d$ be some column vector in $N$. 
    Then $v_d = \sum_{i=1}^{|S|} a_i v_i$ where $a_i \in \F$.
    In particular for any $j \in \{1, \ldots, m\}$ we have $v_{jd} = \sum_{i=1}^{|S|} a_i v_{ji}$.
    Let $C_1, \ldots, C_{|S|}$ be the column vectors corresponding to $v_1, \ldots, v_{|S|}$
    and let $C_d$ be the column vector corresponding to $v_d$.
    We claim that $C_d = \sum_{i=1}^{|S|} a_i C_i$.
    Consider the $j$-the entry of the column vector $C$ and of $C_1, \ldots, C_{|S|}$. Towards our claim 
    we need to show that  $P_{jd}(X) = \sum_{i=1}^{|S|} a_i P_{ji}(X)$.
    But since $v_{dj}$ and $\{v_{ij}~|~j\in \{1,\ldots,m\}\}$ are the collection of vectors corresponding to $P_{jd}(X)$ and 
    $\{P_{ji}(X)~|~j\in \{1,\ldots,m\}\}$, the claim follows.

Next we show that $S$ is indeed a nice spanning set. Since $S$ is a spanning set of $M$ we have that any column 
$C_d = \sum_{C_i \in S} \lambda_i C_i$, $\lambda_i \in \F$.
    Let $C_j \in S$ be such that $\lambda_j \neq 0$ and $w(C_j) > w(C_d)$.
    Let $v_d$ and $v_j$ be the vectors corresponding to $C_d$ and $C_j$ respectively.
    We have that $v_d = \sum_{v_i \in S} \lambda_i v_i$, which implies 
    $v_j = \lambda_j^{-1}v_d - \sum_{v_i \in S, v_i \neq v_j} \lambda_j^{-1} \lambda_i v_i $.
    But this implies that $S^* = (S \setminus \{v_j\}) \cup \{v_d\}$ is a spanning set of $N$,
    and $w(S^*) < w(S)$, which is  a contradiction.
    Thus we have that for every column vector $C \in M$
    if $C = \sum_{C_i \in S} \lambda_i C_i$ and $ 0 \neq \lambda_i \in \F$, then $w(C_i) \leq w(C)$. This completes the proof.
\end{proof}

  %
  %
  %

The main theorem of this section is as follows. 

%
%
%

\begin{theorem}[Theorem~\ref{main:thm:repsetlovaszrandomized}, restated]
\label{thm:repsetlovaszrandomized}
Let \mat{}   be a linear matroid  of rank $n$ and let $ \cS = \{S_1,\ldots, S_t\}$ be a $p$-family of independent sets. 
Let $A$ be a $n\times |E|$ matrix representing $M$ over a field $ \mathbb{F}$, where $\F=\F_{p^\ell}$ 
or  $\F$ is $\mathbb Q$. Then there  are deterministic algorithms  computing  
\rep{S}{q}  as follows. 
\begin{enumerate}
\setlength{\itemsep}{-2pt}
\item A family $\widehat{\cal S}$ of size \bnoml{p+q}{p} in  $\cO\left({p+q \choose p}^2 t p^3n^2 + t {p+q \choose q} ^{\omega} np\right)+(n+|E|)^{\cO(1)}$, operations over $ \mathbb{F}$. 
\item A family   $\widehat{\cal S}$ of size  $ np {p+q \choose p}$ in  $\cO\left({p+q \choose p} t p^3n^2 
+ t {p+q \choose q}^{\omega-1} (pn)^{\omega-1} \right)+(n+|E|)^{\cO(1)}$ operations over $ \mathbb{F}$.  
\end{enumerate}
 \end{theorem}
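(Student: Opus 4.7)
The strategy is to reduce to the already-known bounded-rank case by first truncating the matroid. Apply Theorem~\ref{thm: truncation} to the representation $A$ with parameter $k=p+q$. This produces, in $(n+|E|)^{\cO(1)}$ field operations, a $(p+q)\times |E|$ matrix $A'$ whose entries are polynomials in $\F[X]$ of degree at most $n-1$, and which represents the $(p+q)$-truncation $M'$ of $M$ over $\F(X)$. Since every $S_i\in\mathcal{S}$ has size $p\le p+q$, it is independent in $M$ iff it is independent in $M'$, and for every $Y\subseteq E$ of size at most $q$ and every $X\in\mathcal{S}$ disjoint from $Y$, the set $X\cup Y$ (of size at most $p+q$) is independent in $M$ iff it is independent in $M'$. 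Consequently, any $q$-representative family of $\mathcal{S}$ in $M'$ is also $q$-representative in $M$, so it suffices to compute the former.

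Next I would invoke the deterministic algorithm of Fomin, Lokshtanov and Saurabh~\cite[Theorem~3.1]{FominLS14}, which already handles the case when the matroid has rank exactly $p+q$: one variant returns a family of the optimal size $\binom{p+q}{p}$, and a faster variant returns a slightly larger family of size $np\binom{p+q}{p}$. Applied to $M'$, these two variants account for parts (1) and (2) of the theorem, respectively. The only subtlety is that the entries of $A'$ live in $\F[X]$, so naively each arithmetic step would be an operation in $\F(X)$ and would carry a polynomial-size overhead, giving bounds that depend on symbolic-arithmetic costs rather than on plain $\F$-operations.

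The main technical obstacle, therefore, is to execute the $q$-representative algorithm from \cite{FominLS14} in $\F$-operations rather than $\F(X)$-operations. The key routine inside that algorithm is the repeated computation of a minimum-weight basis (or spanning set) of a collection of columns in a matrix built from $A'$. Here I would substitute Lemma~\ref{lemma:GE in Fx}: given a matrix with $m=p+q$ rows and $t$ columns whose entries are polynomials in $\F[X]^{<n}$, unfold each column of length $m$ into a single column of length $nm$ over $\F$ by concatenating the coefficient vectors, and run weighted Gaussian elimination over $\F$. This returns a nice spanning set of size at most $nm=n(p+q)$ in $\cO\!\left(t(n(p+q))^{\omega-1}\right)$ $\F$-operations, which is exactly the cost charged in the statement. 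Because the nice-spanning-set property preserves minimum-weight dependencies, it plays the same structural role inside the \cite{FominLS14} procedure as an exact minimum-weight basis would.

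Finally, I would carry the multiplicative factors through the analysis of \cite{FominLS14}. That analysis has two parts: a ``representative family via Lov\'asz-style wedge products'' step, costing $\cO\!\left(\binom{p+q}{p}tp^\omega\right)$ or $\cO\!\left(t\binom{p+q}{p}^{\omega-1}\right)$ matroid operations, and the minimum-weight basis computations above. Each matroid operation on $M'$ now takes $\cO(p^2n^2)$ $\F$-operations by Lemma~\ref{lemma:ind} (or by the unfolding trick applied to constant-size submatrices of $A'$), which, multiplied with the combinatorial counts, yields exactly the $\cO\!\left(\binom{p+q}{p}^{\!2}tp^3n^2+t\binom{p+q}{q}^{\omega}np\right)$ and $\cO\!\left(\binom{p+q}{p}tp^3n^2+t\binom{p+q}{q}^{\omega-1}(pn)^{\omega-1}\right)$ bounds, plus the $(n+|E|)^{\cO(1)}$ term from the truncation step. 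I expect the bookkeeping of the two running-time bounds to be the most error-prone part, since it requires careful matching of the two \cite{FominLS14} variants with the two Gaussian-elimination variants over $\F$; the conceptual content, however, is essentially already assembled from Theorem~\ref{thm: truncation}, Lemma~\ref{lemma:GE in Fx}, and \cite[Theorem~3.1]{FominLS14}.
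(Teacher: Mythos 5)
Your proposal is correct and follows essentially the same route as the paper: compute the $(p+q)$-truncation deterministically via Theorem~\ref{thm: truncation}, then run the construction of \cite[Theorem~3.1]{FominLS14} on the truncated representation (building the matrix of $p\times p$ minors with entries in $\F[X]^{<pn+1}$), and extract the representative family by computing a column basis with Lemma~\ref{lemma:ind} for the size-$\binom{p+q}{p}$ variant or a spanning set with Lemma~\ref{lemma:GE in Fx} for the size-$np\binom{p+q}{p}$ variant. The only difference is presentational: the paper makes the intermediate matrix $H_{\mathcal{S}}$ and the per-determinant cost $\cO(p^3n^2)$ explicit rather than treating \cite{FominLS14} as a black box, but the underlying argument is the same.
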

\begin{proof}
Let $p+q=k$ and $|E|=m$. We start by finding $k$-truncation of $A$, say $A_k$, over $\F[X]\subseteq \F(X)$ using Theorem~\ref{thm: truncation}.  We can find $A_k$ with at most $(n+m)^{\cO(1)}$  operations over $ \F$.  Given the matrix $A_k$ we follow the proof of 
\cite[Theorem 3.1]{FominLS14}. 
For a set $S\in \cal S$ and $I \in {[k]\choose p}$, we define   $s[I]=\det(A_k[I,S])$.  
We also define 
\[\vec{s}_i=\left(s_i[I]\right)_{I\in {[k]\choose p} }.\]
Thus the entries of the vector $\vec{s}_i$ are the values of $\det(A_k[I,S_i])$, where $I$ runs through all the $p$ sized subsets of rows of $A_k$.   
Let $H_\mathcal{S}=(\vec{s}_1,\ldots, \vec{s}_t)$ be the ${k \choose p} \times t$ matrix obtained by taking $\vec{s}_i$ as columns.  
Observe that each entry in $A_k$ is in $\F[X]^{<n}$.  Thus, the determinant polynomial corresponding to any $p\times p$ submatrix of $A_k$ has degree at most $pn$. It is well known that we can find determinant of a $p\times p$ matrix over   $\F[X]^{<n}$ in time $\cO(p^3n^2)$~\cite{MuldersS03}.  
Thus, we can obtain  $H_\mathcal{S}$ in time $\cO(t{p+q \choose p}p^3n^2)$. 

Let $W$ be a spanning set of columns for $\mathbf{C}(H_\mathcal{S})$.   We define 
$\widehat{W}=\{S_\alpha~|~\vec{s}_\alpha \in W \}$ as the corresponding  subfamily of  $\cal S$ . 
The proof of ~\cite[Theorem 3.1]{FominLS14} implies that if $W$ is a spanning set of columns for $\mathbf{C}(H_\mathcal{S})$ then the corresponding $\widehat{W}$  is the required $q$-representative family for $\cal S$. That is, $\widehat{W}\subseteq_{rep}^q {\cal S}$.  We get the desired running time by either using Lemma~\ref{lemma:ind} to compute a basis of size ${p+q\choose p}$ for $H_\mathcal{S}$  or by using Lemma~\ref{lemma:GE in Fx} to compute a spanning set of size  $np{p+q\choose p}$ of $\mathbf{C}(H_\mathcal{S})$.  This completes the proof. 
\end{proof}

In fact one can prove Theorem~\ref{thm:repsetlovaszrandomized} for a ``weighted notion of representative family'' (see~\cite{FominLS14} for more details), for which we would need to compute nice spanning  set.  It will appear in an extended version of the paper. 

\subsection{Applications}

 Marx~\cite{Marx09} gave algorithms for several problems  based on matroid optimization. The main theorem in his work is 
 Theorem 1.1~\cite{Marx09} on which most applications of~\cite{Marx09} are based. The proof of the theorem uses an algorithm to find representative sets as a black box. Applying our algorithm (Theorem~\ref{thm:repsetlovaszrandomized} of this paper) instead gives a 
deterministic  version of 
 Theorem 1.1 of \cite{Marx09}. 

\begin{proposition}
\label{prop:marxmainresult}
Let \mat{} be a linear matroid where the ground set is partitioned into blocks of size $\ell$. Given a linear representation $A_M$ of $M$, it can be determined in $\cO(2^{\omega k\ell} ||A_M||^{\cO(1)})$  time whether there is an independent set that is the union of $k$ blocks. ($||A_M||$ denotes 
the length of $A_M$ in the input.)
\end{proposition}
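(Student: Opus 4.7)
The plan is to follow Marx's original dynamic programming scheme from \cite{Marx09} verbatim, and simply substitute our deterministic representative family algorithm (Theorem~\ref{thm:repsetlovaszrandomized}) for the randomized one used as a black box there. Recall Marx's approach: let $B_1, B_2, \ldots, B_r$ denote the blocks of the partition. We build iteratively a family $\mathcal{S}_i$ of independent sets of size $i\ell$, each of which is the union of $i$ blocks. Start with $\mathcal{S}_0 = \{\emptyset\}$. Given $\mathcal{S}_i$, form the candidate family $\mathcal{S}_{i+1}^{\text{ext}} = \{S \cup B_j : S \in \mathcal{S}_i,\ j \in [r],\ S \cap B_j = \emptyset,\ S \cup B_j \in \mathcal{I}\}$. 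Then set $\mathcal{S}_{i+1}$ to be a $q$-representative subfamily of $\mathcal{S}_{i+1}^{\text{ext}}$ with $p = (i+1)\ell$ and $q = (k-i-1)\ell$. Finally, the answer is YES if and only if $\mathcal{S}_k \neq \emptyset$.

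Correctness of this scheme is exactly as in \cite{Marx09}: the representative property guarantees that if there is \emph{any} way of extending a set in $\mathcal{S}_{i+1}^{\text{ext}}$ to an independent union of $k$ blocks by adding $k-i-1$ further blocks, then such an extension exists starting from some set in the representative subfamily $\mathcal{S}_{i+1}$. An inductive application shows that a full solution exists if and only if it is witnessed in $\mathcal{S}_k$. Nothing in this argument uses how the representative family is computed, so it is inherited directly from \cite{Marx09}.

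For the running time, I will plug in the first bound of Theorem~\ref{thm:repsetlovaszrandomized}. At iteration $i+1$ we have $p+q = k\ell$, so the representative family has size $\binom{k\ell}{(i+1)\ell} \leq 2^{k\ell}$, and the input family $\mathcal{S}_{i+1}^{\text{ext}}$ has size at most $r \cdot 2^{k\ell} \leq \|A_M\| \cdot 2^{k\ell}$. The rank $n$ and every other parameter are polynomial in $\|A_M\|$. Consequently, the $t \binom{p+q}{q}^{\omega} np$ term in Theorem~\ref{thm:repsetlovaszrandomized} dominates and evaluates to $2^{\omega k\ell} \cdot \|A_M\|^{\mathcal{O}(1)}$ per iteration. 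Summing over the $k$ iterations preserves this bound. The one-off preprocessing cost $(n + |E|)^{\mathcal{O}(1)}$ to compute the $k\ell$-truncation of $M$ (required inside Theorem~\ref{thm:repsetlovaszrandomized}) is absorbed into $\|A_M\|^{\mathcal{O}(1)}$.

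The only point requiring a bit of care, and what I expect to be the main (minor) obstacle, is the bookkeeping with the parameters of Theorem~\ref{thm:repsetlovaszrandomized}: one must verify that using the first bullet (rather than the second) indeed yields the $2^{\omega k\ell}$ dependence, and that the deterministic truncation step is invoked only once per iteration on the matroid of rank $n \leq \|A_M\|$ rather than on a growing structure. Once this is checked, the proposition follows immediately by combining the dynamic programming of \cite{Marx09} with Theorem~\ref{thm:repsetlovaszrandomized}, thereby derandomizing Marx's algorithm.
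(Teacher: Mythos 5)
Your overall strategy---running Marx's block-by-block dynamic programming and replacing the randomized representative-set computation by Theorem~\ref{thm:repsetlovaszrandomized}---is exactly the route the paper takes (the paper gives no more detail than the citation to Marx's Theorem~1.1), and your correctness argument is fine. The genuine gap is in the running-time bookkeeping, precisely at the point you flagged and deferred: the \emph{first} bullet of Theorem~\ref{thm:repsetlovaszrandomized} does not yield $2^{\omega k\ell}\|A_M\|^{\cO(1)}$. At a middle iteration both $\binom{k\ell}{i\ell}$ and $\binom{k\ell}{(i+1)\ell}$ are of order $2^{k\ell}$ up to polynomial factors, and your candidate family has size $t\le r\binom{k\ell}{i\ell}=2^{k\ell}\|A_M\|^{\cO(1)}$. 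Plugging this into the first bullet gives $\binom{p+q}{p}^2\,t\,p^3n^2=2^{3k\ell}\|A_M\|^{\cO(1)}$ for the first term and $t\binom{p+q}{q}^{\omega}np=2^{(1+\omega)k\ell}\|A_M\|^{\cO(1)}$ for the second; since $\omega<2.373<3$, both exceed the claimed $2^{\omega k\ell}$ bound by a factor exponential in $k\ell$. So the step ``the $t\binom{p+q}{q}^{\omega}np$ term dominates and evaluates to $2^{\omega k\ell}\|A_M\|^{\cO(1)}$'' is simply false, because $t$ itself carries a $2^{k\ell}$ factor.

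The fix is to use the \emph{second} bullet of Theorem~\ref{thm:repsetlovaszrandomized}. It returns a slightly larger family, of size $np\binom{p+q}{p}$, but correctness is unaffected (a superset-sized representative family is still representative), and the candidate family at the next round is still $t\le r\cdot n p\binom{k\ell}{i\ell}=2^{k\ell}\|A_M\|^{\cO(1)}$. The per-iteration cost then becomes $\cO\bigl(\binom{p+q}{p}tp^3n^2 + t\binom{p+q}{q}^{\omega-1}(pn)^{\omega-1}\bigr)=2^{2k\ell}\|A_M\|^{\cO(1)}+2^{\omega k\ell}\|A_M\|^{\cO(1)}=2^{\omega k\ell}\|A_M\|^{\cO(1)}$, using $\omega\ge 2$; summing over the $k$ iterations and adding the one-time $(n+|E|)^{\cO(1)}$ truncation cost gives the stated bound. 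With that single substitution your argument goes through; as written, the complexity claim does not follow from the bound you invoke.
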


Finally, we mention another application from~\cite{Marx09} which we believe could be useful to obtain single exponential time 
parameterized and exact algorithms.  

 \medskip
\begin{center} 
\fbox{\begin{minipage}{0.96\textwidth}
\noindent{\sc  $\ell$-Matroid Intersection} \hfill {\bf Parameter:} $k$ \\
\noindent {\bf Input}: Let $M_1=(E,{\cal I}_1),\dots, M_1=(E,{\cal I}_\ell) $ be matroids on the same ground set $E$  given \\
 \noindent{\phantom{{\em Input}:}} by their representations $A_{M_1},\ldots, A_{M_{\ell}}$ over the same field $\mathbb{F}$ and a positive integer $k$.\\
\noindent{\bf Question}: Does there exist $k$ element set that is independent in each $M_i$ ($X\in {\cal I}_1\cap \ldots  \cap {\cal I}_\ell$)?  
\end{minipage}}
\end{center}
\medskip
 
 Using Theorem 1.1 of ~\cite{Marx09}, Marx~\cite{Marx09} gave a randomized algorithm for {\sc  $\ell$-Matroid Intersection}. By using 
Proposition~\ref{prop:marxmainresult} instead we get  the following result. 
\begin{proposition}
{\sc  $\ell$-Matroid Intersection} can be solved in $\cO(2^{\omega k\ell} ||A_M||^{\cO(1)})$  time. 
\end{proposition}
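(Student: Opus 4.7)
The plan is a straightforward reduction from $\ell$-Matroid Intersection to the block-union problem solved by Proposition~\ref{prop:marxmainresult}, which is precisely the deterministic counterpart of Marx's Theorem~1.1 obtained by plugging our Theorem~\ref{thm:repsetlovaszrandomized} in place of the randomized representative-family routine used in~\cite{Marx09}.

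First, I would build a lifted linear matroid $N$ from the input. Take $\ell$ disjoint copies $E_1,\ldots,E_\ell$ of $E$, define $M_i'$ on $E_i$ as an isomorphic copy of $M_i$ represented by a relabeled copy of $A_{M_i}$, and set $N = M_1' \oplus \cdots \oplus M_\ell'$. The ground set $E^{*} = E_1 \cup \cdots \cup E_\ell$ then partitions naturally into $|E|$ blocks of size $\ell$: the block $B_e$ associated to $e \in E$ consists of the $\ell$ images of $e$, one in each $E_i$. A representation $A_N$ of $N$ over $\mathbb{F}$ is obtained simply by arranging $A_{M_1},\ldots,A_{M_\ell}$ block-diagonally, which takes polynomial time and yields $\|A_N\|$ polynomial in the combined size of the inputs. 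By the defining property of a direct sum, a union $\bigcup_{j=1}^{k} B_{e_j}$ of $k$ distinct blocks is independent in $N$ precisely when, for every $i \in \{1,\ldots,\ell\}$, the projection $\{e_1,\ldots,e_k\} \subseteq E_i$ is independent in $M_i'$, i.e.\ precisely when $\{e_1,\ldots,e_k\}$ is a common independent set of $M_1,\ldots,M_\ell$. Hence the given $\ell$-Matroid Intersection instance is a yes-instance if and only if $N$ admits an independent set that is the union of $k$ blocks.

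Finally, I would invoke Proposition~\ref{prop:marxmainresult} on $(N, A_N)$ with block size $\ell$ and block-count parameter $k$; this runs in $\cO(2^{\omega k\ell} \|A_N\|^{\cO(1)})$ time and returns the correct answer. Since $\|A_N\|$ is polynomial in the combined input size $\|A_M\|$, the overall running time is $\cO(2^{\omega k\ell} \|A_M\|^{\cO(1)})$, as claimed. There is no genuine technical obstacle here: the reduction is already implicit in~\cite{Marx09}, and the whole point of the statement is that Proposition~\ref{prop:marxmainresult} is now deterministic thanks to Theorem~\ref{thm:repsetlovaszrandomized}. The only routine sanity check is that the block-diagonal construction keeps the representation size polynomial in the input, which is immediate from the construction of the direct sum.
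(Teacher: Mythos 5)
Your proposal is correct and follows essentially the same route as the paper, which obtains the result by running Marx's reduction for {\sc $\ell$-Matroid Intersection} (the direct-sum construction with blocks formed by the $\ell$ copies of each element) and replacing the randomized representative-family black box by the deterministic Proposition~\ref{prop:marxmainresult}. You merely spell out explicitly the block-diagonal direct-sum step that the paper leaves implicit by citing~\cite{Marx09}.
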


\section{Conclusions}\label{sec:conc}
In this paper we give the first deterministic algorithm to compute a $k$-truncation of linear matroids over 
any finite field or the field $\mathbb Q$. Our algorithms were based on the properties of the Wronskian determinant 
and the $\alpha$-folded Wronskian determinant, where $\alpha$ is an element of polynomially large order in the field $\F$.
We believe that our results on the $\alpha$-folded Wronskian will be useful in other contexts.
We conclude with a few related open problems.
\begin{itemize}
\item Our algorithm produces a representation of the truncation over the ring $\F[X]$ when the input field is $\F$.
	  However when $\F$ is large enough, then one can obtain a randomized representation of the truncation over $\F$ itself.
	  It is an interesting problem to compute the representation over $\F$ deterministically.

\item In many cases, when the field $\F$ is nice enough, we can represent the truncation over a finite degree extension 
          of the input field $\F$. It would be interesting to extend this to all finite field.

\item Finally, finding a deterministic representation of Transversal matroids and Gammoids, remain an interesting open problem
         in Matroid Theory. A solution to this problem will lead to a deterministic kernelization algorithm for several improtant graph problems in Parameterized Complexity \cite{KratschW12, kratsch2012compression}.

\end{itemize}

\newpage
\bibliographystyle{siam}
\bibliography{det-trunc.bib}

\end{document}